\newcommand{\hmax}{{H_{\textnormal{max}}}}
\newcommand{\psecr}{{p_{\textnormal{secr}}}}
\newcommand{\hmins}{{H^{*}_{\textnormal{min}}}}
\newcommand{\hmaxs}{{H^{*}_{\textnormal{max}}}}
\newcommand{\pg}{{P_{\textnormal{guess}}}}
\newcommand{\pgs}{{P^{*}_{\textnormal{guess}}}}
\newcommand{\pgq}{{P^{\, \textnormal{Q}}_{\textnormal{guess}}}}
\newcommand{\pgsq}{{P}^{\, \textnormal{Q *}}_{\textnormal{guess}}}
\newcommand{\pgc}{{P^{\, \textnormal{C}}_{\textnormal{guess}}}}
\newcommand{\m}{{\mathcal{M}}}
\newcommand{\ve}{{\varepsilon}}
\newcommand{\vecr}{\vec{r}}
\newcommand{\id}{\mathbbm{1}}
\DeclareMathOperator{\lmin}{\lambda_{min}}
\DeclareMathOperator{\lmax}{\lambda_{max}}
\DeclareMathOperator*{\argmax}{arg\,max}
\pgfplotsset{compat=1.18}
\DeclareMathOperator{\diag}{\textnormal{diag}}
\newtheorem{theorem}{Theorem}
\newtheorem{lemma}{Lemma}
\newtheorem{corollary}{Corollary}
\newtheorem{sublemma}{Sublemma}
\begin{document}

\title{Maximal intrinsic randomness of noisy quantum measurements}

\author{Fionnuala Curran}
\affiliation{ICFO-Institut de Ci\`encies Fot\`oniques, The Barcelona Institute of Science and Technology,
Av. Carl Friedrich Gauss 3, 08860 Castelldefels (Barcelona), Spain} 

\author{Morteza Moradi}
\affiliation{Institute of Informatics, Faculty of Mathematics, Informatics and Mechanics,
University of Warsaw, Banacha 2, 02-097 Warsaw, Poland}

\author{Gabriel Senno}
\affiliation{Quside Technologies S.L., C/Esteve Terradas 1, 08860 Castelldefels,
Barcelona, Spain}

\author{Magdalena Stobinska}
\affiliation{Institute of Informatics, Faculty of Mathematics, Informatics and Mechanics,
University of Warsaw, Banacha 2, 02-097 Warsaw, Poland}

\author{Antonio Acín}
\affiliation{ICFO-Institut de Ci\`encies Fot\`oniques, The Barcelona Institute of Science and Technology,
Av. Carl Friedrich Gauss 3, 08860 Castelldefels (Barcelona), Spain}
\affiliation{ICREA - Instituci\'o Catalana de Recerca i Estudis Avan\c{c}ats, 08010 Barcelona, Spain}

\date{\today}

\begin{abstract}
Quantum physics exhibits an intrinsic and private form of randomness with no classical counterpart. Any setup for quantum randomness generation involves measurements acting on quantum states. In this work, we consider the following question: Given a quantum measurement, how much randomness can be generated from it? In real life, measurements are noisy and thus contain an additional, extrinsic form of randomness due to ignorance. This extrinsic randomness is not private since, in an adversarial model, it takes the form of quantum side information held by an eavesdropper who can use it to predict the measurement outcomes. Randomness of measurements is then quantified by the guessing probability of this eavesdropper, when minimized over all possible input states. This optimization is in general hard to compute, but we solve it here for any two-outcome qubit measurement and for projective measurements in arbitrary dimension mixed with white noise. We also construct, for a given measured probability distribution, different realizations with (i) a noisy state and noiseless measurement (ii) a noiseless state and noisy measurement and (iii) a noisy state \emph{and} measurement, and we show that the latter gives an eavesdropper significantly higher guessing power.

\end{abstract}

\maketitle

\textit{Introduction}.---
Randomness is a useful peculiarity of quantum physics. Say we have a quantum system represented by a pure state $\ket{\psi}$, and we measure it in a basis to which it is unbiased. Our outcomes would not just be statistically random (independent and identically distributed with uniform probability); they would also be wholly unpredictable, even to a potential eavesdropper who tried to guess them. This private or \emph{intrinsic} quality of quantum randomness is exploited in the design of quantum random number generators \cite{Mannalatha_2023, Herrero_Collantes_2017} and quantum cryptography protocols \cite{RevModPhys.94.025008}.

Quantum measurements can, however, be mixed, so they possess an additional, \emph{extrinsic} form of randomness. A mixed (or non-extremal) measurement can be seen as a probabilistic mixture of other measurements \cite{D_Ariano_2005} or as the incomplete marginal of a projective measurement carried out on a larger system, as captured by Naimark's dilation theorem \cite{Peres_1990}. In fact, real-life measurements, plagued by noise, are necessarily mixed. Moreover, from a fundamental perspective, pure projective measurements are impossible to implement with finite thermodynamic resources \cite{Guryanova_2020}.

While we usually think of noise in our measurement as inconvenient but benign, in a cryptographic scenario, to ensure the security of our outcomes, we must ascribe all of our ignorance to \emph{side information} held by an adversary. For convenience, we call our randomness-generating experimentalist Alice and her hypothetical eavesdropper Eve. Eve's side information may be classical, when she knows which measurement from a probabilistic decomposition is \emph{actually} performed by Alice in each round, or quantum, when she shares entanglement with Alice's devices. If Alice performs a measurement and records the outcome $x$ from a register $X$, Eve's knowledge is intuitively quantified by the probability that she correctly guesses $x$. This guessing probability is directly related to the min-entropy of $X$ conditioned on Eve's information \cite{Konig_2009}, which in turn lower bounds the number of private and statistically random bits that can be extracted from $X$ \cite{renner2006securityquantumkeydistribution, Konig_2011}. 

In this work, we address the following question, of relevance both from a fundamental and an applied point of view: Given a quantum measurement $\m$, what is the maximal amount of intrinsic randomness one can generate from it? For that, we should determine the optimal state on which Alice should apply the given measurement. We solve this question and therefore derive analytic solutions for the maximal intrinsic randomness of two classes of realistic noisy measurements: any qubit measurement with two outcomes and isotropic noisy projective measurements  in any dimension $d$, resulting from mixing a rank-one projective measurement with isotropic noise. Our results represent a step forward in quantifying randomness in quantum measurements and are complementary to recent works that characterized the intrinsic randomness of  quantum states~\cite{Meng_2024,anco2024securerandomnessquantumstate}.

\textit{Setting the problem}.---
Any randomness generation process involves measurements acting on states. We work throughout in a Hilbert space of finite dimension $d$. A quantum state $\rho$ is a positive semidefinite operator with trace one, while a quantum measurement is defined by a positive operator-valued measure (POVM), a set of positive-semidefinite operators $\m= \{M_x\}_{x}$ that sum to identity, $\sum_{x} M_x=\id$. If we measure $\rho$ using $\m$, the probability of obtaining the outcome $x$ is given by the Born rule, $p(x)= \tr \left( \rho M_x \right)$.

A projective measurement $\Pi= \{\Pi_{x}\}_x$ is a special class of POVM satisfying $\Pi_x \Pi_y = \delta_{xy} \Pi_x$.  
A general non-projective POVM $\m_S$ (in what follows, we sometimes label Alice's system by $S$ for clarity) is realized by a \emph{generalized Naimark extension}. Concretely, a global projective measurement $\Pi_{SA}$ is applied to Alice's state $\rho$ and an auxiliary state $\sigma_A$ inside Alice's measuring device. For such an extension to be valid, it must satisfy
\begin{equation}
    \tr_A \Big( \Pi_{x, SA} \, \id_{S} \otimes \sigma_{A} \Big) = M_{x, S} \;\; \text{for all} \; x\,.
\end{equation}
For a given measurement, there exist many possible extensions. We use an adversarial model introduced in \cite{frauchiger_2013} and generalized in \cite{Senno_2023} (see also \cite{dai2023intrinsic}), wherein the eavesdropper Eve can, in principle, share entanglement both with Alice's state $\rho$ \emph{and} with her POVM $\m_S$. However, since here we are interested in maximizing the randomness that can be generated from a given measurement, Alice's optimal state can be taken pure without loss of generality, $\rho=\ketbra{ \phi}{\phi}_{S}$, and we need only consider Eve's entanglement with $\m$. This quantum side information is encapsulated by a purification $\ket{\varphi}_{AE}$ of the auxiliary state $\sigma_A$ in Alice's device. To guess Alice's outcome, Eve applies a measurement $\{M_{x, E}\}_{x}$ to her particle $E$. Eve's guessing probability, conditioned on her quantum side information, is defined by the optimization of all the different realizations compatible with Alice's measurement,
\begin{equation}\label{eqn: pguess_quantum_pure}
\begin{aligned}
&\pgq \big( \ket{\phi}_S,\, \m_S \big)= \; 
\\
&  \max_{\{\Pi_{x, SA}\}_{x}, \, \ket{\varphi}_{AE}, \, \{M_{x, E}\}_{x}}
& & \!\!\!\!\! \sum_{x}\bra{\Phi}\Pi_{x, SA}\otimes M_{x, E}\ket{\Phi}_{SAE} 
\\
& \qquad \quad  \text{subject to} 
& & \!\!\!\!\! \!\!\!\!\! \tr_A \Big( \Pi_{x, SA} \big( \id_S\otimes \tr_{E} \left( \ketbra{ \varphi}{\varphi}_{AE} \right) \big) \Big)  
\\
& &&  \qquad \qquad  =M_{x, S} \; \text{ for all } \; x\,,
\end{aligned}
\end{equation}
where $\ket\Phi_{SAE}=\ket\phi_S\ket\varphi_{AE}$.

Happily, it is known \cite{Senno_2023} that when Alice chooses a pure state, the quantum guessing probability \eqref{eqn: pguess_quantum_pure} is equal to the guessing probability of a `classical' Eve, who is correlated to the measurements only through classical side information. More precisely, any non-extremal POVM can be decomposed as a convex combination of other, possibly extremal, measurements,
\begin{equation}
    \sum_{j} p(j) N_{x, j} = M_x\,, \qquad \sum_{x} N_{x, j} = \id\,,
\end{equation}
where $p(j)$ is the mixing probability distribution and $\{\mathcal{N}_j\}$ a set of POVMs. In the adversarial scenario, Eve knows which POVM $\mathcal{N}_{j}=\{N_{x, j}\}_{x}$ is being applied in each round (notice that we no longer use the subscript $S$ for Alice's system). It is convenient to define the subnormalized POVMs $\mathcal{K}_{j}= p(j) \, \mathcal{N}_{j}$, such that the corresponding (classical) guessing probability is given by the semidefinite programming problem 
\begin{equation}\label{eqn: pguess_main}
\begin{aligned}
&\pg \big( \ket{\phi},\, \m \big)= \; 
\\
 &    \qquad \max_{\{ \mathcal{K}_j\} }
& & \!\!\!\!\! \sum_{j}  \bra{\phi} K_{j,j}\ket{\phi}
\\
&  \quad \; \text{subject to}
& &  \!\!\!\!\! K_{x, j} \geq 0  \; \text{ for all } \; x, j
\\
& 
& &  \!\!\!\!\! \sum_{x}  d K_{x, j} =  \id \sum_{x}  \tr K_{x, j} \; \text{ for all } \; j 
\\
& 
& &  \!\!\!\!\! \sum_{j} K_{x,j} = M_{x} \text{ for all } x\,. 
\end{aligned}
\end{equation}
Our figure of merit to measure the randomness generated by a measurement is then the quantity $\pg \big( \ket{\phi},\, \m \big)$ minimized over all states $\ket{\phi}$, which gives us the min-max problem
\begin{equation}\label{eqn: pguess_opt}
\pgs \big(\m \big)= \min_{\ket{\phi}} \, \pg \big( \ket{\phi}, \, \m \big)\,,
\end{equation}
from which the maximal conditional min-entropy can be found \footnote{We take all logarithms with respect to base two.} as 
\begin{equation}
    \hmins \left( X | E, \, \m \right) = - \log \, \pgs \big(\m \big)\,. 
\end{equation}

\textit{Main results}--- We present our main contributions here, followed by three implications, with full details and derivations in the Appendix.
\begin{theorem}\label{thm: qubit 2-outcome}
Let $\m= \{M_1,M_2\}$ be any qubit POVM with two outcomes, where $\tr M_1 \leq \tr M_2$. Then
\begin{equation}\label{eqn: thm1}
  \pgs \big(  \m  \big) =  1 - \tr M_1 + \frac{1}{2} \Big( \tr \sqrt{M_1} \Big)^2 \,.
\end{equation}
\end{theorem}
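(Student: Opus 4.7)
The plan is to reduce the SDP~\eqref{eqn: pguess_main} for fixed $\ket\phi$ to a classical moment problem on the unit circle, solve it with Jensen's inequality, and then minimise over $\ket\phi$. First diagonalise $M_1 = \mu_1 \ketbra{v_1}{v_1} + \mu_2 \ketbra{v_2}{v_2}$ with $0\leq\mu_1\leq\mu_2$, so that $\mu_1+\mu_2=\tr M_1\leq 1$ by hypothesis. The $U(1)$ group of unitaries diagonal in $\{\ket{v_1},\ket{v_2}\}$ leaves $\m$, and hence $\pg(\ket\phi,\m)$, invariant, so one may take $\ket\phi=\cos(\theta/2)\ket{v_1}+\sin(\theta/2)\ket{v_2}$ with a single parameter $\theta\in[0,\pi]$. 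Since the extremal two-outcome qubit POVMs are exactly the rank-one projective measurements together with the trivial POVMs $\{\id,0\}$ and $\{0,\id\}$, any candidate decomposition of $\m$ can be brought to the form $M_1=\sum_j p_j\ketbra{u_j}{u_j}+q_+\id$, with the remaining trivial weight $q_-$ fixed by normalisation. A symmetry argument (the problem is invariant under complex conjugation in the $v$-basis) further lets one take $\ket{u_j}=\cos\alpha_j\ket{v_1}+\sin\alpha_j\ket{v_2}$ real.

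\textbf{Moment-problem reformulation.} Equating the $v$-basis matrix elements of $M_1=\sum_j p_j\ketbra{u_j}{u_j}+q_+\id$ gives three real linear relations on $(p_j,\alpha_j,q_+)$, and substituting Eve's optimal guess per component reduces the objective to
\begin{equation*}
\pg(\ket\phi,\m) = 1 - \tfrac{\mu_1+\mu_2}{2} + q_+ + \tfrac{1}{2}\sum_j p_j\,|\cos(2\alpha_j-\theta)|.
\end{equation*}
Changing variables to $\beta_j:=2\alpha_j-\theta$ and $P:=\sum_j p_j$, the relations become precisely the statement that $\lambda_j:=p_j/P$ is a probability distribution on the unit circle (in the angle $\beta$) with prescribed first moments $m_x=(\mu_1-\mu_2)\cos\theta/P$ and $m_y=(\mu_2-\mu_1)\sin\theta/P$. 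Eve's residual task is then the moment problem of maximising $\sum_j\lambda_j|\cos\beta_j|$ subject to those two means.

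\textbf{Jensen and the two outer optimisations.} Since $y\mapsto\sqrt{1-y^2}$ is strictly concave on $[-1,1]$, Jensen's inequality gives
\begin{equation*}
\sum_j\lambda_j|\cos\beta_j| = \sum_j\lambda_j\sqrt{1-\sin^2\beta_j}\leq\sqrt{1-m_y^2},
\end{equation*}
with equality for the two-atom distribution supported on $(\pm\sqrt{1-m_y^2},\,m_y)$; feasibility gives $m_x^2+m_y^2\leq 1$, allowing the weights to also match $m_x$, and one checks that this pair of projective measurements forms an admissible decomposition of $M_1$. Substituting back, $\pg(\ket\phi,\m) = 1-P/2 + \tfrac{1}{2}\sqrt{P^2-(\mu_2-\mu_1)^2\sin^2\theta}$, which a one-line derivative check shows is monotonically increasing in $P$; Eve therefore takes the maximal $P=\mu_1+\mu_2$ (equivalently $q_+=0$). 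Alice then minimises the resulting function of $\theta$, the minimum occurring at $\sin^2\theta=1$, yielding $\pgs(\m) = 1 - \tfrac{\mu_1+\mu_2}{2} + \sqrt{\mu_1\mu_2}$, which equals~\eqref{eqn: thm1} via $(\tr\sqrt{M_1})^2 = (\sqrt{\mu_1}+\sqrt{\mu_2})^2 = \mu_1+\mu_2+2\sqrt{\mu_1\mu_2}$.

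The main work -- and the expected obstacle -- is the structural reduction in the first paragraph: justifying that arbitrary decompositions can be taken real-valued, extremal, and in the displayed form, and checking that the two-atom Jensen optimum indeed lifts to a genuine POVM decomposition of $M_1$. With those reductions in hand, the remaining moment problem is clean.
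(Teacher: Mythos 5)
Your proposal is correct, and it takes a genuinely different route from the paper's. The paper splits the theorem into two lemmas: a lower bound valid for every state, obtained from the explicit ``square root'' decomposition $K_{1,1}=\sqrt{M_1}\ketbra{\phi}{\phi}\sqrt{M_1}$ of \eqref{eqn: qubit_decomp}, and a matching upper bound for the unbiased state, proved by recasting the SDP in Bloch-vector form and bounding the height of a triangle via Heron's formula and the cosine rule. You instead classify the extremal two-outcome qubit POVMs, reduce Eve's optimization to mixtures of rank-one projective measurements plus trivial components, and solve the resulting constrained moment problem on the circle with Jensen's inequality. This yields the exact value $\pg(\ket{\phi},\m)=1-\tfrac{1}{2}\tr M_1+\tfrac12\sqrt{(\tr M_1)^2-(\mu_2-\mu_1)^2\sin^2\theta}$ for \emph{every} state, slightly more than the theorem asks, and minimizing over $\theta$ gives \eqref{eqn: thm1}; you also use the hypothesis $\tr M_1\le\tr M_2$ exactly where it is needed, to ensure $q_-\ge 0$ at $P=\tr M_1$. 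What each approach buys: yours is more elementary (no dual variables, no Heron), gives the full state dependence in one pass, and exhibits Eve's optimal attack as a two-atom mixture of projective measurements, matching the paper's Figure~1; the paper's square-root decomposition, on the other hand, is the object that generalizes to the $d$-dimensional noisy projective measurement and the dilation/entropy bounds, whereas your extremality classification (projective plus trivial) is special to two-outcome qubit POVMs.

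One step should be tightened: the claim that complex-conjugation symmetry lets you take the $\ket{u_j}$ real. Averaging a decomposition with its conjugate does not produce a decomposition into real rank-one projectors, so symmetry alone does not justify the restriction. Fortunately you do not need it: writing $\ketbra{u_j}{u_j}=\tfrac12\left(\id+\vec n_j\cdot\vec\sigma\right)$, any component of $\vec n_j$ out of the real ($x$--$z$) plane leaves the in-plane moment constraints untouched and only reduces $\abs{\vecr_\phi\cdot\vec n_j}$, since $\abs{\vecr_\phi\cdot\vec n_j}\le\sqrt{1-s_j^2}$ with $s_j$ the in-plane component orthogonal to $\vecr_\phi$; hence the Jensen bound $\sqrt{1-m_y^2}$ holds for arbitrary complex decompositions, and your real two-atom construction attains it. The extremality reduction itself is sound, since refining a sub-POVM into extremal pieces can only increase Eve's objective.
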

We prove the theorem using two lemmas. 
\begin{lemma}\label{lemma: qubit 2-outcome lower}
Let $\m= \{M_1,M_2\}$ be any qubit POVM with two outcomes, where $\tr M_1 \leq \tr M_2$. The following lower bound holds for any state $\ket{\phi}$,
 \begin{equation}\label{eqn: lemma 1}
  \pg \big( \ket{\phi}, \,  \m  \big) \geq  1 - \tr M_1 + \frac{1}{2} \Big( \tr \sqrt{M_1} \Big)^2 \,.
\end{equation}  
\end{lemma}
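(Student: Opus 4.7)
The plan is to exhibit, for each fixed $\ket{\phi}$, a single feasible point of the SDP~\eqref{eqn: pguess_main} whose objective already meets the claimed bound. Because $x,j\in\{1,2\}$ for a two-outcome qubit POVM, I only need to write down four operators $K_{x,j}$.

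The ansatz is a rank-one splitting of $M_1$ through its square root. I pick an orthonormal basis $\{\ket{e_1},\ket{e_2}\}$ of $\mathbb{C}^2$ (to be specified later as a function of $\ket{\phi}$), set $\ket{\mu_i}=\sqrt{M_1}\ket{e_i}$ so that $\ketbra{\mu_1}{\mu_1}+\ketbra{\mu_2}{\mu_2}=M_1$, and declare
\begin{align*}
K_{1,1}&=\ketbra{\mu_1}{\mu_1}, & K_{1,2}&=\ketbra{\mu_2}{\mu_2},\\
K_{2,1}&=\|\mu_1\|^2\,\id-\ketbra{\mu_1}{\mu_1}, & K_{2,2}&=(1-\|\mu_1\|^2)\,\id-\ketbra{\mu_2}{\mu_2}.
\end{align*}
Feasibility is quick: the row sums give $M_1$ and $M_2$; the column sums are $\|\mu_1\|^2\,\id$ and $(1-\|\mu_1\|^2)\,\id$; the first three operators are manifestly PSD; and $K_{2,2}\geq 0$ uses exactly the hypothesis, since $\|\mu_1\|^2+\|\mu_2\|^2=\tr M_1\leq 1$ is equivalent to $\tr M_1\leq\tr M_2$.

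To evaluate the objective, I use $\langle\mu_i|\phi\rangle=\bra{e_i}\sqrt{M_1}\ket{\phi}$ together with completeness of $\{\ket{e_i}\}$ to get
\begin{equation*}
\bra{\phi}(K_{1,1}+K_{2,2})\ket{\phi}=1-\bra{\phi}M_1\ket{\phi}+\bra{e_1}A\ket{e_1},\quad A:=\sqrt{M_1}(2\ketbra{\phi}{\phi}-\id)\sqrt{M_1}.
\end{equation*}
Now I specialise $\ket{e_1}$ to be a top eigenvector of the $2\times 2$ operator $A$. From $\tr A=2\bra{\phi}M_1\ket{\phi}-\tr M_1$ and $\det A=-\det M_1$, the standard quadratic formula yields $\lmax(A)\geq\tfrac{1}{2}\tr A+\sqrt{\det M_1}$; this inequality just drops the nonnegative $(\tr A)^2$ under the square root, and is precisely what makes the final bound uniform in $\ket{\phi}$. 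The $\bra{\phi}M_1\ket{\phi}$ contributions then cancel, leaving $1-\tr M_1/2+\sqrt{\det M_1}$, which the two-by-two identity $(\tr\sqrt{M_1})^2=\tr M_1+2\sqrt{\det M_1}$ rewrites as the right-hand side of~\eqref{eqn: lemma 1}.

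The main obstacle is spotting the ansatz, not executing the algebra. More naive decompositions---for example, the three-term mixture of the two trivial POVMs with the projective measurement $\{\ketbra{v_2}{v_2},\id-\ketbra{v_2}{v_2}\}$ along a dominant eigenvector of $M_1$---give a strictly weaker bound, because they leave Eve powerless when $\ket{\phi}$ is unbiased with respect to the eigenbasis of $M_1$. The rank-one splitting through $\sqrt{M_1}$, with a $\ket{\phi}$-dependent choice of the basis $\{\ket{e_i}\}$, is the device that lets Eve adapt and close this gap.
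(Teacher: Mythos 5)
Your proof is correct, and it is built on the same object as the paper's: a rank-one splitting of $M_1$ through its square root, $K_{1,1}=\sqrt{M_1}\ketbra{e_1}{e_1}\sqrt{M_1}$, $K_{1,2}=\sqrt{M_1}\ketbra{e_2}{e_2}\sqrt{M_1}$, with the complements padded to multiples of the identity. The paper commits to the fixed choice $\ket{e_1}=\ket{\phi}$, $\ket{e_2}=\ket{\phi^\perp}$ (its Eq.~\eqref{eqn: qubit_decomp} is exactly your ansatz with that basis), obtains the intermediate bound $1-2p(1)+2\langle\phi|\sqrt{M_1}|\phi\rangle^2$, and then must still minimize that scalar expression over the components of $\ket{\phi}$ in the eigenbasis of $M_1$ to reach the stated right-hand side, with equality only at the unbiased state. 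You instead let the splitting basis adapt to $\ket{\phi}$, taking $\ket{e_1}$ to be a top eigenvector of $A=\sqrt{M_1}(2\ketbra{\phi}{\phi}-\id)\sqrt{M_1}$; the elementary estimate $\lmax(A)\geq\tfrac12\tr A+\sqrt{-\det A}$ together with $\det A=-\det M_1$ then makes the bound uniform in $\ket{\phi}$ in one stroke, with no residual optimization. The two feasible points coincide precisely at the unbiased state (where $\ket{\phi}$ is itself a top eigenvector of $A$) and your choice is strictly better for Eve elsewhere, which is consistent with the paper's observation that its bound is tight only at the unbiased state. What your route buys is a cleaner, $\phi$-free final step; what the paper's buys is the explicit identification of the unbiased state as the unique minimizer, which it reuses in the discussion after Lemma~\ref{lemma: qubit 2-outcome upper}. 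All feasibility checks in your argument (row and column sums, positivity of $K_{2,2}$ from $\tr M_1\leq 1$) are sound, and summing the diagonal terms of any feasible point is indeed a valid lower bound on the guessing probability, so no gap remains.
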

We sketch the proof, providing full details in Appendix \ref{app: qubit decomposition} and an alternative proof using Bloch vectors in Appendix \ref{app: qubit 2-outcome lower}. For any state $\ket{\phi}$ chosen by Alice, Eve can decompose the POVM $\m$ as $\mathcal{K}_j=\{K_{1, j}, \, K_{2, j}\}$, where
\begin{equation}\label{eqn: qubit_decomp}
\begin{aligned}
 K_{1, 1}&=  \sqrt{M_{1}} \ketbra{\phi}{\phi} \sqrt{M_{1}}\,, 
    \\
    K_{2, 1} &= p (1) \id -  \sqrt{M_{1}} \ketbra{\phi}{\phi} \sqrt{M_{1}}\,, 
    \\
    K_{1, 2} &=     M_{1} - \sqrt{M_1} \ketbra{\phi}{\phi} \sqrt{M_1}   \,,
    \\
    K_{2, 2} &=p(2) \id -   M_{1} + \sqrt{M_1} \ketbra{\phi}{\phi} \sqrt{M_1}   \,,  
\end{aligned}    
\end{equation}
with $p(x)= \langle \phi | M_x | \phi \rangle$. This decomposition gives rise to the lower bound 
\begin{equation}\label{eqn: inter phi}
\pg \big( \ket{\phi}, \, \m  \big) \geq 1 - 2 p \left( 1\right) + 2 \langle \phi | \sqrt{M_1} | \phi \rangle^2\,.   
\end{equation}
We can further show that the right-hand side of \eqref{eqn: inter phi} is lower-bounded by the right-hand side of \eqref{eqn: lemma 1}, with equality achieved only when $\ket{\phi}$ is unbiased to the diagonal basis $\{\ket{x}\}$ of $\m$, i.e. when $\abs{\langle \phi | x \rangle}^2 = 1/2$ for all $x$.

\begin{lemma}\label{lemma: qubit 2-outcome upper}
Let $\m= \{M_1,M_2\}$ be any qubit POVM with two outcomes, where $\tr M_1 \leq \tr M_2$, and let $\ket{\psi}= \frac{1}{\sqrt{2}} \left(1, \, 1 \right)^{T}$. Then
\begin{equation}\label{eqn: lemma2}
  \pg \big( \ket{\psi}, \, \m  \big) \leq   1 - \tr M_1 + \frac{1}{2} \Big( \tr \sqrt{M_1} \Big)^2 \,.
\end{equation}  
\end{lemma}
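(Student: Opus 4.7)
The plan is to prove the upper bound via weak duality for the semidefinite program defining $\pg(\ket{\psi},\m)$ in \eqref{eqn: pguess_main}. Introducing Hermitian multipliers $\mu_1,\mu_2$ for the equality constraints $\sum_j K_{x,j} = M_x$ and traceless Hermitian multipliers $\nu_1,\nu_2$ for the constraints that $\sum_x K_{x,j}$ be proportional to $\id$, weak duality yields
\begin{equation*}
\pg(\ket{\psi},\m) \leq \tr(\mu_1 M_1) + \tr(\mu_2 M_2)
\end{equation*}
whenever $\mu_x - \nu_j \succeq \delta_{xj}\ketbra{\psi}{\psi}$ for all $x,j\in\{1,2\}$. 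It therefore suffices to exhibit one such dual-feasible tuple whose objective matches the right-hand side of \eqref{eqn: lemma2}.

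Guided by complementary slackness against the primal-optimal decomposition \eqref{eqn: qubit_decomp} evaluated at $\ket\psi$, which by Lemma \ref{lemma: qubit 2-outcome lower} already saturates the target, I would propose
\begin{equation*}
\mu_1 = \frac{(\tr M_1)\,\id - M_1}{2\sqrt{\det M_1}},\quad \mu_2 = \tfrac{1}{2}\id,\quad \nu_1 = \tfrac{1}{2}\id - \frac{\sqrt{M_1}\ketbra{\psi}{\psi}\sqrt{M_1}}{\bra{\psi}M_1\ket{\psi}},\quad \nu_2 = \tfrac{1}{2}\id - \ketbra{\psi}{\psi},
\end{equation*}
under the assumption that $M_1$ is invertible; the rank-deficient case then follows by continuity of both sides of \eqref{eqn: lemma2}. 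Both $\nu_j$ are manifestly traceless. Computing the dual value is then a short calculation: the $2\times 2$ Cayley--Hamilton identity $(\tr M_1)^2 - \tr(M_1^2) = 2\det M_1$ gives $\tr(\mu_1 M_1) = \sqrt{\det M_1}$, while $\tr(\mu_2 M_2) = \tr M_2/2 = 1 - \tfrac{1}{2}\tr M_1$. Summing these and using $(\tr\sqrt{M_1})^2 = \tr M_1 + 2\sqrt{\det M_1}$ returns exactly $1 - \tr M_1 + \tfrac{1}{2}(\tr\sqrt{M_1})^2$.

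The main obstacle is verifying the four positivity constraints $S_{x,j} := \mu_x - \nu_j - \delta_{xj}\ketbra{\psi}{\psi} \succeq 0$. Two are free: $S_{2,2} = 0$ by construction and $S_{2,1}$ equals the rank-one projector $\sqrt{M_1}\ketbra{\psi}{\psi}\sqrt{M_1}/\bra{\psi}M_1\ket{\psi}$. For the non-trivial pair $S_{1,1}$ and $S_{1,2}$, I would write each as a $2\times 2$ matrix in the eigenbasis of $M_1$ (with eigenvalues $\lambda_1,\lambda_2$) and check that (i) $\det S_{x,j} = 0$, which collapses after algebraic simplification (for $S_{1,1}$, the diagonal and off-diagonal entries each factor through $(\sqrt{\lambda_1}-\sqrt{\lambda_2})^2$, making the determinant vanish identically), and (ii) $\tr S_{x,j} \geq 0$, which reduces to the AM-GM inequality $\lambda_1 + \lambda_2 \geq 2\sqrt{\lambda_1 \lambda_2}$. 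Each $S_{x,j}$ is thereby rank-one positive semidefinite, establishing dual feasibility and hence the bound.
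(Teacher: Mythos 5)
Your proof is correct, but it takes a genuinely different route from the paper's. The paper proves this lemma entirely in the primal: it parametrizes Eve's decompositions by Bloch vectors, reads the constraint $a\vec{r}_{\lambda_1}-b\vec{r}_{\mu_1}=z\vec{r}_{1}$ as a triangle whose height in the direction of $\vec{r}_\psi$ is the quantity to be bounded, controls that height with Heron's formula and the triangle inequalities $(a-b)^2\le z^2\le (a+b)^2$, and finishes with a monotonicity argument in $a+b$. You instead exhibit an explicit dual certificate and invoke weak duality against the SDP \eqref{eqn: pguess_main}. I verified your certificate: both $\nu_j$ are traceless; the objective gives $\sqrt{\det M_1}+1-\tfrac12\tr M_1$, which matches the right-hand side via $(\tr\sqrt{M_1})^2=\tr M_1+2\sqrt{\det M_1}$; and, writing $s=\sqrt{m_1}$, $t=\sqrt{m_2}$ in the eigenbasis of $M_1$, the nontrivial blocks come out as $S_{1,1}=\bigl(\begin{smallmatrix} t(s-t)^2/(2s(s^2+t^2)) & -(s-t)^2/(2(s^2+t^2)) \\ -(s-t)^2/(2(s^2+t^2)) & s(s-t)^2/(2t(s^2+t^2))\end{smallmatrix}\bigr)$ and $S_{1,2}=\bigl(\begin{smallmatrix} t/(2s) & 1/2 \\ 1/2 & s/(2t)\end{smallmatrix}\bigr)$, each with vanishing determinant and nonnegative trace, so all four constraints hold. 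Your approach is arguably more uniform with the rest of the paper—it is precisely the qubit analogue of the dual-variable proof given for Lemma \ref{lemma: qudit upper} in Appendix \ref{app: upper qudit pguess}, and it simultaneously certifies optimality of the square-root decomposition at $\ket\psi$ via complementary slackness—whereas the paper's geometric argument is more elementary and makes the equality conditions transparent. One small gap: for singular $M_1$ the appeal to ``continuity of both sides'' should be replaced by a short perturbation argument (e.g.\ mix any feasible $\{K_{x,j}\}$ for $\m$ with $\tfrac{\epsilon}{4}\id$ to obtain a feasible point for the full-rank POVM with elements $(1-\epsilon)M_x+\tfrac{\epsilon}{2}\id$, apply the bound there, and let $\epsilon\to 0$), since continuity of the SDP value in $\m$ is not immediate; this is routine but should be said.
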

We provide the proof in Appendix \ref{app: qubit 2 outcome upper}. Lemma \ref{lemma: qubit 2-outcome upper} shows that the lower bound in Lemma \ref{lemma: qubit 2-outcome lower} is achievable, concluding the proof of Theorem \ref{thm: qubit 2-outcome}. Note that while the state $\ket{\phi}$ must be unbiased to the basis $\{\ket{x}\}$ to achieve maximal intrinsic randomness, such a state does not result in a uniform distribution of outcomes (perfect statistical randomness) unless $\tr M_1= \tr M_2$. Since Theorem \ref{thm: qubit 2-outcome} holds for all qubit POVMS with two outcomes, it holds for any POVM arising from a noisy channel (e.g. the depolarizing, phase flip or amplitude damping channels) applied to a rank-one qubit projective measurement (see \cite[8.3]{nielsen00} for more examples).

\begin{figure}[h]
    \centering
    \begin{tikzpicture}[line cap=round, line join=round, >=Triangle]
     \centering
    \begin{scope}[scale=1]
  \clip(-2.9,-2.6) rectangle (2.8,2.8);
  \draw(0,0) circle (2cm);
  \draw[dashed] (0,0)-- (1.7, 0);
   \draw[dashed] (0,0)-- (-1.7, 0);
\draw [->] (0,0)-- (1.7, 1.05357);
  \draw [->] (0,0)-- (1.7, -1.05357);
  \draw [->] (0,0)-- (-1.7, 1.05357);
  \draw [->] (0,0)-- (-1.7, -1.05357);
  \draw [dotted] (1.7, -1.05357)-- (1.7, 1.05357);
  \draw [dotted] (-1.7, -1.05357)-- (-1.7, 1.05357);
  \draw (-2,0) node[anchor=east] {\small{$\ket{{1}}$}};
  \draw (2,0) node[anchor=west] {\small{$\ket{{2}}$}};
  \draw (0,2) node[anchor=south] {\small{$ {|\psi\rangle}$}};
  \draw (0,-2) node[anchor=north] {\small{$|{\psi^{\perp}}\rangle$}};
  \draw [fill] (1.7,0) circle (1.5pt);
  \draw [fill] (-1.7,0) circle (1.5pt);
  \end{scope}
\end{tikzpicture}
    \caption{\emph{Eve's optimal decomposition}. Schematic of the qubit noisy projective measurement $\m_2= \{M_1, \, M_2\}$, with noise parameter $\ve=0.15$, where Alice chooses the state $\ket{\psi}$ or $|{\psi^{\perp}}\rangle$. The filled circles represent Alice's POVM elements $\{M_1, \, M_2\}$, while the arrows represent Eve's decompositions $\{K_{1, j}, \, K_{2, j}\}$.}
    \label{fig: noisy_qubit_decomp}
\end{figure}
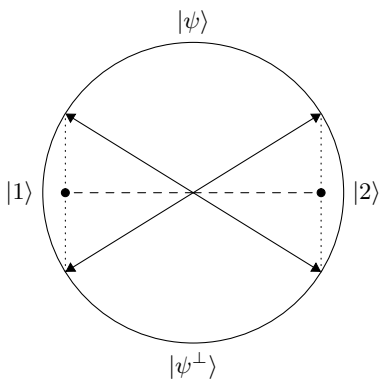

Beyond qubits, for any two-outcome POVM $\m$ of arbitrary dimension, the following upper bound to $\pgs \big(  \m  \big)$ follows readily from Lemma \ref{lemma: qubit 2-outcome upper} by projecting to a suitable qubit subspace. 

\begin{corollary}\label{corr: qudit 2-outcome}
Let $\m=\{M_x\}_x$ be any POVM with two outcomes, with $\lmin \left( M_1\right) + \lmax \left( M_1 \right) \leq \lmin \left( M_2\right) + \lmax \left( M_2 \right)$, where $\lmax \left( . \right)$ and $\lmin \left( . \right)$ denote the largest and smallest eigenvalues of the argument. Then 
\begin{equation}\label{eqn: pguess_bound_2outcome}
 \pgs \big(  \m  \big) \leq 1 - \frac{1}{2} \bigg( \sqrt{\lmax\left( M_1\right)} - \sqrt{\lmin \left( M_1\right)} \bigg)^2\,.
\end{equation}
\end{corollary}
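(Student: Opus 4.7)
The plan is to reduce the $d$-dimensional problem to the qubit case handled by Lemma~\ref{lemma: qubit 2-outcome upper} by restricting attention to a carefully chosen two-dimensional subspace. Let $\ket{v_-}$ and $\ket{v_+}$ be unit eigenvectors of $M_1$ corresponding to eigenvalues $\lmin(M_1)$ and $\lmax(M_1)$ respectively (picking any eigenvectors if the extremal eigenvalues are degenerate), let $\mathcal{H}_2 := \mathrm{span}(\ket{v_-}, \ket{v_+})$, and let $P$ denote its orthogonal projector. Define the induced qubit POVM $\tilde{\mathcal{M}} = \{\tilde M_1, \tilde M_2\}$ on $\mathcal{H}_2$ by $\tilde M_x := P M_x P$; since $\tilde M_1 + \tilde M_2 = P^2 = \id_{\mathcal{H}_2}$ this is a valid qubit POVM, and in the basis $\{\ket{v_-}, \ket{v_+}\}$ we have $\tilde M_1 = \diag(\lmin(M_1),\lmax(M_1))$.

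Two preliminaries prepare the ground for invoking Lemma~\ref{lemma: qubit 2-outcome upper}. First, using $\lmax(M_2) = 1 - \lmin(M_1)$ and $\lmin(M_2) = 1 - \lmax(M_1)$, the Corollary's hypothesis is equivalent to $\tr\tilde M_1 \leq \tr\tilde M_2$. Second, a direct algebraic manipulation gives
\begin{equation*}
1 - \tr\tilde M_1 + \tfrac{1}{2}\bigl(\tr\sqrt{\tilde M_1}\bigr)^{2} = 1 - \tfrac{1}{2}\bigl(\sqrt{\lmax(M_1)} - \sqrt{\lmin(M_1)}\bigr)^{2},
\end{equation*}
matching the right-hand side of~\eqref{eqn: pguess_bound_2outcome}.

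The crux is comparing the qudit and qubit guessing probabilities on the specific state $\ket\phi := \tfrac{1}{\sqrt{2}}\bigl(\ket{v_-}+\ket{v_+}\bigr) \in \mathcal{H}_2$, which is the analogue in the $\tilde M_1$-eigenbasis of the $\ket\psi$ used in Lemma~\ref{lemma: qubit 2-outcome upper}. I claim $\pg(\ket\phi, \mathcal{M}) \leq \pg(\ket\phi, \tilde{\mathcal{M}})$. Indeed, given any qudit decomposition $\{K_{x,j}\}$ feasible for the SDP~\eqref{eqn: pguess_main}, its projection $\tilde K_{x,j} := P K_{x,j} P$ is feasible for the qubit SDP on $\tilde{\mathcal{M}}$: positivity is preserved, $\sum_j \tilde K_{x,j} = P M_x P = \tilde M_x$, and $\sum_x \tilde K_{x,j} = P\bigl(\sum_x K_{x,j}\bigr)P$ is proportional to $P = \id_{\mathcal{H}_2}$. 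Since $\ket\phi \in \mathcal{H}_2$ we have $\bra\phi K_{j,j}\ket\phi = \bra\phi \tilde K_{j,j}\ket\phi$, so the SDP objective is preserved. Combining this with Lemma~\ref{lemma: qubit 2-outcome upper} applied to $\tilde{\mathcal{M}}$ and $\ket\phi$ yields
\begin{equation*}
\pgs(\mathcal{M}) \leq \pg(\ket\phi, \mathcal{M}) \leq \pg(\ket\phi, \tilde{\mathcal{M}}) \leq 1 - \tfrac{1}{2}\bigl(\sqrt{\lmax(M_1)}-\sqrt{\lmin(M_1)}\bigr)^{2},
\end{equation*}
which is the desired bound.

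The main subtlety I expect is the projection step: verifying that projecting an SDP-feasible qudit decomposition produces an SDP-feasible qubit decomposition on $\tilde{\mathcal{M}}$ with the same objective value at $\ket\phi$. This is precisely what allows the qubit bound to transfer to the qudit setting. Everything else — identifying the right subspace, translating the hypothesis, and simplifying the algebraic form of the bound — is routine once this reduction is in place.
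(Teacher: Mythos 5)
Your proof is correct and follows essentially the same route as the paper's: project any feasible decomposition onto the two-dimensional subspace spanned by the eigenvectors of $\lmax(M_1)$ and $\lmin(M_1)$, check that feasibility and the objective value at a state in that subspace are preserved, and then invoke Lemma~\ref{lemma: qubit 2-outcome upper} for the induced qubit POVM. The only cosmetic difference is that you fix the unbiased state in the subspace from the start, whereas the paper phrases the reduction as $\pgs(\m)\leq\pgs(\Pi\,\m\,\Pi)$ for a general projector before specializing; the substance is identical.
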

The proof is given in Appendix \ref{app: corr1 proof}. This result covers two-outcome measurements defined by a projector, $\Pi$, and its complement $\id-\Pi$, used for instance in photonic experiments involving orbital angular momentum~\cite{Molina_07}.

Another relevant and natural measurement for systems of arbitrary dimension $d$ corresponds to a rank-one projective measurement of $d$ outcomes affected by white noise. Let $\{\ket{x}\}$ be an orthonormal basis with $x= 1, ..., d$, let $\ve \in \left(0, 1 \right)$ and consider the depolarizing channel $\Delta_{\ve} \left( . \right)$ which acts on any operator $\sigma$ as
 \begin{equation}
    \Delta_{\ve} \left( \sigma \right) = \left( 1 - \ve \right) \sigma + \frac{\ve}{d}\tr(\sigma) \id\,.
\end{equation}
We refer to the POVM $\m_{d} = \{M_x\}_x$ with
\begin{equation}
  M_x = \Delta_{\ve} \left( \ketbra{x}{x} \right) = \left( 1 - \ve \right) \ketbra{x}{x} + \frac{\ve}{d} \id\,   
\end{equation}
as the  noisy projective measurement in dimension $d$.
\begin{theorem}\label{thm: qudit noisy}
Let $\m_d$ be the noisy projective measurement in dimension $d$. Then 
\begin{equation}\label{eqn: thm2}
  \pgs \big(  \m_d  \big) =  \frac{1}{d} \Big( \tr \sqrt{M_1} \Big)^2 \,.
\end{equation}
\end{theorem}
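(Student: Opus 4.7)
My plan is to prove Theorem~\ref{thm: qudit noisy} by matching lower and upper bounds on $\pg(\ket{\phi},\m_d)$, mirroring the two-lemma structure of Theorem~\ref{thm: qubit 2-outcome}. The key algebraic fact driving the proof is the identity
\begin{equation}\label{eqn: sqrt identity}
\sum_{j=1}^{d}\sqrt{M_{j}}=\bigl(\tr\sqrt{M_{1}}\bigr)\,\id\,,
\end{equation}
which follows from $\sqrt{M_x}=\sqrt{\ve/d}\,\id+\bigl(\sqrt{1-\ve+\ve/d}-\sqrt{\ve/d}\bigr)\ketbra{x}{x}$ together with $\sum_x\ketbra{x}{x}=\id$. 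This is exactly what converts L\"uders-rule expressions into the target $\tfrac{1}{d}(\tr\sqrt{M_1})^{2}$.

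For the lower bound, I would construct, for every $\ket{\phi}$, a decomposition with $\bra{\phi}K_{j,j}\ket{\phi}=\bigl(\bra{\phi}\sqrt{M_j}\ket{\phi}\bigr)^{2}$ for each $j$. Cauchy--Schwarz combined with~\eqref{eqn: sqrt identity} then yields
\begin{equation}
\pg(\ket{\phi},\m_d)\;\geq\;\sum_{j=1}^{d}\bigl(\bra{\phi}\sqrt{M_j}\ket{\phi}\bigr)^{2}\;\geq\;\frac{1}{d}\Bigl(\sum_{j=1}^{d}\bra{\phi}\sqrt{M_j}\ket{\phi}\Bigr)^{2}=\frac{1}{d}\bigl(\tr\sqrt{M_1}\bigr)^{2}\,.
\end{equation}
The decomposition has the L\"uders-type form $K_{j,j}=\sqrt{M_j}\ketbra{\phi}{\phi}\sqrt{M_j}+R_j$, with $R_j\ket{\phi}=0$ (so the ``junk'' $R_j$ does not inflate the objective) and mixing weights $p(j)=\bra{\phi}M_j\ket{\phi}$. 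The off-diagonal operators $K_{x,j}$ ($x\neq j$) and the $R_j$ are chosen simultaneously so that the two-sided marginals $\sum_j K_{x,j}=M_x$ and $\sum_x K_{x,j}=p(j)\id$ are matched with each $K_{x,j}\geq 0$. For $d=2$ this reduces to~\eqref{eqn: qubit_decomp} with $R_j=0$.

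For the upper bound, I would specialize to $\ket{\phi_0}=\frac{1}{\sqrt{d}}\sum_{x}\ket{x}$, which, together with $\m_d$, is invariant under the natural action of $S_d$ on $\{\ket{x}\}$. Averaging any optimizer of~\eqref{eqn: pguess_main} over $S_d$ preserves both the objective and the constraints, so the supremum is attained on an equivariant family $U_\sigma K_{x,j}U_\sigma^\dagger=K_{\sigma(x),\sigma(j)}$. Under this equivariance each $K_{x,j}$ is parametrized by only a handful of scalar coefficients (weights on $\ketbra{j}{j}$, $\ketbra{x}{x}$, the symmetric mode of the complement, and the transverse isotypic subspace), and the SDP~\eqref{eqn: pguess_main} collapses to a small explicit scalar optimization whose maximum can be shown to equal $\tfrac{1}{d}(\tr\sqrt{M_1})^{2}$; equivalently, one can exhibit an $S_d$-symmetric dual-feasible point certifying this value.

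The main technical obstacle is the feasibility of the lower-bound decomposition. The row residuals $\sqrt{M_x}(\id-\ketbra{\phi}{\phi})\sqrt{M_x}$ and column residuals $p(j)\id-\sqrt{M_j}\ketbra{\phi}{\phi}\sqrt{M_j}$ are each individually positive, but distributing them into positive off-diagonal operators $K_{x,j}$ is a non-commutative transportation problem that is \emph{not} solvable by the na\"ive choice $R_j=0$ once $d\geq 3$; a direct $d=3$ check shows that the induced off-diagonals can have negative eigenvalues. The required junk $R_j$ supported on the orthogonal complement of $\ket{\phi}$ must therefore be engineered using the explicit form $\sqrt{M_x}=\beta\,\id+\gamma\,\ketbra{x}{x}$ of the noisy measurement rather than by any generic Birkhoff-type factorization.
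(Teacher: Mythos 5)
Your outline coincides with the strategy the paper itself follows: a lower bound from a L\"uders-type ``square root'' decomposition whose diagonal blocks are $\sqrt{M_j}\ketbra{\phi}{\phi}\sqrt{M_j}$ plus junk annihilating $\ket{\phi}$, combined with $\sum_j\sqrt{M_j}=(\tr\sqrt{M_1})\,\id$ and Cauchy--Schwarz; and an upper bound for the unbiased state obtained either by $S_d$-symmetrizing the SDP or by exhibiting a dual-feasible certificate. The problem is that, at exactly the two points you flag as the technical core, nothing is actually constructed, so neither bound is established. For the lower bound the entire content of the argument is the explicit feasible decomposition: the paper takes, for $x\neq j$, the rank-one blocks $K_{x,j}=\ketbra{\phi_{xj}}{\phi_{xj}}$ with $\ket{\phi_{xj}}=\sqrt{M_x}\big(\langle\phi|j\rangle\ket{x}-\langle\phi|x\rangle\ket{j}\big)$, and on the diagonal the junk $R_x=\frac{c_{\neq x}\,\ve}{d}\big(\id_{\neq x}-\ketbra{\phi_{\neq x}}{\phi_{\neq x}}\big)$, where $\id_{\neq x}=\sum_{k\neq x}\ketbra{k}{k}$, $c_{\neq x}=\langle\phi|\id_{\neq x}|\phi\rangle$ and $\ket{\phi_{\neq x}}=\id_{\neq x}\ket{\phi}/\sqrt{c_{\neq x}}$; positivity is then automatic (every block is a positive multiple of a projector), and the two marginal constraints are verified by a direct computation using $\sqrt{M_x}=\sqrt{\ve/d}\,\id+(\sqrt{A/d}-\sqrt{\ve/d})\ketbra{x}{x}$. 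Without some such construction, the inequality $\pg(\ket{\phi},\m_d)\geq\sum_j\langle\phi|\sqrt{M_j}|\phi\rangle^2$ remains a conjecture --- and your own closing paragraph concedes that the ``transportation'' step is nontrivial for $d\geq3$ (indeed $R_j=0$ only works at $d=2$), so the gap is not cosmetic.

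The upper bound has the same status. Symmetrizing over $S_d$ is valid and is precisely the paper's alternative proof, but after symmetrization the feasible set is still parametrized by roughly a dozen coupled scalars, and the paper needs a further structural step (a sublemma allowing positive mass to be moved from the off-diagonal blocks $K_{x,j}$, $x\neq j$, onto the diagonal ones, which forces the off-diagonal blocks to be rank one) before the problem collapses to a one-parameter maximization over $h\in[0,\ve/d^2]$ whose optimum yields $\frac{1}{d}\big(\tr\sqrt{M_1}\big)^2$; the paper's primary route instead writes down explicit dual variables $Y_x$, $G_j$ and proves their feasibility via complementary slackness and a $2\times2$ determinant computation. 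Your proposal asserts that one of these finishing arguments ``can be shown'' but carries out neither, so as written it is a correct plan that reproduces the paper's approach rather than a proof of Theorem~\ref{thm: qudit noisy}.
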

Note that the choice of $M_1$ in \eqref{eqn: thm2} is arbitrary, as 
\begin{equation}
    \tr \sqrt{M_1} = \tr \sqrt{M_k} \;\;  \textnormal{for all} \;\; k=1, ..., d\,.
\end{equation}
We prove Theorem \ref{thm: qudit noisy} using two lemmas.

\begin{lemma}\label{lemma: qudit lower}
Let $\m_d$ be the noisy projective measurement in dimension $d$. The following lower bound holds for any state $\ket{\phi}$,
 \begin{equation}\label{eqn: lemma3}
  \pg \big( \ket{\phi}, \,  \m_d  \big) \geq  \frac{1}{d} \Big( \tr \sqrt{M_1} \Big)^2 \,.
\end{equation}   
\end{lemma}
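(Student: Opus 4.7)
The plan is to generalize the decomposition used in Lemma \ref{lemma: qubit 2-outcome lower} from two to $d$ outcomes, exhibiting an explicit feasible point for the semidefinite program \eqref{eqn: pguess_main} whose objective value is at least $\frac{1}{d}\bigl(\tr\sqrt{M_1}\bigr)^2$. Concretely, I would set the diagonal blocks of the decomposition to be
\begin{equation*}
K_{j,j} \;=\; \sqrt{M_j}\,\ketbra{\phi}{\phi}\,\sqrt{M_j}, \qquad j = 1, \dots, d,
\end{equation*}
so that $\bra{\phi}K_{j,j}\ket{\phi} = \langle\phi|\sqrt{M_j}|\phi\rangle^2$, and then choose the off-diagonal blocks $K_{x,j}$ with $x\neq j$ so as to meet the three constraints of \eqref{eqn: pguess_main}: pointwise positivity, the row sum $\sum_j K_{x,j}=M_x$, and the column sum $\sum_x K_{x,j}\propto\id$. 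Because all $M_x$, and hence all $\sqrt{M_x}$, commute and share the eigenbasis $\{\ket{y}\}$ for the noisy projective measurement, this construction can be carried out entirely in that diagonal basis, effectively reducing it to a problem about $d\times d$ non-negative weight matrices.

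Granted a valid feasible point, the bound follows from two observations. First, the objective becomes
\begin{equation*}
\sum_{j=1}^d \bra{\phi}K_{j,j}\ket{\phi} \;=\; \sum_{j=1}^d \langle\phi|\sqrt{M_j}|\phi\rangle^2.
\end{equation*}
Second, permutation symmetry of $\m_d$ in the basis $\{\ket{y}\}$ implies that $\sum_j \sqrt{M_j} = \bigl(\tr\sqrt{M_1}\bigr)\id$, so $\sum_j \langle\phi|\sqrt{M_j}|\phi\rangle = \tr\sqrt{M_1}$ for every $\ket{\phi}$. Applying Cauchy--Schwarz against the all-ones vector then yields
\begin{equation*}
\sum_{j=1}^d \langle\phi|\sqrt{M_j}|\phi\rangle^2 \;\geq\; \frac{1}{d}\Bigl(\sum_{j=1}^d \langle\phi|\sqrt{M_j}|\phi\rangle\Bigr)^{\!2} \;=\; \frac{1}{d}\bigl(\tr\sqrt{M_1}\bigr)^2,
\end{equation*}
with equality precisely when the scalars $\langle\phi|\sqrt{M_j}|\phi\rangle$ are independent of $j$, i.e.\ when $\ket{\phi}$ is unbiased in $\{\ket{y}\}$. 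This pinpoints the optimal state and will match the construction expected in the companion upper-bound lemma.

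The main obstacle I anticipate is the explicit construction of the off-diagonal blocks so that each $\mathcal{K}_j$ is a genuine subnormalized POVM. Unlike the qubit case, where the two sub-POVMs of two outcomes each can be completed by inspection as in \eqref{eqn: qubit_decomp}, in dimension $d>2$ the column-sum condition $\sum_x K_{x,j}\propto\id$ is a full matrix identity that must be met simultaneously with pointwise positivity. The naive ansatz $K_{x,j}=\frac{1}{d-1}\bigl(M_x - \sqrt{M_x}\ketbra{\phi}{\phi}\sqrt{M_x}\bigr)$ for $x\neq j$ satisfies $\sum_j K_{x,j}=M_x$ but fails the column-sum constraint, so corrections proportional to $\id$, to $\sqrt{M_j}\ketbra{\phi}{\phi}\sqrt{M_j}$, or to $\ketbra{j}{j}$ will need to be added in a $\ket{\phi}$-dependent way; verifying that such corrections keep every block positive semidefinite for arbitrary $\ket{\phi}$ is the technical heart of the argument, and is where I would expect an appendix to do the bulk of the work.
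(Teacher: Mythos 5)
Your overall strategy is the same as the paper's: generalize the qubit `square root decomposition', note that the objective then becomes $\sum_{j}\langle\phi|\sqrt{M_j}|\phi\rangle^2$, and finish with the identity $\sum_j \sqrt{M_j}=\big(\tr\sqrt{M_1}\big)\id$ plus Cauchy--Schwarz (the paper phrases this via deviation parameters $\delta_x$, but it is the same estimate, with the same equality condition of unbiasedness). That final part of your argument is correct and complete.

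However, there is a genuine gap, and you have named it yourself: you never construct the off-diagonal blocks, and without a verified feasible point the lower bound is not established. This is not a routine completion. In fact, your stated ansatz of taking the diagonal blocks to be exactly $K_{j,j}=\sqrt{M_j}\,\ketbra{\phi}{\phi}\,\sqrt{M_j}$ is not what works in the paper: to satisfy the column-sum constraint $\sum_x K_{x,j}\propto\id$ simultaneously with positivity, the paper's decomposition (Appendix C.1) takes rank-one off-diagonal blocks
\begin{equation*}
K_{x,j}=\ketbra{\phi_{xj}}{\phi_{xj}},\qquad \ket{\phi_{xj}}=\sqrt{M_x}\Big(\langle\phi|j\rangle\ket{x}-\langle\phi|x\rangle\ket{j}\Big)\quad (x\neq j)\,,
\end{equation*}
and \emph{augments} the diagonal blocks by the positive-semidefinite correction $\frac{c_{\neq x}\,\ve}{d}\big(\id_{\neq x}-\ketbra{\phi_{\neq x}}{\phi_{\neq x}}\big)$, where $\id_{\neq x}=\sum_{k\neq x}\ketbra{k}{k}$, $c_{\neq x}=\langle\phi|\id_{\neq x}|\phi\rangle$ and $\ket{\phi_{\neq x}}=\id_{\neq x}\ket{\phi}/\sqrt{c_{\neq x}}$. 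This correction annihilates $\ket{\phi}$, so the objective is still $\sum_x\langle\phi|\sqrt{M_x}|\phi\rangle^2$, but it is needed to make the column sums proportional to the identity; verifying the two sum constraints is then a direct (if tedious) computation in the eigenbasis $\{\ket{x}\}$, with positivity automatic since every block is a sum of manifestly positive-semidefinite terms. Until you exhibit such blocks and check the constraints, your argument only shows that \emph{if} a feasible decomposition with the desired diagonal behaviour exists, the bound follows — which is precisely the technical heart that the paper's appendix supplies.
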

In Appendix \ref{app: upper qudit pguess}, we present a decomposition $\{\mathcal{K}_j\}$ that generalizes the qubit `square root decomposition' from \eqref{eqn: qubit_decomp} to dimension $d$. Using $\{\mathcal{K}_j\}$, we can lower bound Eve's guessing probability for any state $\ket{\phi}$ by
\begin{equation}\label{eqn: qudit decomp bound}
\begin{aligned}
\pg \left( \ket{\phi}, \, \m_d \right) &\geq  \sum_{x} \langle{\phi | \sqrt{M_{x}} | \phi } \rangle^2 \geq  \frac{1}{d} \Big( \tr \sqrt{M_1} \Big)^2\,,    \end{aligned}
  \end{equation}
with equality if and only if $\ket{\phi}$ is unbiased to the basis $\{\ket{x}\}$, i.e. $\abs{ \langle \phi | x \rangle }^2 =1/d$ for all $x$. This proves \eqref{eqn: lemma3}.

\begin{lemma}\label{lemma: qudit upper}
Let $\m_d$ be the noisy projective measurement in dimension $d$, and let $\ket{\psi}= \frac{1}{\sqrt{d}}\left(1, ..., 1 \right)^{T} $. Then
\begin{equation}\label{eqn: lemma4}
  \pg \big( \ket{\psi}, \, \m  \big) \leq  \frac{1}{d} \Big( \tr \sqrt{M_1} \Big)^2 \,.
\end{equation}  

\end{lemma}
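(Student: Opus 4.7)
The plan is to upper-bound $\pg(\ket\psi,\m_d)$ via semidefinite-programming duality applied to the SDP~\eqref{eqn: pguess_main} with $\ket\phi=\ket\psi$. Introducing a Hermitian multiplier $\Gamma_x$ for each equality $\sum_j K_{x,j}=M_x$ and a traceless Hermitian multiplier $\Lambda_j$ for each homogeneous condition $\sum_x d K_{x,j}=\id\sum_x \tr K_{x,j}$, a short Lagrangian calculation gives the dual
\[
\min_{\{\Gamma_x,\Lambda_j\}}\ \sum_x \tr(\Gamma_x M_x)\quad\text{s.t.}\quad \Lambda_j+\Gamma_x\ge 0\ (x\ne j),\ \ \Lambda_j+\Gamma_j\ge \ketbra{\psi}{\psi},\ \ \tr\Lambda_j=0.
\]
By weak duality, any feasible dual point furnishes an upper bound on $\pg(\ket\psi,\m_d)$.

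Next, I would exploit the $S_d$-permutation symmetry of the pair $(\m_d,\ket\psi)$: both the noisy projective measurement and the uniform state $\ket\psi$ are invariant under simultaneous permutations of the computational basis $\{\ket x\}$. One can therefore restrict attention to a covariant dual solution, which forces $\Gamma_x$ and $\Lambda_j$ to be built from the basis-covariant building blocks $\sqrt{M_x}$, $\id$ and $\ketbra{\psi}{\psi}$, leaving only a handful of scalar parameters. The complementary-slackness conditions with respect to the primal attack $K^*_{j,j}=\sqrt{M_j}\ketbra{\psi}{\psi}\sqrt{M_j}$, which already saturates Lemma~\ref{lemma: qudit lower} at $\tfrac{1}{d}(\tr\sqrt{M_1})^2$ for $\ket\phi=\ket\psi$, further pin down on which subspaces the PSD inequalities must be tight, namely on the range of $\sqrt{M_j}\ket\psi$.

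Having fixed the ansatz, I would then verify the three dual constraints and compute the objective $\sum_x \tr(\Gamma_x M_x)$. Using the identity $\bra\psi\sqrt{M_x}\ket\psi=\tr\sqrt{M_1}/d$ (a direct consequence of $\ket\psi$ being unbiased to $\{\ket x\}$), together with elementary traces of $\sqrt{M_x}$ against $M_x$, the dual objective should collapse exactly to $\tfrac{1}{d}(\tr\sqrt{M_1})^2$, completing the proof by weak duality.

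The main obstacle is the explicit construction of the dual witness. The PSD inequalities couple the outcome-labelled $\Gamma_x$ with the guess-labelled $\Lambda_j$, and the tight constraint at $x=j$ demands a $\ketbra{\psi}{\psi}$ contribution that must be balanced---via the $\sqrt{M_x}$-dependent parts of $\Gamma_x$---without spoiling $\Gamma_x+\Lambda_j\ge 0$ for $x\ne j$, all while the objective remains sharp rather than a loose overestimate. Tuning the scalar parameters simultaneously so that both families of PSD inequalities become tight precisely on the relevant eigenspaces is the delicate step.
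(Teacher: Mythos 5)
Your outline follows the same route as the paper's primary proof of this lemma: pass to the SDP dual (your $(\Gamma_x,\Lambda_j)$ is just a relabelling of the paper's $(Y_x,-G_j)$, and weak duality indeed suffices), exploit the permutation symmetry of $(\m_d,\ket{\psi})$, and use complementary slackness against the primal square-root decomposition to locate where the PSD constraints must be tight. However, as a proof it has a genuine gap, and you name it yourself: the dual witness is never constructed, and its feasibility is never verified. The entire content of the paper's argument lies exactly in that step. The paper exhibits explicit variables $Y_x=\frac{1}{d^2}\tr\sqrt{M_1}\,M_x^{-1/2}+T_x$ with an off-diagonal part $T_x\propto \ketbra{x}{\psi_{\neq x}}+\ketbra{\psi_{\neq x}}{x}$, and $G_j$ built from $\ketbra{\psi}{\psi}$, $\id_{\neq j}-\ketbra{\psi_{\neq j}}{\psi_{\neq j}}$ and $\sqrt{M_j}\ketbra{\psi}{\psi}\sqrt{M_j}$ with carefully tuned constants $\alpha,\beta,\gamma$; checking $Y_x-\delta_{xj}\ketbra{\psi}{\psi}-G_j\geq 0$ is then done by a rank-reduction argument (slackness against $K_{x,j}$ plus orthogonality to $\Pi_{\neq x,j}$ shows the operator has rank at most one, resp.\ two) followed by a trace, resp.\ determinant, computation. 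None of this is supplied, and the claim that the objective ``should collapse'' to $\frac{1}{d}(\tr\sqrt{M_1})^2$ is asserted rather than shown.

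A secondary issue is that your covariant ansatz, as stated, is too small. The commutant structure forced by the stabilizer of $x$ under $S_d$ contains not only $\id$, $\sqrt{M_x}$ (equivalently diagonal blocks $\ketbra{x}{x}$, $\id_{\neq x}$) and $\ketbra{\psi}{\psi}$, but also the independent blocks $\ketbra{\psi_{\neq x}}{\psi_{\neq x}}$ and the cross term $\ketbra{x}{\psi_{\neq x}}+\ketbra{\psi_{\neq x}}{x}$ with a free coefficient; the paper's feasible witness genuinely uses these extra directions (the $T_x$ and $\alpha$-terms), so restricting to the span of $\{\id,\sqrt{M_x},\ketbra{\psi}{\psi}\}$ would likely leave no feasible point attaining the bound. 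If you prefer to avoid hunting for a dual witness altogether, note that the paper's alternative proof works entirely in the primal: average an optimal decomposition over permutations, use the resulting symmetry to reduce the $K_{x,j}$ to a few scalars, and maximize explicitly --- that route replaces the delicate PSD verification by an elementary one-parameter optimization.
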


 We prove this lemma in Appendix \ref{app: upper qudit pguess} by finding feasible variables for the dual version of the semidefinite program \eqref{eqn: pguess_main} that achieve the bound \eqref{eqn: lemma4}, with an alternative proof in Appendix \ref{app: upper qudit pguess perm} that exploits the permutation symmetry of $\m_d$.  
 Combining Lemmas \ref{lemma: qudit lower} and \ref{lemma: qudit upper}, we prove Theorem \ref{thm: qudit noisy}. Note that a state $\ket{\phi}$ that is unbiased to $\{\ket{x}\}$ gives a uniform distribution of outcomes, so maximal intrinsic \emph{and} statistical randomness always co-occur for the noisy projective measurement. From Lemma \ref{lemma: qudit upper}, we find that the `square root decomposition' $\{\mathcal{K}_j\}$ is optimal for the unbiased state $\ket{\psi}$. When $d=2$, this decomposition has an intuitive form as both the subnormalized POVMs $\mathcal{K}_j=\{K_{1, j}, \, K_{2, j} \}$ are proportional to a rank-one projective measurement, as shown in Figure \ref{fig: noisy_qubit_decomp}.
\begin{figure*}
\begin{subfigure}{0.49\textwidth}
\centering
    \begin{tikzpicture}[line cap=round, line join=round, >=Triangle]
     \centering
    \begin{scope}[scale=1]
  \clip(-2.9,-2.6) rectangle (2.8,2.8);
  \draw(0,0) circle (2cm);
  \draw (-2,0) node[anchor=east] {\small{$\ket{{1}}$}};
  \draw (2,0) node[anchor=west] {\small{$\ket{{2}}$}};
  \draw (0,2) node[anchor=south] {\small{$ {|\psi\rangle}$}};
  \draw (0,-2) node[anchor=north] {\small{$|{\psi^{\perp}}\rangle$}};
\draw[dashed] (0,0)-- (0, 1.7);
\draw [->] (0,0)-- (1.05357, 1.7);
  \draw [->] (0,0)-- (-1.05357, 1.7);
  \draw [dotted] (-1.05357, 1.7)-- (1.05357, 1.7);
\draw [fill] (0, 1.7) circle (1.5pt);
\draw[dashed] (0,0)-- (1.7, 0);
   \draw[dashed] (0,0)-- (-1.7, 0);
\draw [->] (0,0)-- (1.7, 1.05357);
  \draw [->] (0,0)-- (1.7, -1.05357);
  \draw [->] (0,0)-- (-1.7, 1.05357);
  \draw [->] (0,0)-- (-1.7, -1.05357);
  \draw [dotted] (1.7, -1.05357)-- (1.7, 1.05357);
  \draw [dotted] (-1.7, -1.05357)-- (-1.7, 1.05357);
\draw [fill] (1.7,0) circle (1.5pt);
  \draw [fill] (-1.7,0) circle (1.5pt);
\end{scope}
\end{tikzpicture}
\caption{$\ve= 0.15$}
\label{subfig: mixed noise TikZ}
\end{subfigure}
\begin{subfigure}{0.49\textwidth}
\centering
    \begin{tikzpicture}[line cap=round, line join=round, >=Triangle]
     \centering
    \begin{scope}[scale=1]
  \clip(-2.9,-2.6) rectangle (2.8,2.8);
  \draw(0,0) circle (2cm);
  \draw[dashed] (0,0)-- (1.4, 0);
   \draw[dashed] (0,0)-- (-1.4, 0);
\draw [->] (0,0)-- (1.4,1.4);
  \draw [->] (0,0)-- (1.4,-1.4);
  \draw [->] (0,0)-- (-1.4,1.4);
  \draw [->] (0,0)-- (-1.4,-1.4);
  \draw [dotted] (1.4,-1.4)-- (1.4,1.4);
  \draw [dotted] (-1.4,-1.4)-- (-1.4,1.4);
  \draw (-2,0) node[anchor=east] {\small{$\ket{{1}}$}};
  \draw (2,0) node[anchor=west] {\small{$\ket{{2}}$}};
  \draw (0,2) node[anchor=south] {\small{$ {|\psi\rangle}$}};
  \draw (0,-2) node[anchor=north] {\small{$|{\psi^{\perp}}\rangle$}};
  \draw [fill] (1.4,0) circle (1.5pt);
  \draw [fill] (-1.4,0) circle (1.5pt);

  \draw [fill] (0, 1.4) circle (1.5pt);
  \draw[dashed] (0,0)-- (0, 1.4);
\draw [->] (0,0)-- (1.4, 1.4);
  \draw [->] (0,0)-- (-1.4, 1.4);
  \draw [dotted] (-1.4, 1.4)-- (1.4, 1.4);
\end{scope}
\end{tikzpicture}
\caption{$\ve= \ve^{*}= 1- 1/\sqrt{2}$.}
\label{subfig: mixed noise TikZ ep*}
\end{subfigure}

\caption{\label{fig:wide} \emph{Joint decomposition of state and measurement}. A schematic of a possible attack by Eve when Alice holds a noisy state $\rho_\psi$ and a noisy projective measurement $\m_2=\{M_1, \, M_2\}$, each with noise parameter $\ve$. The filled circles represent $\rho_{\psi}$ and $\{M_1, \, M_2\}$, while the arrows represent Eve's decompositions $\ket{\varphi_{i, \lambda}}$ and $\{N_{1, j, \lambda}, \, N_{2, j, \lambda}\}$}.
\end{figure*}
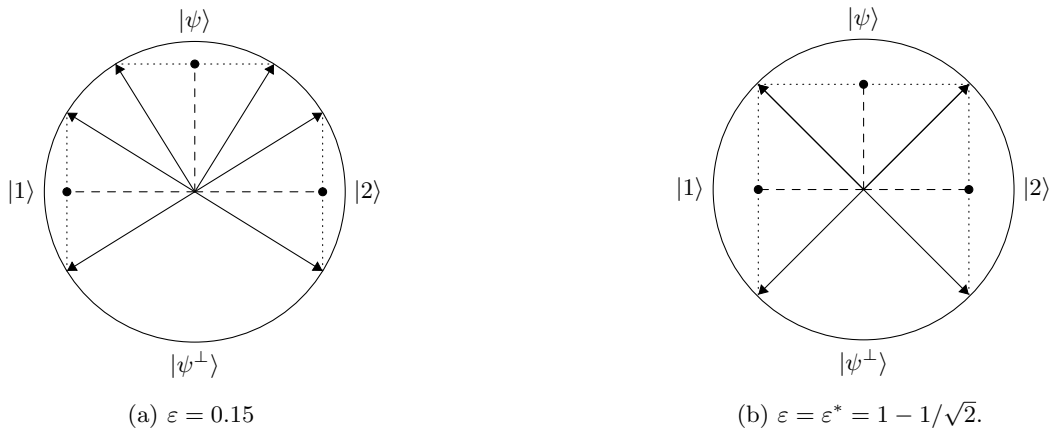

\textit{ Other entropies}.---
We have solved Eve's guessing probability for the noisy projective measurement $\m_d$, minimized over all pure states. In terms of the conditional min-entropy, we have
\begin{equation}\label{eqn: Hmin bound}
 \hmins \left(X | E, \, \m_d \right) = \log d - \log \Big(  \tr \sqrt{M_1}\Big)^2\,.  
\end{equation}
From \cite{Meng_2024}, we find that $\pgs \left( \m_d\right)$ is equal to the optimal guessing probability $\pgs \left( \rho_\psi \right)$ of the analogous noisy pure state $\rho_\psi$, 
\begin{equation}
    \rho_{\psi} = \left( 1 - \ve \right) \ketbra{\psi}{\psi} + \frac{\ve}{d}\id\,,
\end{equation}
when minimized over all measurements in orthogonal bases.
This begs the question: Is the maximal intrinsic randomness of $\m_d$ and $\rho_\psi$ equivalent when characterized by \emph{any} conditional entropy? 

As a first step towards solving this problem, we present upper bounds for the conditional von Neumann and max-entropies of $\m_d$, given the unbiased state $\ket{\psi}$. The conditional von Neumann entropy is relevant in the asymptotic limit of many measurement rounds \cite{Tomamichel_2009, Dupuis_2020, dai2023intrinsic}, while the conditional max-entropy has an operational interpretation as the security of $X$ when used as a secret key \cite{Konig_2009}. In Appendix \ref{app: dilation}, we use the `square root decomposition' $\{\mathcal{K}_j\}$ from Lemma \ref{lemma: qudit lower} to write a generalized Naimark dilation for $\m_d$ that maximizes Eve's guessing probability when Alice chooses $\ket{\psi}$. Since we allow Eve to minimize the conditional entropy over all possible dilations, the value arising from this dilation upper bounds the optimal one.   

Concretely, for the conditional von Neumann entropy, we find
\begin{equation}\label{eqn: vonN bound}
\begin{aligned}
 H \left(X | E, \,\ket{\psi}, \, \m_d \right) \leq  H_2 &\left( \pgs \big(  \m_d  \big) \right) 
 \\
  + &\left( 1 - \pgs \big(  \m_d  \big) \right) \log \left( d-1\right)\,,    
\end{aligned}
\end{equation}
where $H_2 \left(x \right)= - x \log x - \left( 1-x\right) \log \left( 1-x\right)$ is the binary Shannon entropy. This quantity is generally larger than the maximal conditional von Neumann entropy of the state $\rho_{\psi}$, $H^* \left( \rho_{\psi} \right) = \log d - S \left( \rho_\psi\right)$, where $S (\sigma)= - \tr \big( \sigma \log \sigma \big) $ is the von Neumann entropy. In the qubit case, the right-hand side of \eqref{eqn: vonN bound} is the coherence of formation of the state $\rho_{\psi}$ with respect to the unbiased basis $\{\ket{x}\}$ \cite{Yuan_2015}. For the conditional max-entropy, meanwhile, we find 
\begin{equation}\label{eqn: hmax bound}
    \hmax \left( X | E, \, \ket{\psi}, \, \m_d \right) \leq \log \left( d- (d-1)\ve \right) = \hmaxs \left( \rho_{\psi} \right)\,.
\end{equation}
We leave open several questions: (i) whether, as we might expect, the unbiased state $\ket{\psi}$ is the optimal choice for Alice to maximize \emph{any} conditional entropy; (ii) whether our `square root decomposition' $\{\mathcal{K}_j\}$ is indeed optimal for Eve, such that the bounds \eqref{eqn: vonN bound} and \eqref{eqn: hmax bound} are saturated; and, if not, (iii) whether there exists a universal dilation with which Eve can minimize all the conditional entropies of $\m_d$ at once. Note that if conditions (i) and (ii) were true, an isotopic state would have the same max- and min-entropy as an isotropic measurement, for the same noise value, but a different von Neumann entropy.

\textit{Coarse-graining}.\textbf{---}Imagine that, when $d$ is even and greater than 2, rather than perform the POVM $\m_d$ directly, Alice amalgamates the first half of its outcomes into a single outcome, 1, neglecting to distinguish between them, and acts similarly on the second half, which form the outcome 2. This coarse-graining process yields a new two-outcome POVM, $\hat{\m}$. From Corollary \ref{corr: qudit 2-outcome}, we find that $\pgs \big( \hat{\m}\big) \leq \pgs \left( \m_2 \right)$. We prove in Appendix \ref{app: coarse grain} that this inequality is saturated, so the higher dimension of $\hat{\m}$ offers Alice no extra randomness compared to the qubit POVM $\m_2$. Moreover, this case study provides an insight into the eavesdropper's behaviour. We find that it is not necessarily optimal for Eve to follow Alice's suit and coarse-grain her best attack for the original POVM $\m_d$; at least when Alice uses the unbiased state $\ket{\psi}$, Eve should instead `inflate' her optimal attack for $\m_2$ to fit the higher dimension $d$.

\textit{Noisy states and measurements}.--- Having solved for the guessing probability of a noisy projective measurement optimized over all pure states, we now ask, what changes if both the state and the measurement are noisy? Of course this comparison is especially meaningful when dealing with different realizations of the same measured probability distribution. In what follows, we study this question in the case where Alice holds the state $\rho_\psi$ and the measurement $\m_2$, both parameterized by $\ve$. 

We can consider an adversarial model where Eve can \emph{simultaneously} decompose $\rho_\psi$ into states $\ket{\varphi_{i, \lambda}}$ and $\m_2$ into POVMs $\mathcal{N}_{j, \lambda}=\{N_{1, j, \lambda}, \, N_{2, j, \lambda}\}$, according to some joint probability distribution $p(i, j, \lambda )$, where $\lambda$ is a classical hidden variable (see Figure \ref{subfig: mixed noise TikZ} for a schematic of such a decomposition). When the noise reaches $\ve = \ve^{*}$, where $\ve^{*}= 1 - 1/\sqrt{2}$, Eve can achieve perfect guessing probability, as she can choose rank-one measurements $\mathcal{N}_{j, \lambda}=\{N_{1, j, \lambda}, \, N_{2, j, \lambda}\}$ that align with the states $\ket{\varphi_{i, j}}$, as shown in Figure \ref{subfig: mixed noise TikZ ep*}. Further, when $\ve > \ve^{*}$, Eve can maintain perfection while faithfully reproducing $\rho_\psi$ and $\m_2$ if she randomizes her attack using $\lambda$: for a suitable distribution $p(\lambda)$, when $\lambda=1$, she continues to use the decomposition shown in Fig. \ref{subfig: mixed noise TikZ ep*}, while for $\lambda=2$, she employs its `mirror' decomposition, with all the vectors of Fig. \ref{subfig: mixed noise TikZ ep*} flipped across the horizontal axis (see Appendix \ref{app: mixed noise} for details). 
We find then that in this scenario, there is no intrinsic randomness whatsoever when $\ve \geq \ve^{*}$. This contrasts sharply with the case where $\m_2$ acts on the pure state $\ket{\psi}$ (or where the basis $\{\ket{x}\}$ acts on $\rho_\psi$ \cite{Meng_2024}), as then Eve can only achieve perfect guessing probability if the measurement (state) is maximally mixed.

\begin{figure}[h]
    \centering
    \begin{tikzpicture}
\begin{scope}[scale=0.75]
    \begin{axis}[
        axis lines=middle,
        xlabel={$\delta$},
        ylabel={$\pg$},
        xlabel style={at={(ticklabel* cs:1.02)}, anchor=west},
        ylabel style={at={(ticklabel* cs:1.02)}, anchor=south},
        ymin=0, ymax=1,
        xmin=0, xmax=1,
        xtick={0,0.5,1},
        ytick={0,0.5,1},
        domain=0:1,
        samples=200,
        legend pos=south west,
        legend style={font=\small}]
        \addplot[
            thick,
            black
        ] {x < 0.5 ? 0.5 * (1 + 2 * sqrt(x) * sqrt(1 - x)) : 1};
        \addlegendentry{shared noise lower bound};
        \addplot[
            dashed,
            black
        ] {0.5 * (1 + sqrt(x * (2 - x)))};
        \addlegendentry{single noise};
    \end{axis}
\end{scope}
\end{tikzpicture}
    \caption{\emph{Shared versus single noise}. Eve's optimal guessing probability for the single noise case (dashed) and a lower bound on her guessing probability in the shared noise case (undashed).}
\label{subfig: noise plot}
\end{figure}
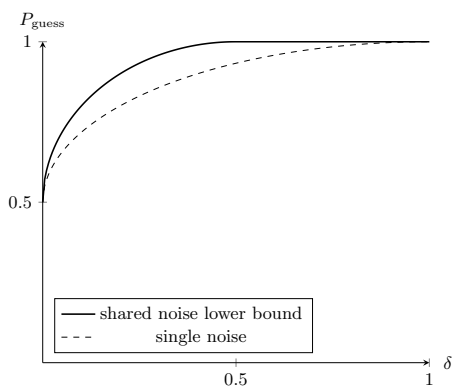

To compare these `single' and `shared' noise scenarios, we introduce a  parameter to quantify the total noise in Alice's measuring devices, $\delta= \ve \left( 2- \ve\right)$, so chosen because it preserves the Born rule between the two scenarios, i.e.
\begin{equation}
   \tr \big( \Delta_{\ve} \left( \rho \right) \, \Delta_{\ve} \left( M_x \right)\big)= \tr \big( \Delta_{\delta} \left( \rho \right) \, M_x \big) = \tr \big( \rho \,\Delta_{\delta} \left( M_x \right) \big) \,
\end{equation}
for all states $\rho$ and all POVMS $\m = \{M_x\}_x$ that satisfy $\tr M_x =1$ for all $x$. Notice that, for all $\ve$, $p(1)=\tr (\Delta_{\ve} \left( \ket{\psi}\bra{\psi} \right) \, \Delta_{\ve} \left( M_1 \right))=1/2$, so there is maximal statistical randomness. We find a lower bound on Eve's guessing probability in the shared noise case that is larger, for all values of $\delta$, than her guessing probability in the single noise case. We compare these values in Figure \ref{subfig: noise plot}, where the plateau of perfect probability emerges at $\delta=1/2$.

\textit{Discussion}.--- In this work, we solved the maximal intrinsic randomness of two classes of noisy quantum measurements, both of practical relevance. Notably, the maximal intrinsic randomness of a noisy projective measurement is equal to that of a noisy pure state (optimized over projective measurements), so when measuring a pure state in an unbiased orthogonal basis, we are free to consider either the state \emph{or} the measuring device to be the noisy one. However, our qubit case study shows that it is not equivalent to consider partial noise shared between the devices, as this scenario may give an eavesdropper significantly more guessing power. Our results may prove relevant in bounding the functionality of device-dependent QRNGs. Furthermore, we expect they will pave the way to better understand the generalized Naimark dilation in an adversarial scenario and to solve the maximal intrinsic randomness of more general quantum measurements.
\\

\section*{Acknowledgments}
We thank Victoria J. Wright for her early contributions to this work and Máté Farkás and Leonardo Zambrano for helpful discussions. FC and AA acknowledge funding from the Government of Spain (Severo Ochoa CEX2019-000910-S, Torres Quevedo PTQ2021-011870, NextGenerationEU PRTR-C17.I1 and FUNQIP), the European Union (QSNP 101114043 and NEQST 101080086 and Quantera Veriqtas), Fundació Cellex, Fundació Mir-Puig, Generalitat de Catalunya (CERCA program), ERC AdG CERQUTE, the AXA Chair in Quantum Informacion Science and Ayuda PRE2022-101448 financiada por MCIN/AEI/ 10.13039/501100011033 y por el FSE+. M. M. and M. S. were supported by the European Union's Horizon 2020 research and innovation programme under the Marie Skłodowska-Curie project `AppQInfo' No. 956071. M. S. was supported by the National Science Centre `Sonata Bis' Project No. 2019/34/E/ST2/00273, and by the QuantERA II Programme which has received funding from the European Union's Horizon 2020 research and innovation programme under Grant Agreement No. 101017733, Project `PhoMemtor' No. 2021/03/Y/ST2/00177.

\begin{filecontents}{newrefs.bib}
@article{Meng_2024,
  title = {Maximal intrinsic randomness of a quantum state},
  author = {Meng, Shuyang and Curran, Fionnuala and Senno, Gabriel and Wright, Victoria J. and Farkas, M\'at\'e and Scarani, Valerio and Ac\'{\i}n, Antonio},
  journal = {Phys. Rev. A},
  volume = {110},
  issue = {1},
  pages = {L010403},
  numpages = {6},
  year = {2024},
  month = {Jul},
  publisher = {American Physical Society},
  doi = {10.1103/PhysRevA.110.L010403},
  url = {https://link.aps.org/doi/10.1103/PhysRevA.110.L010403}
}

@article{Senno_2023,
   title={Quantifying the Intrinsic Randomness of Quantum Measurements},
   author={Senno, Gabriel and Strohm, Thomas and Acín, Antonio},
   journal={Physical Review Letters},
   volume={131},
   ISSN={1079-7114},
   number={13},
    year={2023},
    month= {Sep},
    publisher={American Physical Society (APS)},
    doi={10.1103/physrevlett.131.130202},
   url={http://dx.doi.org/10.1103/PhysRevLett.131.130202},
   }

@article{Yuan_2015,
   title={Intrinsic randomness as a measure of quantum coherence},
   volume={92},
   ISSN={1094-1622},
   url={http://dx.doi.org/10.1103/PhysRevA.92.022124},
   DOI={10.1103/physreva.92.022124},
   number={2},
   journal={Physical Review A},
   publisher={American Physical Society (APS)},
   author={Yuan, Xiao and Zhou, Hongyi and Cao, Zhu and Ma, Xiongfeng},
   year={2015},
   month=aug }

@article{Konig_2009,
   title={The Operational Meaning of Min- and Max-Entropy},
   volume={55},
   ISSN={0018-9448},
   url={http://dx.doi.org/10.1109/TIT.2009.2025545},
   DOI={10.1109/tit.2009.2025545},
   number={9},
   journal={IEEE Transactions on Information Theory},
   publisher={Institute of Electrical and Electronics Engineers (IEEE)},
   author={Konig, Robert and Renner, Renato and Schaffner, Christian},
   year={2009},
   month=sep, pages={4337–4347} }

@misc{frauchiger_2013,
      title={True randomness from realistic quantum devices}, 
      author={Daniela Frauchiger and Renato Renner and Matthias Troyer},
      year={2013},
      eprint={1311.4547},
      archivePrefix={arXiv},
      primaryClass={quant-ph}
}

@article{Mannalatha_2023,
   title={A comprehensive review of quantum random number generators: concepts, classification and the origin of randomness},
   volume={22},
   ISSN={1573-1332},
   url={http://dx.doi.org/10.1007/s11128-023-04175-y},
   DOI={10.1007/s11128-023-04175-y},
   number={12},
   journal={Quantum Information Processing},
   publisher={Springer Science and Business Media LLC},
   author={Mannalatha, Vaisakh and Mishra, Sandeep and Pathak, Anirban},
   year={2023},
   month=dec }

@article{Herrero_Collantes_2017,
   title={Quantum random number generators},
   volume={89},
   ISSN={1539-0756},
   url={http://dx.doi.org/10.1103/RevModPhys.89.015004},
   DOI={10.1103/revmodphys.89.015004},
   number={1},
   journal={Reviews of Modern Physics},
   publisher={American Physical Society (APS)},
   author={Herrero-Collantes, Miguel and Garcia-Escartin, Juan Carlos},
   year={2017},
   month=feb }

@article{Law_2014,
   title={Quantum randomness extraction for various levels of characterization of the devices},
   volume={47},
   ISSN={1751-8121},
   url={http://dx.doi.org/10.1088/1751-8113/47/42/424028},
   DOI={10.1088/1751-8113/47/42/424028},
   number={42},
   journal={Journal of Physics A: Mathematical and Theoretical},
   publisher={IOP Publishing},
   author={Law, Yun Zhi and Thinh, Le Phuc and Bancal, Jean-Daniel and Scarani, Valerio},
   year={2014},
   month=oct, pages={424028} }

@misc{anco2024securerandomnessquantumstate,
      title={How much secure randomness is in a quantum state?}, 
      author={Kriss Gutierrez Anco and Tristan Nemoz and Peter Brown},
      year={2024},
      eprint={2410.16447},
      archivePrefix={arXiv},
      primaryClass={quant-ph},
      url={https://arxiv.org/abs/2410.16447}, 
}

@article{dai2023intrinsic,
  title = {Intrinsic randomness under general quantum measurements},
  author = {Dai, Hao and Chen, Boyang and Zhang, Xingjian and Ma, Xiongfeng},
  journal = {Phys. Rev. Res.},
  volume = {5},
  issue = {3},
  pages = {033081},
  numpages = {16},
  year = {2023},
  month = {Aug},
  publisher = {American Physical Society},
  doi = {10.1103/PhysRevResearch.5.033081},
  url = {https://link.aps.org/doi/10.1103/PhysRevResearch.5.033081}
}

@article{D_Ariano_2005,
	doi = {10.1088/0305-4470/38/26/010},
url = {https://dx.doi.org/10.1088/0305-4470/38/26/010},
year = 2005,
month = {jun},
publisher = {{IOP} Publishing},
volume = {38},
number = {26},
pages = {5979--5991},
author = {Giacomo Mauro D{\textquotesingle}Ariano and Paoloplacido Lo Presti and Paolo Perinotti},
title = {Classical randomness in quantum measurements},
journal = {Journal of Physics A: Mathematical and General}
}

@book{nielsen00,
  added-at = {2010-06-22T17:54:31.000+0200},
  author = {Nielsen, Michael A. and Chuang, Isaac L.},
  biburl = {https://www.bibsonomy.org/bibtex/222bf6f3de23faf420214d738924ac21b/mcclung},
  interhash = {140ce4be72c2994b45286dbaa98d0bd3},
  intrahash = {22bf6f3de23faf420214d738924ac21b},
  keywords = {computing information quantum},
  publisher = {Cambridge University Press},
  timestamp = {2010-06-22T17:54:31.000+0200},
  title = {Quantum Computation and Quantum Information},
  year = 2000
}

@misc{renner2006securityquantumkeydistribution,
      title={Security of Quantum Key Distribution}, 
      author={Renato Renner},
      year={2006},
      eprint={quant-ph/0512258},
      archivePrefix={arXiv},
      primaryClass={quant-ph},
      url={https://arxiv.org/abs/quant-ph/0512258}, 
}

@article{Konig_2011,
   title={Sampling of Min-Entropy Relative to Quantum Knowledge},
   volume={57},
   ISSN={1557-9654},
   url={http://dx.doi.org/10.1109/TIT.2011.2146730},
   DOI={10.1109/tit.2011.2146730},
   number={7},
   journal={IEEE Transactions on Information Theory},
   publisher={Institute of Electrical and Electronics Engineers (IEEE)},
   author={Konig, Robert and Renner, Renato},
   year={2011},
   month=jul, pages={4760–4787} }

@article{Peres_1990, 
title={Neumark’s theorem and quantum inseparability}, 
volume={20}, 
DOI={10.1007/bf01883517}, 
number={12}, 
journal={Foundations of Physics}, 
author={Peres, Asher}, 
year={1990}, 
month={Dec}, 
pages={1441–1453}
}

@article{Guryanova_2020,
   title={Ideal Projective Measurements Have Infinite Resource Costs},
   volume={4},
   ISSN={2521-327X},
   url={http://dx.doi.org/10.22331/q-2020-01-13-222},
   DOI={10.22331/q-2020-01-13-222},
   journal={Quantum},
   publisher={Verein zur Forderung des Open Access Publizierens in den Quantenwissenschaften},
   author={Guryanova, Yelena and Friis, Nicolai and Huber, Marcus},
   year={2020},
   month=jan, pages={222} }

@article{RevModPhys.94.025008,
  title = {Security in quantum cryptography},
  author = {Portmann, Christopher and Renner, Renato},
  journal = {Rev. Mod. Phys.},
  volume = {94},
  issue = {2},
  pages = {025008},
  numpages = {56},
  year = {2022},
  month = {Jun},
  publisher = {American Physical Society},
  doi = {10.1103/RevModPhys.94.025008},
  url = {https://link.aps.org/doi/10.1103/RevModPhys.94.025008}
}

@article{Tomamichel_2009,
   title={A Fully Quantum Asymptotic Equipartition Property},
   volume={55},
   ISSN={1557-9654},
   url={http://dx.doi.org/10.1109/TIT.2009.2032797},
   DOI={10.1109/tit.2009.2032797},
   number={12},
   journal={IEEE Transactions on Information Theory},
   publisher={Institute of Electrical and Electronics Engineers (IEEE)},
   author={Tomamichel, Marco and Colbeck, Roger and Renner, Renato},
   year={2009},
   month=dec, pages={5840–5847} }

@article{Dupuis_2020,
   title={Entropy Accumulation},
   volume={379},
   ISSN={1432-0916},
   url={http://dx.doi.org/10.1007/s00220-020-03839-5},
   DOI={10.1007/s00220-020-03839-5},
   number={3},
   journal={Communications in Mathematical Physics},
   publisher={Springer Science and Business Media LLC},
   author={Dupuis, Frédéric and Fawzi, Omar and Renner, Renato},
   year={2020},
   month=sep, pages={867–913} }

@article{Molina_07,
   title={Twisted photons},
   volume={3},
   ISSN={1745-2481},
   url={https://doi.org/10.1038/nphys607},
   DOI={10.1038/nphys607},
   number={5},
   journal={Nature Physics},
   publisher={},
   author={Molina-Terriza, Gabriel and Torres, Juan P. and Torner, Lluis},
   year={2007},
   month=may }

   @book{Skrzypczyk_2023,
author = {Skrzypczyk, Paul and Cavalcanti, Daniel},
title = {Semidefinite Programming in Quantum Information Science},
publisher = {IOP Publishing},
year = {2023},
series = {2053-2563},
isbn = {978-0-7503-3343-6},
url = {https://dx.doi.org/10.1088/978-0-7503-3343-6},
doi = {10.1088/978-0-7503-3343-6}
}

@book{Nesterov1994,
author = {Nesterov, Yurii and Nemirovskii, Arkadii},
title = {Interior-Point Polynomial Algorithms in Convex Programming},
publisher = {Society for Industrial and Applied Mathematics},
year = {1994},
doi = {10.1137/1.9781611970791},
address = {},
edition   = {},
URL = {https://epubs.siam.org/doi/abs/10.1137/1.9781611970791},
eprint = {https://epubs.siam.org/doi/pdf/10.1137/1.9781611970791}
}

@book{Boyd_2004,
  author = {Boyd, Stephen and Vandenberghe, Lieven},
  isbn = {0521833787},
  publisher = {Cambridge University Press},
  title = {Convex Optimization},
  year = 2004
}

@book{Tomamichel_2016,
   title={Quantum Information Processing with Finite Resources},
   ISBN={9783319218915},
   ISSN={2197-1765},
   url={http://dx.doi.org/10.1007/978-3-319-21891-5},
   DOI={10.1007/978-3-319-21891-5},
   journal={SpringerBriefs in Mathematical Physics},
   publisher={Springer International Publishing},
   author={Tomamichel, Marco},
   year={2016} }

\end{filecontents}

\bibliography{newrefs}

\begin{widetext}

\newpage    

\appendix
\setcounter{lemma}{0}
\setcounter{corollary}{0}
\setcounter{figure}{0}

\section{The guessing probability }
\label{app: SDP}

\subsection{The primal problem }\label{app: primal}
For a fixed POVM $\m_S=\{M_{x, S}\}_{x}$ in finite dimension $d$, we wish to find the maximum randomness that can be generated from it, as quantified by the probability that an eavesdropper, Eve, correctly guesses the outcomes of the POVM when an experimentalist, Alice, applies it to the optimal state $\rho_S$, with which Eve may share quantum correlations. To formulate this guessing probability, we use a generalized Naimark dilation, introduced in \cite{frauchiger_2013}, which differs from a typical Naimark dilation as the auxiliary state may be mixed, and therefore correlated with Eve. 
We use as our figure of merit the following `quantum' guessing probability for Eve, from \cite{Senno_2023},
\begin{equation}\label{eqnA: pguess_quantum}
\begin{aligned}
\pgq \big( \rho_S,\, \m_S \big)= \; &  \max_{\{\Pi_{x, SA}\}_{x}, \, \ket{\phi}_{SAE}, \, \{M_{x, E}\}_{x}}
& & \sum_{x}\bra{\phi}\Pi_{x, SA}\otimes M_{x, E}\ket{\phi}_{SAE} \\
& \qquad \quad \text{subject to}
& & \tr_{AE} \left( \ketbra{\phi}{\phi}_{SAE} \right) =\rho_S \\
& 
& & \tr_A \Big( \Pi_{x, SA}\big(\id_S\otimes \tr_{SE} \left( \ketbra{\phi}{\phi}_{SAE} \right) \big) \Big)=M_{x, S} \; \text{ for all } \; x \\
& 
& & \langle{\phi|\Pi_{x, SA}\otimes \id_E|\phi} \rangle_{SAE}=\tr\big( M_{x, S} \, \rho_S \big)\,.
\end{aligned}
\end{equation}
We want to determine the minimum of this guessing probability over all of Alice's possible states $\rho_S$, i.e. to solve
\begin{equation}\label{eqnA: pguess_quantum_opt}
    {\pgsq} \big( \m_S \big) = \min_{\rho_{S}}\, \pgq \big( \rho_S, \,\m_S  \big)\,.
\end{equation}
The following lemma tells us that the optimal state $\rho_S$ in \eqref{eqnA: pguess_quantum_opt} can be taken pure.
\begin{lemma}
    For every POVM $\m_S$, the function $\pgq \big( \rho_S,\, \m_S \big)$ is concave in the variable $\rho_S$.
\end{lemma}
\begin{proof}
Let $\rho_S= \lambda \rho^{0}_{S}+ \left( 1 - \lambda \right) \rho^{1}_{S}$, where $0 \leq \lambda \leq 1$, and let $\left\{ \{ \Pi^{y}_{x, SA}\}_x\,, \; \ket{\phi^{y}}_{SAE}\,, \; \{M^{y}_{x, E}\}_x \right\}$ be an optimal solution to $\pgq \big( \rho^{y}_S,\, \m_S \big)$, such that $\tr_{AE} \ketbra{\phi^{y}}{\phi^{y}}_{SAE}= \rho^{y}_S$. Define 
\begin{equation}\label{eqnA: decomp mixed}
\begin{aligned}
    \rho_{SAE F_1 F_2} &= \lambda \ketbra{\phi^{0}}{\phi^{0}}_{SAE} \otimes \ketbra{0 \, 0}{0 \, 0}_{F_1 F_2} + \left( 1 - \lambda\right) \ketbra{\phi^{1}}{\phi^{1}}_{SAE} \otimes \ketbra{1\, 1}{1 \,1}_{F_1 F_2}\,,
    \\
    \Pi_{x, \,SAF_1} &= \Pi^{0}_{x, \, SA} \otimes \ketbra{0}{0}_{F_1} + \Pi^{1}_{x,  SA} \otimes \ketbra{1}{1}_{F_1}\,,
    \\
    M_{x, E F_2} &= M^{0}_{x, \, E} \otimes \ketbra{0}{0}_{F_2} + M^{1}_{x, E} \otimes \ketbra{1}{1}_{F_2}\,,
\end{aligned}    
\end{equation}
where $A F_1$ and $E F_2$ constitute the new auxiliary for the measurement and Eve's new system, respectively. For convenience, we write $\rho_{SAEF_1 F_2}$ as a mixed state in \eqref{eqnA: decomp mixed}, but we assume that Eve holds its purification. This decomposition is valid because 
\begin{align}
    \tr_{AEF_1 F_2} \rho_{SAEF_1 F_2} &= \lambda \rho^{0}_S + \left( 1 - \lambda \right) \rho^{1}_S = \rho_S\,, 
    \\
 \tr_{AF_1} \Big( \Pi_{x, SAF_1} \big( \id_{S} \otimes \tr_{SEF_2 } \rho_{SAEF_1 F_2}\big) \Big)  &= \lambda \tr_A \Big( \Pi^{0}_{x, SA}\big(\id_S\otimes \tr_{SE} \left( \ketbra{\phi^{0}}{\phi^{0}}_{SAE} \right) \big) \Big) \nonumber
 \\
 & \quad 
 + \left( 1 - \lambda \right) \tr_A \Big( \Pi^{1}_{x, SA}\big(\id_S \otimes \tr_{SE} \left( \ketbra{\phi^{1}}{\phi^{1}}_{SAE} \right) \big) \Big) = M_{x, S} \;\; \textnormal{for all} \;\; x\,, 
 \\
\tr \left( \Pi_{x, SA} \otimes \id_{EF_2} \; \rho_{SAEF_1 F_2} \right) &=  \lambda \langle{\phi^{0}|\Pi^{0}_{x, SA}\otimes \id_E|\phi^{0}} \rangle_{SAE} + \left( 1 - \lambda \right) \langle{\phi^{1}|\Pi^{1}_{x, SA}\otimes \id_E|\phi^{1}} \rangle_{SAE} \nonumber
\\
&= \tr\big( M_{x, S} \, \rho_S \big)\;\;\;\; \textnormal{for all} \;\; x\,.
 \end{align}
We aim to show that $\left\{ \{ \Pi_{x, SAF_1}\}_x\,, \; \rho_{SAEF_1 F_2}\,, \; \{M_{x, E F_2}\}_x \right\}$ is a solution to $\pgq \big( \rho_S,\, \m_S \big)$ such that
\begin{equation}
    \pgq \left( \rho_S,\, \m_S , \, \left\{ \{ \Pi_{x, SAF_1}\}_x\,, \; \rho_{SAEF_1 F_2}\,, \; \{M_{x, E F_2}\}_x \right\}\right) = \lambda  \pgq \big( \rho^{0}_S,\, \m_S \big) + \left( 1 - \lambda \right) \pgq \big( \rho^{1}_S,\, \m_S \big)\,.
\end{equation}
We have 
\begin{align}
&\pgq \left( \rho_S,\, \m_S , \, \left\{ \{ \Pi_{x, SAF_1}\}_x\,, \; \rho_{SAEF_1 F_2}\,, \; \{M_{x, E F_2}\}_x \right\} \right) = \sum_x \tr \Big( \Pi_{x, SAF_1} \otimes M_{x, EF_2} \; \rho_{SAEF_1 F_2}  \Big)
\\
&= \lambda \sum_{x}\bra{\phi^{0}}\Pi^{0}_{x, SA}\otimes M^{0}_{x, E}\ket{\phi^0}_{SAE}  +  \left( 1 - \lambda \right) \sum_{x}\bra{\phi^{1}}\Pi^{1}_{x, SA}\otimes M^{1}_{x, E}\ket{\phi^{1}}_{SAE} 
\\
&=  \lambda \pgq \big( \rho^{0}_S,\, \m_S \big) + \left( 1 - \lambda \right) \pgq \big( \rho^{1}_S,\, \m_S \big)\,.
\end{align}
Since 
\begin{equation}
    \pgq \big( \rho_S,\, \m_S \big) \geq \pgq \left( \rho_S,\, \m_S , \, \left\{ \{ \Pi_{x, SAF_1}\}_x\,, \; \rho_{SAEF_1 F_2}\,, \; \{M_{x, E F_2}\}_x \right\} \right)\,,
\end{equation}
we have proven that   
\begin{equation}
    \pgq \big( \rho_S,\, \m_S \big) \geq \lambda \pgq \big( \rho^{0}_S,\, \m_S \big) + \left( 1 - \lambda \right) \pgq \big( \rho^{1}_S,\, \m_S \big)\,
\end{equation}
for any convex decomposition of $\rho_{S}$, so the function $\pgq \big( \rho_S,\, \m_S \big)$ is concave in $\rho_S$ for any POVM $\m_S$.
\end{proof}
Since $\pgq \big( \rho_S,\, \m_S \big)$ is concave in $\rho_S$ for any POVM $\m_S$ and since the set of states $\{\rho_S\}$ is convex, for any fixed $\m_S$, the minimization of $\pgq \big( \rho_S,\, \m_S \big)$ over all $\rho_S$ must be achieved by a pure state $\rho_S = \ketbra{\phi}{\phi}_S$. When Alice chooses a pure state $\ket{\phi}_S$, the guessing probability \eqref{eqnA: pguess_quantum} reduces to
\begin{equation}\label{eqnA: pguess_quantum_pure}
\begin{aligned}
\pgq \big( \ket{\phi}_S,\, \m_S \big)= \; &  \max_{\{\Pi_{x, SA}\}_{x}, \, \ket{\varphi}_{AE}, \, \{M_{x, E}\}_{x}}
& & \sum_{x}\bra{\phi}\Pi_{x, SA}\otimes M_{x, E}\ket{\phi}_{SAE} 
\\
& \qquad \quad  \text{subject to}
& & \tr_A \Big( \Pi_{x, SA} \big( \id_S\otimes \tr_{E} \left( \ketbra{ \varphi}{\varphi}_{AE} \right) \big) \Big)=M_{x, S} \;  \text{ for all } \; x\,.
\end{aligned}
\end{equation}
Since $\ket{\phi}_S$ is pure, the guessing probability \eqref{eqnA: pguess_quantum_pure} reduces to the guessing probability of a classically correlated eavesdropper \cite{Senno_2023} who decomposes the POVM $\m_{S}= \{M_{x, S}\}_{x}$  into `sub'-POVMs $\mathcal{N}_{j, S}=\{N_{x, j, S}\}_x$ indexed by $j$, according to some probability distribution $\{p(j)\}$. We have the following definition for Eve's classical guessing probability,  
\begin{equation}\label{eqnA: pguess_classical_pure}
\begin{aligned}
\pgc \big( \ket{\phi}_S,\, \m_S \big)= \; &   \max_{\{p(j)\}, \, \{ \mathcal{N}_{j, S}\}}
& & \sum_{j}p(j) \max_x \bra{\phi} N_{x,j, S}\ket{\phi}_S
\\
& \;\;\; \text{subject to}
& &  N_{x, j, S} \geq 0 \; \text{for all} \; x, j 
\\
& 
& & \sum_{x} N_{x, j, S} = \id_S
\\
& 
& & \sum_{j} p(j)N_{x,j, S} = M_{x, S} \; \text{ for all } \; x\,. \\
\end{aligned}
\end{equation}
Our figure of merit is now the minimization
\begin{equation}\label{eqnA: pguess_optimal}
    \pgs \left( \m_{S} \right) = \min_{\ket{\phi}} \pg \big( \ket{\phi},\, \m_S  \big)\,.
\end{equation}
From here on, we drop the `classical' superscript C and the subscript $S$ denoting Alice's system. The optimization \eqref{eqnA: pguess_classical_pure} can be cast as a semidefinite programming (SDP) problem by defining the subnormalized decomposition $\{\mathcal{K}_j\}$ with elements
\begin{equation}
    K_{x, j} = p(j) N_{x, j}\,,
\end{equation}
as we have 
\begin{equation}\label{eqnA: pguess_unnorm}
\begin{aligned}
\pg \big( \ket{\phi},\, \m \big)= \; &   \quad \;  \max_{\{ \mathcal{K}_j\}}
& & \sum_{j} \max_x \bra{\phi} K_{x,j}\ket{\phi}
\\
& \;\;\; \text{subject to}
& &  K_{x, j} \geq 0  \; \text{ for all } \; x, j
\\
& 
& & \sum_{x}  d K_{x, j} =  \id \sum_{x}  \tr K_{x, j} \; \text{ for all } \; j 
\\
& 
& & \sum_{j} K_{x,j} = M_{x} \text{ for all } x\,. 
\end{aligned}
\end{equation}
To prove equivalence, note that we can recover \eqref{eqnA: pguess_classical_pure} from \eqref{eqnA: pguess_unnorm} by defining the variables
\begin{equation}
    p(j)= \frac{1}{d} \sum_{x} \tr K_{x, j}\,, \qquad N_{x, j} = \frac{1}{p(j)} K_{x, j}\,
\end{equation}
when $\sum_{x} \tr K_{x, j} \neq 0$. Furthermore, following an argument made in \cite{Law_2014}, we are free to consider the number of POVMs, $n$, in Eve's decomposition to be equal to the number of outcomes, $m$, in $\m$. To see this, imagine that $n > m$ and that the POVMs $\mathcal{K}_k=\{K_{x, k}\}_x$ and $\mathcal{K}_l=\{K_{x, l}\}_x$ correspond to the same most probable outcome, $y$, i.e.
\begin{equation}
    y = \argmax_{x} \; \langle \phi | K_{x, k} | \phi \rangle = \argmax_{x} \; \langle \phi | K_{x, l} | \phi \rangle\,.
\end{equation}
Eve could then combine these to form a new POVM $\mathcal{K}'_j=\{ {K}'_{x, j}   \}_{x}$, with
\begin{equation}
    \{ {K}'_{x, j} \}_{x} = \{ K_{x, k} \}_{x} + \{ K_{x, l} \}_{x}\,,
\end{equation}
without affecting her guessing probability. She could then repeat this process until she had whittled her $n$ original POVMs down to $m$. 
On the other hand, if $n < m$, Eve loses nothing by splitting one of her POVMs $\mathcal{K}_j=\{K_{x, j}\}_x$ into a convex mixture of identical POVMs $\mathcal{K}'_k=\{K_{x, k}\}_x$ and $\mathcal{K}'_{l}= \{K_{x, l} \}_x$, i.e.
\begin{equation}
    \{K'_{x, k}\}_x= \frac{1}{2}\{K_{x, j}\}_x\,, \qquad  \{K'_{x, l}\}_x= \frac{1}{2}\{K_{x, j}\}_x\,. 
\end{equation}
As before, she could repeat this process until she reached $n=m$ POVMs. Fixing $n=m$, we are then free to drop the inner maximization over $x$ in \eqref{eqnA: pguess_unnorm} by labelling Eve's POVMs such that
\begin{equation}
 j= \argmax_x \bra{\phi} K_{x,j}\ket{\phi} \; \text{for all} \; x \,.
\end{equation}
Our final SDP has the more tractable form
\begin{equation}\label{eqnA: pguess_main}
\begin{aligned}
\pg \big( \ket{\phi},\, \m \big)= \; &   \quad \;  \max_{ \{ \mathcal{K}_j\}}
& & \sum_{j}  \bra{\phi} K_{j,j}\ket{\phi}
\\
& \;\;\; \text{subject to}
& &  K_{x, j} \geq 0  \; \text{ for all } \; x, j
\\
& 
& & \sum_{x}  d K_{x, j} =  \id \sum_{x}  \tr K_{x, j} \; \text{ for all } \; j 
\\
& 
& & \sum_{j} K_{x,j} = M_{x} \text{ for all } x\,. 
\end{aligned}
\end{equation}

\subsection{The dual problem } \label{app: dual}
Here we derive the dual for the SDP problem \eqref{eqnA: pguess_main} (we refer to \cite{Skrzypczyk_2023} for a detailed description on deriving dual problems).
The Lagrangian for \eqref{eqnA: pguess_main} can be written as 
\begin{equation}
    \begin{aligned}
    \mathcal{L}&= \sum_{j}  \tr \left( K_{j, j}\, \ketbra{\phi}{\phi} \right) + \sum_{x} \tr \bigg(Y_{x} \Big(  M_{x}- \sum_{j}K_{x, j}\Big) \bigg) + \sum_{j} \tr \bigg(F_{j}  \sum_{x}\big(\id \tr -d \big) K_{x, j} \bigg) + \sum_{x, j} \tr \left(Z_{xj} \,K_{x, j} \right) 
    \\
    &= \sum_{x, j} \tr \bigg( K_{x, j} \Big(\delta_{x, j}\ketbra{\phi}{\phi} - Y_{x} + \left( \id\tr -d \right)F_j + Z_{xj}\Big)\bigg) + \sum_{x} \tr \big( Y_{x}M_x \big)\,,    
    \end{aligned}
\end{equation}
where we introduced the dual variables $Y_x$, $F_j$ and $Z_{xj}$. We impose $Z_{xj} \geq 0$ such that 
\begin{equation}
    \mathcal{L} \geq \sum_{j}  \tr \left( K_{j, j}\, \ketbra{\phi}{\phi} \right)
\end{equation}
holds for any feasible point. If we now restrict the dual variables to satisfy
\begin{equation}
   \delta_{x, j}\ketbra{\phi}{\phi} - Y_{x} + \left( \id\tr -d \right)F_j + Z_{xj} =0 \;\; \textrm{for all} \; x,\,j\,, 
\end{equation}
we arrive at 
\begin{equation}
\mathcal{L}= \sum_{x} \tr \big( Y_{x}\, M_x \big)\,.    \end{equation}
The dual problem is then 
\begin{equation}\label{eqnA: dualF}
\begin{aligned}
\pg \big( \ket{\phi},\, \m \big)= \; &   \quad  \min_{ \{Y_x\}, \, \{F_j\}}
& & \sum_{x}  \tr \big( Y_x \, M_x \big)
\\
& \;\;\; \text{subject to}
& &  Y_{x} \geq \delta_{xj} \ketbra{\phi}{\phi}+ \left(\id \tr -d \right)F_j \;\; \text{for all} \;\; x,  j\,,
\end{aligned}
\end{equation}
where $\{F_j\}$ are slack variables. Note that, without loss of generality, we can describe any traceless Hermitian matrix $G_j$ in terms of a generic Hermitian matrix $F_j$ as
\begin{equation}
    G_{j} = \left(\id \tr -d \right)F_j\,,
\end{equation}
so we can simplify the dual to 
\begin{equation}\label{eqnA: dualG}
\begin{aligned}
\pg \big( \rho,\, \m \big)= \; &   \quad  \min_{\{Y_x\}, \, \{G_j\}}
& & \sum_{x}  \tr \big( Y_x \, M_x \big)
\\
& \;\;\; \text{subject to}
& &  Y_{x} \geq \delta_{xj} \ketbra{\phi}{\phi}+ G_j \;\; \text{for all} \;\; x,  j\,,
\\
& 
& &  \tr G_j =0 \;\; \text{for all} \;\; j\,.
\end{aligned}
\end{equation}
If we set $Y_x=\alpha\id$ and $F_{j}=0$ for all $x$ and $j$, with $\alpha>1$, we obtain a feasible point for \eqref{eqnA: dualF} that satisfies the strict constraint $Y_{x} > \delta_{xj} \ketbra{\phi}{\phi}+ \left(\id \tr -d \right)F_j$ for all $x$ and $j$. Moreover, the variables $K_{x, j}= \frac{1}{d} M_x$ are strictly positive for any POVM whose elements are all full-rank, and they satisfy the requirements for the primal SDP \eqref{eqnA: pguess_main}. From \cite[Conic Duality Theorem]{Nesterov1994}, if both dual and primal problems are strictly feasible, the duality gap for an optimal solution is zero, so for a full-rank POVM, we are free to use either the primal or dual formulations to solve our problem. 

Finally, the complementary slackness condition (see e.g. \cite[Section 5.5.3]{Boyd_2004}) is a useful necessary condition for a set of primal and dual variables to be optimal. For our problem, the condition is
\begin{equation}\label{eqnA: slackness}
    K_{x, j} \Big( Y_{x} - \delta_{xj} \ketbra{\phi}{\phi} - G_j    \Big) = 0 \quad \text{for all }\,\, x, j\,.
\end{equation}
We will use this condition later to prove that a set of dual variables $\{Y_x\}$ and $\{G_j\}$ satisfy all of their necessary constraints.

\section{Qubit measurements with two outcomes}\label{app: qubit two_outcome}

\subsection{Decomposition proof of Lemma \ref{lemma: qubit 2-outcome lower app}}\label{app: qubit decomposition}
\begin{lemma}\label{lemma: qubit 2-outcome lower app}
Let $\m= \{M_1,M_2\}$ be any qubit POVM with two outcomes, where $\tr M_1 \leq \tr M_2$. The following lower bound holds,
 \begin{equation}
  \pgs \big(  \m  \big) \geq  1 - \tr M_1 + \frac{1}{2} \Big( \tr \sqrt{M_1} \Big)^2 \,.
\end{equation}  
\end{lemma}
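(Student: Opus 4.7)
\textbf{Proof plan for Lemma \ref{lemma: qubit 2-outcome lower app}.} The strategy is to exhibit, for each pure state $\ket\phi$, an explicit feasible point for the primal SDP \eqref{eqnA: pguess_main} whose objective value lower-bounds $\pg(\ket\phi,\m)$, and then to minimize that bound over $\ket\phi$. The feasible point is exactly the ``square root decomposition'' $\{\mathcal{K}_j\}$ displayed in \eqref{eqn: qubit_decomp}, which depends on $\ket\phi$ but is Eve's choice, not Alice's.

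First I would verify that \eqref{eqn: qubit_decomp} actually defines admissible subnormalized POVMs for the SDP \eqref{eqnA: pguess_main}. The marginal constraint $\sum_j K_{x,j}=M_x$ holds by inspection. The trace constraint $\sum_x d K_{x,j}=\id \sum_x \tr K_{x,j}$ reduces, in dimension $d=2$, to $K_{1,j}+K_{2,j}=p(j)\,\id$ for each $j$, which is again immediate. Positivity is the only non-trivial part: $K_{1,1}=\sqrt{M_1}\ketbra{\phi}{\phi}\sqrt{M_1}\ge 0$ and $K_{1,2}=\sqrt{M_1}(\id-\ketbra{\phi}{\phi})\sqrt{M_1}\ge 0$ are manifest; $K_{2,1}=p(1)\id-K_{1,1}$ is positive because $K_{1,1}$ is rank one with trace $p(1)$, hence has operator norm $p(1)$; finally $K_{2,2}=p(2)\id-K_{1,2}$ is positive because $K_{1,2}$ is rank one with non-zero eigenvalue $\langle\phi^\perp|M_1|\phi^\perp\rangle=\tr M_1-p(1)\le 1-p(1)=p(2)$, where the inequality uses the hypothesis $\tr M_1\le 1$ (equivalent to $\tr M_1\le \tr M_2$ in dimension two).

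Next I would substitute $\{\mathcal{K}_j\}$ into the SDP objective $\sum_j \langle\phi|K_{j,j}|\phi\rangle$ and simplify, obtaining
\begin{equation}
\pg(\ket\phi,\m)\;\ge\;\langle\phi|K_{1,1}|\phi\rangle+\langle\phi|K_{2,2}|\phi\rangle \;=\; 1-2p(1)+2\langle\phi|\sqrt{M_1}|\phi\rangle^{2},
\end{equation}
which is \eqref{eqn: inter phi}. The remaining, purely elementary, step is to minimize the right-hand side over $\ket\phi$. Diagonalizing $M_1$ in its eigenbasis $\{\ket x\}$ with eigenvalues $\lambda_1,\lambda_2$ and writing $a=|\langle\phi|1\rangle|^2$, $1-a=|\langle\phi|2\rangle|^2$, a short expansion collapses the expression to $1-2a(1-a)(\sqrt{\lambda_1}-\sqrt{\lambda_2})^{2}$. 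This is a concave quadratic in $a$ minimized at $a=1/2$, giving $1-\tfrac12(\sqrt{\lambda_1}-\sqrt{\lambda_2})^{2}$, which one then rewrites via $(\tr\sqrt{M_1})^{2}=\tr M_1+2\sqrt{\det M_1}$ as the claimed $1-\tr M_1+\tfrac12(\tr\sqrt{M_1})^{2}$.

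The one place that needs genuine care is the positivity check for $K_{2,2}$: it is precisely here that the hypothesis $\tr M_1\le\tr M_2$ enters, via $\tr M_1\le 1$. The rest of the argument is bookkeeping, with the minimization over $\ket\phi$ reducing to a one-variable concave problem whose minimizer $a=1/2$ is exactly the ``unbiased to $\{\ket x\}$'' condition mentioned after \eqref{eqn: inter phi}.
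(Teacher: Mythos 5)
Your proposal is correct and follows essentially the same route as the paper's decomposition proof of this lemma: the same square-root decomposition \eqref{eqn: qubit_decomp}, the same positivity checks (including the role of $\tr M_1\le 1$ in establishing $K_{2,2}\ge 0$), and the same reduction of the bound to minimizing $1-2a(1-a)\big(\sqrt{\lambda_1}-\sqrt{\lambda_2}\big)^2$ over $a=\abs{\langle\phi|1\rangle}^2$, followed by the identity $\big(\tr\sqrt{M_1}\big)^2=\tr M_1+2\sqrt{\det M_1}$. The only nit is terminology: that expression is convex in $a$ (equivalently, the subtracted factor $a(1-a)$ is concave and maximized at $a=1/2$), so calling it a ``concave quadratic minimized at $a=1/2$'' is a slip of wording, not of substance.
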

\begin{proof}
We note that in any two-outcome measurement, the POVM elements must share an eigenbasis. We denote this basis by $\{\ket{x}\}$, and represent $M_1$ by
\begin{equation}
    M_1 = m_1 \ketbra{1}{1} + m_2 \ketbra{2}{2}\,,
\end{equation} 
setting $m_1 \geq m_2$.
For a fixed state $\ket{\phi}$, Eve's optimal guessing probability can be bounded from below by taking any given decomposition of the POVM. Defining $p(x)=\langle \phi | M_x | \phi \rangle$, consider the decomposition
\begin{equation}\label{eqn: qubit_decomp_gen}
\begin{aligned}
 K_{1, 1}&=  \sqrt{M_{1}} \ketbra{\phi}{\phi} \sqrt{M_{1}}\,, 
    \\
    K_{2, 1} &= p(1) \id - K_{1, 1}  = p(1) \id -  \sqrt{M_{1}} \ketbra{\phi}{\phi} \sqrt{M_{1}}\,, 
    \\
    K_{1, 2} &=   M_1 - K_{1, 1} =   M_{1} - \sqrt{M_1} \ketbra{\phi}{\phi} \sqrt{M_1}   \,,
    \\
    K_{2, 2} &= p(2) \id - K_{1, 2}  = p(2) \id -   M_{1} + \sqrt{M_1} \ketbra{\phi}{\phi} \sqrt{M_1}   \,.   
\end{aligned}    
\end{equation}
The normalization condition is satisfied, 
\begin{equation}
    K_{1, 1}+ K_{2, 1}= p(1) \id\,, \qquad
    K_{1, 2}+ K_{2, 2}= p(2) \id\,,
\end{equation}
while we also have 
\begin{equation}
   K_{1, 1}+ K_{1, 2}= M_1\,, \qquad K_{2, 1}+ K_{2, 2}= \id - M_1 = M_2\,. 
\end{equation}
All that's left is to prove that each element is positive-semidefinite. $K_{1, 1}$ is rank-one, so it is positive-semidefinite because its trace is non-negative, 
\begin{equation}
    \tr K_{1, 1} = \langle \phi | M_1 | \phi\rangle = p(1)\,.
\end{equation}
$K_{2, 1}$ is rank-one and positive semi-definite because
\begin{equation}
    \lmin \left( K_{2, 1}\right) = \lmin \Big( p(1) \id - K_{1, 1} \Big) = p(1) - \tr K_{1, 1} = 0\,,
\end{equation}
where $\lmin\left(. \right)$ denotes the smallest eigenvalue of its argument. When $M_1$ is rank-one, $K_{2, 1}$ is rank-one and positive semidefinite because
\begin{equation}
    K_{1, 2} = M_{1} - \sqrt{M_1} \ketbra{\phi}{\phi} \sqrt{M_1} = m_1 \Big( 1 - \abs{\langle \phi | 1 \rangle }^2  \Big) \ketbra{1}{1} \geq 0\,,
\end{equation}
while when $M_1$ is full-rank, $K_{1, 2}$ has a zero eigenvector given by $M_{1}^{-1/2} \ket{\phi}$,
\begin{equation}
    K_{1, 2} \,  M_{1}^{-1/2} \ket{\phi} = \Big( 1 - \langle \phi | \id | \phi \rangle
 \Big)\sqrt{M_1} \ket{\phi}=0\,,
\end{equation}
so it is positive-semidefinite because its trace is non-negative,
\begin{equation}
    \tr K_{1, 2} = \tr M_1 - \langle \phi | M_1 | \phi \rangle \geq 0\,.
\end{equation}
Finally, $K_{2, 2}$ is positive-semidefinite because $K_{1, 2}$ is rank-one, so 
\begin{equation}
 \lmin \left( K_{2, 2} \right) = \lmin \Big( p(2) \id - \tr K_{1, 2} \Big) = p(2) - \tr K_{1, 2} = 1 - \tr M_1 \geq 0\,. 
\end{equation}
We see that $K_{2, 2}$ is rank-one only in the special case $\tr M_1 =1$. We represent $\ket{\phi}$ in the eigenbasis $\{\ket{x}\}$ of the POVM as $\ket{\phi}= \left(\phi_1, \phi_2 \right)^{T}$, where we are free to choose real $\phi_1$ and $\phi_2$ (if they were complex, a unitary of the form $U = \sum_{k} e^{i \theta_{k}} \ketbra{k}{k}$ could be applied to make them real, leaving the POVM $\m$ unchanged). From \eqref{eqnA: pguess_main}, the guessing probability is then bounded by
\begin{align}\label{eqn: qubit_pguess_decomp}
    \pg \left(\ket{\phi}, \,  \m \right) &\geq \langle {\phi | K_{1, 1} + K_{2, 2}  | \phi} \rangle  = \langle {\phi | M_2 | \phi} \rangle- \langle{\phi | M_1 | \phi} \rangle + 2\langle {\phi | \sqrt{M_1} | \phi} \rangle^2 
    \\
   &= 1 - 2\langle{\phi | M_1 | \phi} \rangle + 2\langle{\phi | \sqrt{M_1} | \phi} \rangle^2 = 1 - 2 \phi_1^2 \left( 1 -\phi_1^2 \right)^2 \big( \sqrt{m_1} - \sqrt{m_2} \big)^2\,,
\end{align}
where in the last line we use $\phi_2^2 = 1 - \phi_1^2$. The left-hand side of \eqref{eqn: qubit_pguess_decomp} is minimized when Alice chooses 
\begin{equation}
    \min_{\phi_1}\, \phi_1^2 \left( 1 -\phi_1^2 \right)^2 = \frac{1}{4}\,,
\end{equation}
which gives the bound
\begin{equation}
    \pgs \left( \m \right) \geq 1 - \frac{1}{2} \big( \sqrt{m_1} - \sqrt{m_2} \big)^2= 1 - \tr M_1 + \frac{1}{2} \left( \tr \sqrt{M_1} \right)^2\,,
\end{equation}
which is saturated for this decomposition if and only if $\phi_1^2= \phi_2^2= 1/2$ i.e. if $\ket{\phi}$ is unbiased to the eigenbasis $\{\ket{x}\}$ of the POVM, or in the trivial case where $M_1 \propto \id$, as $m_1=m_2$ and Eve has perfect guessing probability.
\end{proof}

\subsection{Vector proof of Lemma \ref{lemma: qubit 2-outcome lower app}}\label{app: qubit 2-outcome lower}
\begin{proof}
First, we prove by contradiction that Eve's optimal decomposition of any POVM cannot contain a full-rank element $K_{a, b}$ for $a \neq b$. If $K_{a, b}$ were full-rank, we could define 
\begin{equation}
  \epsilon =  \lmin \big( K_{a, b} \big) > 0\,,
\end{equation}
where $\lmin(.)$ denotes the smallest eigenvalue of its argument. We could create a new decomposition with POVMs $\mathcal{K}'_j=\{\tilde{K'}_{x, j} \}_{x}$ where all of the elements are unchanged apart from
\begin{equation}
    \tilde{K'}_{a, b}=  K_{a, b} - \epsilon \id\,, \qquad \tilde{K'}_{a, a} = K_{a, a} + \epsilon \id\,.
\end{equation}
This decomposition satisfies all the necessary conditions,
\begin{equation}
    \tilde{K'}_{x, j} \geq 0\,, \qquad \sum_{j}\tilde{K'}_{x, j} = M_x\,, \qquad
    \sum_{x} \tilde{K'}_{x, b} = \sum_{x} K_{x, b} - \epsilon \id \propto \id\,, \qquad
    \sum_{x} \tilde{K'}_{x, a} = \sum_{x} K_{x, a} + \epsilon \id \propto \id\,.
\end{equation}
Eve would achieve a strictly higher guessing probability using $\{\mathcal{K}'_j\}$, as 
\begin{equation}
    \pg \big( \ket{\phi},\,  \m, \, \{\mathcal{K}'_j \}\big) = \bra{\phi} \Big( \sum_{j} K_{j, j} + \epsilon \id \Big) \ket{\phi} = \pg \big( \ket{\phi}, \, \m, \, \{\mathcal{K}_j \}\big) + \epsilon\,, 
\end{equation}
so the decomposition $\{\mathcal{K}_j\}$ must not have been optimal.

In the case of a qubit POVM with two outcomes, this result allows us to restrict to decompositions where the elements $K_{x, j}$ are rank-one for $x \neq j$. We denote the POVM elements by $M_1$ and $M_2$, and assume throughout that $0 < \tr M_1 \leq \tr M_2$. Note that in any two-outcome measurement, the POVM elements must share an eigenbasis. We denote this basis by $\{\ket{x}\}$, and represent $M_1$ by 
\begin{equation}
    M_1 = m_1 \ketbra{1}{1} + m_2 \ketbra{2}{2}\,,
\end{equation}
fixing $m_1 \geq m_2$.
Choosing the conventions $\lambda_1 \geq \lambda_2 \geq 0$ and $\mu_1 \geq \mu_2 \geq 0$, we can then define the decomposition
\begin{equation}
    \begin{aligned}
      K_{1, 1} &= \lambda_{1}\ketbra{\lambda_1}{\lambda_1} + \lambda_2 \ketbra{\lambda_2}{\lambda_2} \,, 
    \\
    K_{2, 1} &= \big( \lambda_1 - \lambda_2 \big) \ketbra{\lambda_2}{\lambda_2} \,,
    \\
    K_{1, 2} &= \big( \mu_1 - \mu_2 \big) \ketbra{\mu_2}{ \mu_2} \,, 
    \\
    K_{2, 2} &= \mu_1 \ketbra{\mu_1}{\mu_1} + \mu_2 \ketbra{\mu_2}{\mu_2} \,.  
    \end{aligned}
\end{equation}
The constraints in \eqref{eqnA: pguess_main} give us two conditions,
\begin{equation}
\begin{aligned}
    &\left( \lambda_1-\lambda_2\right) \ketbra{\lambda_1}{\lambda_1} - \left(\mu_1 - \mu_2 \right) \ketbra{\mu_1}{\mu_1} + \left( \lambda_2 + \mu_1 - \mu_2 \right) \id  = \left( m_1-m_2 \right) \ketbra{m_1}{m_1} + m_2\id\,,
    \\
    & \; \; \lambda_1 + \mu_1 = 1\,.
\end{aligned}
\end{equation}
It is convenient to switch to Bloch vector notation, so we introduce vectors $\vecr_{a}$ such that
 \begin{equation}
     \ketbra{a}{a} = \frac{1}{2} \left( \id + \vecr_{a} \cdot \vec{\sigma} \right)\, 
 \end{equation}
for $a= \lambda_1, \mu_1, 1$ and $\phi$, where $\abs{\vecr_{a}}^{2}=1$, $\vec{\sigma}$ is the vector of Pauli matrices and $\vecr_{\phi}$ represents Alice's chosen state. We can now rewrite Eve's POVMs as 
\begin{equation}
\begin{aligned}
    K_{1, 1} &= \frac{\lambda_1 + \lambda_2}{2} \id + \frac{\lambda_1-\lambda_2}{2} \, \vecr_{\lambda_1} \cdot \sigma\,, 
    \\
    K_{2, 1} &= \frac{\lambda_1 - \lambda_2}{2} \, \left(\id -\vecr_{\lambda_1} \cdot \sigma \right) 
 \,, 
    \\
    K_{1, 2} &= \frac{\mu_1 - \mu_2}{2} \, \left( \id - \vecr_{\mu_1} \cdot \sigma \right)   \,, 
    \\
    K_{2, 2} &= \frac{\mu_1 + \mu_2}{2}\id + \frac{\mu_1 - \mu_2}{2} \, \vecr_{\mu_1} \cdot \sigma  \,,
\end{aligned}    
\end{equation}
such that the optimal guessing probability is 
\begin{equation}\label{eqnA: vec_pguess_original}
\begin{aligned}
\pgs \big(  \m  \big) = & \min_{\vecr_{\phi}} \; \max_{\{\lambda_i, \, \mu_i, \, \vecr_{\lambda_1}, \, \vecr_{\mu_1} \} } 
& & \frac{\lambda_1  + \lambda_2 + \mu_1
+ \mu_2}{2}  + \Big( \frac{(\lambda_1-\lambda_2)\vecr_{\lambda_1}  + (\mu_1 - \mu_2)\vecr_{\mu_1}}{2} \Big) \cdot \vecr_{\phi} 
\\
& \quad \text{subject to}
& & \lambda_1 \geq \lambda_2 \geq 0 \,, \quad \mu_1 \geq \mu_2 \geq 0  
\\
& 
& & \lambda_1 + \mu_1 = 1\   
\\
& 
& & \lambda_1 + \lambda_2 + \mu_1 -\mu_2 = \tr M_1
\\
& 
& & \left( \lambda_1-\lambda_2 \right) \vecr_{\lambda_1} - \left( \mu_1 - \mu_2 \right)  \vecr_{\mu_1} = \left( m_1-m_2 \right) \vecr_{1}\,.
\end{aligned}
\end{equation}
Introducing the notation 
\begin{equation}
    a= \lambda_1 - \lambda_2\,, \quad b= \mu_1 - \mu_2\,, \quad z = m_1 -m_2\,,
\end{equation}
we can simplify \eqref{eqnA: vec_pguess_original} to
\begin{equation}\label{eqnA: pguess_vec_ab}
\begin{aligned}
\pgs \big(  \m  \big) = \; & \min_{\vecr_{\phi}} \; \max_{ a, \, b, \, \vecr_{\lambda_1}, \, \vecr_{\mu_1}  } 
& & 1 - \frac{a+b}{2} + \left( \frac{a\vecr_{\lambda_1} + b\vecr_{\mu_1}}{2} \right) \cdot \vecr_{\phi}
\\
& \quad \text{subject to}
& & a \geq 0\,,  \quad b \geq 0\,, \quad \tr M_1  \geq a+b  
\\
& 
& & a\vecr_{\lambda_1} - b\vecr_{\mu_1} = z\vecr_{1}\,.   
\end{aligned}
\end{equation}
To recover \eqref{eqnA: vec_pguess_original} from \eqref{eqnA: pguess_vec_ab}, note that, since $\tr M_1 \leq 1$, we have $a+b \leq \tr M_1 \leq 1$, so we can define
\begin{equation}
    \lambda_1 = \frac{1}{2} \Big( \tr M_1 + a -b \Big)
\end{equation}
such that
\begin{equation}
   1-b \geq \lambda_1 \geq a\,,  
 \end{equation}
and from there we can define 
\begin{equation}
    \mu_1 = 1 - \lambda_1 \geq 0\,, \quad \lambda_2 = \lambda_1 - a \geq 0\,, \quad \mu_2 = \mu_1 - b \geq 0\,, 
\end{equation}
so the remaining conditions follow.

We can bound \eqref{eqnA: pguess_vec_ab} from below by considering any decomposition by Eve that is valid for all $\vecr_{\phi}$. The case where Alice chooses $\vecr_{\phi}= \vecr_{1}$ is trivial, as Eve could achieve perfect guessing probability by taking $a=z, b=0, \vecr_{\lambda_1}=\vecr_{1}$ and arbitrary $\vecr_{\mu_1}$. For $\vecr_{\phi} \neq \vecr_{1}$, consider the decomposition 
\begin{equation}\label{eqnA: a and b}
    2a = \tr M_1 + z \cos \theta \sqrt{ \frac{ \big( \tr M_1 \big)^2 -z^2  }{ \big( \tr M_1 \big)^2 - z^2 \cos^2 \theta
  } } \,, \qquad
  2b = \tr M_1 - z \cos \theta \sqrt{ \frac{ \big( \tr M_1 \big)^2 -z^2  }{ \big( \tr M_1 \big)^2 - z^2 \cos^2 \theta
  } }\,,
\end{equation}
where we define $\cos \theta := \vecr_{\phi} \cdot \vecr_{1}$. These values saturate the inequality $\tr M_1 \geq a+b$, they do not diverge since $\cos \theta < 1$, and we see that they are non-negative by noting that 
\begin{equation}
    \tr M_1 = m_1 + m_2 \geq \abs{m_1-m_2} = \abs{z} \geq \abs{ z \, \cos\theta} \geq \pm \, z \cos\theta \sqrt{\frac{  \big(\tr M_1 \big)^2-z^2}{\big( \tr M_1 \big)^2-z^2 \cos^2\theta}}\,.
\end{equation}
Combining \eqref{eqnA: a and b} with the following vector decomposition for fixed $l$,
\begin{equation}\label{eqnA: vectors solved}
     a \, \vecr_{\lambda_1} = \frac{l \vecr_{\phi} + z \vecr_{1} }{2}\,, \qquad 
     b \, \vecr_{\mu_1} = \frac{l \vecr_{\phi}- z \vecr_{1} }{2}\,,
 \end{equation}
we see that $a$ and $b$ satisfy the following conditions which ensure the correct normalization of $\vecr_{\lambda_1}$ and $\vecr_{\mu_1}$,
\begin{equation}\label{eqnA: l}
    l^2 = 2 \left( a^2 + b^2 \right) -z^2\,, \qquad a^2 - b^2 = lz\cos\theta\,,
\end{equation}
so \eqref{eqnA: a and b}, \eqref{eqnA: vectors solved} and \eqref{eqnA: l} together form a valid decomposition for any $\vecr_{\phi} \neq \vecr_{1}$. These variables satisfy 
\begin{equation}
   a \vecr_{\lambda_1} + b \vecr_{\mu_1} = l \vecr_{\phi}\,,  
\end{equation}
so we find that the guessing probability is bounded by 
\begin{equation}
 \pgs \big(  \m  \big)  \geq 
 \min_{\vecr_{\phi}} \, \bigg(  1 - \frac{a+b}{2} + \frac{l}{2} \bigg)  = \min_{\vecr_{\phi}} \, \left(  1 - \frac{a+b}{2} + \frac{\sqrt{2 \left( a^2 + b^2 \right) -z^2   }}{2} \right)\,,
\end{equation}
with $a$ and $b$ given by \eqref{eqnA: a and b}. Using 
\begin{equation}
    2 \left( a^2 + b^2\right) \geq \left( a +b \right)^2\,,
\end{equation}
we achieve the following bound for all $\vecr_{\phi}$,

\begin{align}
 \pgs \big(  \m  \big)  &\geq     1 - \frac{a+b}{2} + \frac{\sqrt{\left( a + b \right)^2 -z^2   }}{2} =     1 - \frac{\tr M_1}{2} + \frac{\sqrt{\left( \tr M_1 \right)^2 -z^2   }}{2} 
 \\
 &=     1 - \frac{m_1 + m_2}{2} + \sqrt{m_1 m_2} =    1 - \tr M_1 + \frac{1}{2} \Big( \tr \sqrt{M_1} \Big)^2 \,,
\end{align}
proving Lemma \ref{lemma: qubit 2-outcome lower app}.
\end{proof}

\subsection{Proof of Lemma \ref{lemma: qubit 2-outcome upper app}}\label{app: qubit 2 outcome upper}
\begin{lemma}\label{lemma: qubit 2-outcome upper app}
Let $\m= \{M_x\}_x$ be any qubit POVM with two outcomes, where $\tr M_1 \leq \tr M_2$, and let $\ket{\psi}= \frac{1}{\sqrt{2}} \left(1, \, 1 \right)^{T}$. Then
\begin{equation}\label{eqn: lemma2 app}
  \pg \big( \ket{\psi}, \, \m  \big) \leq   1 - \tr M_1 + \frac{1}{2} \Big( \tr \sqrt{M_1} \Big)^2 \,.
\end{equation}  
\end{lemma}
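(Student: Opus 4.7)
The plan is to invoke the dual SDP derived in Appendix \ref{app: dual}: by weak duality, any assignment of dual variables $Y_1, Y_2, G_1, G_2$ satisfying $Y_x \succeq \delta_{xj}\ketbra{\psi}{\psi} + G_j$ for all $x,j \in \{1,2\}$ together with $\tr G_1 = \tr G_2 = 0$ yields $\pg(\ket{\psi}, \m) \leq \tr(Y_1 M_1) + \tr(Y_2 M_2)$. It therefore suffices to exhibit one feasible quadruple whose objective value equals the target $1 - \tr M_1 + \tfrac{1}{2}(\tr\sqrt{M_1})^2$; the lemma then follows by weak duality and, combined with Lemma \ref{lemma: qubit 2-outcome lower app}, closes the proof of Theorem \ref{thm: qubit 2-outcome}.

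To guess a good ansatz I would use complementary slackness \eqref{eqnA: slackness} applied to the primal-optimal square-root decomposition of Lemma \ref{lemma: qubit 2-outcome lower app} evaluated at $\ket{\phi}=\ket{\psi}$. For each of the four pairs $(x,j)$, the condition $K_{x,j}\bigl(Y_x - \delta_{xj}\ketbra{\psi}{\psi} - G_j\bigr) = 0$ forces the bracketed Hermitian matrix to vanish on the image of $K_{x,j}$. Three of the four $K_{x,j}$ are rank one, with images spanned respectively by $\sqrt{M_1}\ket{\psi}$, its orthogonal complement, and (when $M_1$ is full rank) $M_1^{-1/2}\ket{\psi}^{\perp}$; together with the two traceless conditions, these slackness equations almost entirely determine the candidates $Y_1, Y_2, G_1, G_2$. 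Working in the shared eigenbasis $\{\ket{1},\ket{2}\}$ of $M_1$ and $M_2$, in which $M_1 = \diag(m_1,m_2)$, I expect a clean closed form for the four matrices in terms of $m_1, m_2$ and the mixed term $\sqrt{m_1 m_2}$ — precisely the ingredients that appear in $\tr\sqrt{M_1} = \sqrt{m_1}+\sqrt{m_2}$.

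Verification then amounts to two computations. First, evaluating $\tr(Y_1 M_1)+\tr(Y_2 M_2)$ in closed form and confirming that it reduces to $1 - \tr M_1 + \tfrac{1}{2}(\tr\sqrt{M_1})^2$; this should be essentially automatic once the ansatz matches the slackness conditions. Second, checking the four $2\times 2$ PSD constraints $Y_1 \succeq \ketbra{\psi}{\psi}+G_1$, $Y_1 \succeq G_2$, $Y_2 \succeq G_1$ and $Y_2 \succeq \ketbra{\psi}{\psi}+G_2$, each reducing to non-negativity of the trace and determinant of a concrete $2\times 2$ Hermitian matrix. Complementary slackness guarantees that each such matrix is singular at the optimum, so all four determinant conditions collapse and only sign conditions on the traces remain.

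The principal obstacle I anticipate is the bookkeeping in this verification: the same variables $Y_x, G_j$ enter the four PSD inequalities in different combinations, so an ansatz that trivially satisfies one may spoil another, and the off-diagonal entries of $G_1, G_2$ dictated by the rank-one slackness directions must be chosen consistently across all four constraints. Two degenerate sub-cases will require separate attention: $\tr M_1 = 1$, in which $K_{2,2}$ loses rank, and $m_2 = 0$, in which $M_1$ is not invertible and the image of $K_{1,2}$ must be recomputed directly from \eqref{eqn: qubit_decomp_gen}; both can plausibly be handled by continuity of the bound in the spectrum of $M_1$, or by a short separate argument. Once the four PSD conditions are shown to reduce to manifestly non-negative expressions in $\sqrt{m_1}, \sqrt{m_2}$, the lemma follows.
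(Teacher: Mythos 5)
Your route is genuinely different from the paper's proof of this lemma. The paper proves the upper bound entirely in the primal: after showing (in Appendix \ref{app: qubit 2-outcome lower}) that Eve's optimal decomposition can be taken with rank-one off-diagonal elements, it rewrites the problem in Bloch-vector form, uses the orthogonality $\vecr_{\psi}\cdot\vecr_{1}=0$ together with the triangle constraint $a\vecr_{\lambda_1}-b\vecr_{\mu_1}=z\vecr_{1}$ and Heron's formula to bound the overlap term by $\tfrac12\sqrt{(a+b)^2-z^2}$, and then uses monotonicity in $a+b$ to set $a+b=\tr M_1$. Your dual-certificate plan is instead the strategy the paper itself deploys for the qudit noisy projective measurement (Lemma \ref{lemma: qudit upper app}): weak duality from \eqref{eqnA: dualG} does give an upper bound from any feasible $(Y_1,Y_2,G_1,G_2)$, the complementary-slackness heuristic \eqref{eqnA: slackness} applied to the square-root decomposition \eqref{eqn: qubit_decomp_gen} at $\ket{\psi}$ is the right way to guess the certificate, and your observation that the PSD checks collapse to trace-sign conditions is correct: three of the bracketed matrices acquire kernel vectors by construction (spanned by $\sqrt{M_1}\ket{\psi}$, its orthocomplement, and $\sqrt{M_1}\ket{\psi^{\perp}}$), while the full-rank $K_{2,2}$ forces $Y_2=\ketbra{\psi}{\psi}+G_2$ exactly. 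So the approach is viable, and arguably more uniform with the rest of the paper than the Bloch-vector argument; what the paper's geometric proof buys is that it needs no certificate at all and handles all ranks of $M_1$ in one stroke.

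That said, as written the proof has a genuine gap: the feasible quadruple is never exhibited, and its existence with objective value $1-\tr M_1+\tfrac12\big(\tr\sqrt{M_1}\big)^2$ is essentially equivalent to the statement being proved (for full-rank $\m$ the dual optimum equals $\pg(\ket{\psi},\m)$), so the entire mathematical content of the lemma sits in the step you defer as "bookkeeping." To close it you must (i) solve the linear system coming from the slackness directions together with $Y_2=\ketbra{\psi}{\psi}+G_2$ and $\tr G_1=\tr G_2=0$, writing $Y_1,G_1,G_2$ explicitly in terms of $\sqrt{m_1},\sqrt{m_2}$; (ii) verify the remaining trace inequalities for the three rank-deficient constraint matrices; and (iii) check that $\tr(Y_1M_1)+\tr(Y_2M_2)$ equals the target. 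None of this is carried out, and the consistency of the slackness system is assumed rather than shown. The degenerate cases are also only gestured at; note that the $m_2=0$ case needs no limiting argument at all if the certificate is written via $K_{1,2}=\sqrt{M_1}\ketbra{\psi^{\perp}}{\psi^{\perp}}\sqrt{M_1}$, which avoids $M_1^{-1/2}$, whereas an appeal to "continuity of the bound in the spectrum of $M_1$" would itself require an argument that $\pg(\ket{\psi},\cdot)$ behaves continuously as the POVM becomes singular. Until these computations are done, the lemma is not proved; the plan is credible but incomplete.
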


\begin{proof}
Here, we borrow the vector formulation \eqref{eqnA: pguess_vec_ab} of the guessing probability from Appendix \ref{app: qubit 2-outcome lower}. Denote by $\vecr_{\psi}$ the Bloch vector of the state $\ket{\psi}$. Since $\ket{\psi}$ is unbiased to the basis $\{\ket{x}\}$, we must have $\vecr_{\psi} \cdot \vecr_{1}=0$. From the condition $a\vecr_{\lambda_1} - b\vecr_{\mu_1} = z\vecr_{1}$, we can then define 
\begin{equation}
   a \vecr_{\lambda_1} \cdot \vecr_{\psi} = b \vecr_{\mu_1} \cdot \vecr_{\psi} = h\,
\end{equation}
and write the guessing probability as
\begin{equation}\label{eqnA: pguess_vec_unbiased app}
\begin{aligned}
\pgs \big(  \m  \big) = \; & \min_{\vecr_{\phi}} \; \max_{a, \, b, \, \vecr_{\lambda_1}, \, \vecr_{\mu_1}  } 
& & 1 - \frac{a+b}{2} + h 
\\
& \quad \text{subject to}
& & a \geq 0\,,  \quad b \geq 0\,, \quad \tr M_1  \geq a+b  
\\
& 
& & h= a \vecr_{\lambda_1} \cdot \vecr_{\psi}
\\
& 
& & a\vecr_{\lambda_1} - b\vecr_{\mu_1} = z\vecr_{1}\,. 
\end{aligned}
\end{equation}
In the special case where $M_1 \propto \id$, $z=0$ and the second constraint of \eqref{eqnA: pguess_vec_unbiased app} disappears, so 
Eve is free to choose $\vecr_{\lambda_1}= \vecr_{\psi}$ to achieve perfect guessing probability, satisfying 
\begin{equation}
    \pg \big( \ket{\psi}, \, \m  \big) =   1 - \tr M_1 + \frac{1}{2} \Big( \tr \sqrt{M_1} \Big)^2=1 \,.
\end{equation}
We assume from now on that $z \neq 0$. From the second constraint of \eqref{eqnA: pguess_vec_unbiased app}, the vectors $a\vecr_{\lambda_1}$, $b\vecr_{\mu_1}$ and $z\vecr_{1}$ can be seen as the sides of a triangle (see Figure \ref{fig: SDP triangle}), whose height perpendicular to $z\vec{r}_{1}$ is given by $h$.

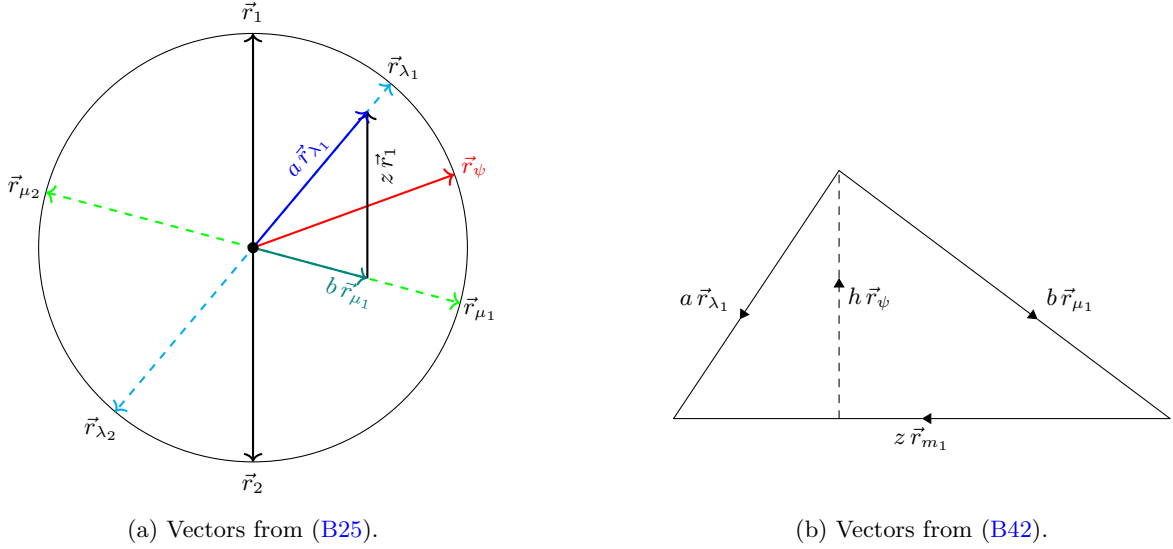
\begin{figure*}

\begin{subfigure}{0.49\textwidth}
\centering
    \begin{tikzpicture}
    \begin{scope}[scale=0.95]
        \draw (0,0) circle (3cm);

        \draw[thick, <->] (0,-3) -- (0,3);
        
        \draw[cyan, thick, dashed, <->] (50:3) -- (230:3);
        
        \draw[green, thick, dashed, <->] (165:3) -- (345:3);
        
        \draw[red, thick, ->] (0,0) -- (20:3);
        
        \draw[thick, ->] (1.6,-0.43) -- (1.6,1.91);
        
        \draw[teal, thick, ->] (0,0) -- (1.6,-0.43);
        
        \draw[blue, thick, ->] (0,0) -- (1.6,1.91);
        
        \node at (0,3.3) {$\vecr_{1}$};
        \node at (0,-3.3) {$\vecr_{2}$};
        \node at (50:3.3) {$\vecr_{\lambda_1}$};
        \node at (230:3.3) {$\vecr_{\lambda_2}$};
        \node at (345:3.3) {$\vecr_{\mu_1}$};
        \node at (165:3.3) {$\vecr_{\mu_2}$};    
        \node[red] at (20:3.3) {$\vecr_\psi$};
    
        \node[rotate=50, blue] at (60:1.5) {$a \,\vecr_{\lambda_1}$}; 
        \node[rotate=345, teal] at (335:1.5) {$b \,\vecr_{\mu_1}$};
        \node[rotate=90] at (1.85,0.75+ 0.4) {$z \, \vecr_1 $};

        \filldraw[black] (0,0) circle (2pt);
        \end{scope}
    \end{tikzpicture}
\caption{Vectors from \eqref{eqnA: vec_pguess_original}.}
\label{subfig: circle pic}
\end{subfigure}
\begin{subfigure}{0.49\textwidth}
\centering
    \begin{tikzpicture}[line cap=round, line join=round, >=Triangle]
    \begin{scope}[scale=1.1]
\clip(-3,-1) rectangle (3,3);
\coordinate (arrow1start) at (-1, 0);
    \coordinate (arrow2end) at (-1, 3);
    \coordinate (arrow2endless) at (-1, 1.7);
     \coordinate (arrow3end) at (3, 0);
     \coordinate (arrow3endless) at (0, 0);
      \coordinate (arrow4end) at (-3, 0);
      \coordinate (arrow4endless) at (-2.85, 0);
      \coordinate (leftless) at (-2.2, 1.2);
      \coordinate (rightless) at (1.4, 1.2);
\draw [->] (arrow2end) -- (leftless);
\draw [->] (arrow2end) -- (rightless);
\draw [->, dashed] (arrow1start) -- (arrow2endless);
\draw [-, dashed] (arrow2endless) -- (arrow2end);
\draw [->] (arrow3end) -- (arrow3endless);
\draw [-] (arrow3endless) -- (arrow4end);
\draw [-] (leftless) -- (arrow4end);
\draw [-] (rightless) -- (arrow3end);
\draw (arrow2endless) node[anchor=north west] {{$h \, \vecr_{\psi}$}};  
\draw (rightless) node[anchor=south west] {{$b\, \vecr_{\mu_1}$}};      
\draw (leftless) node[anchor= south east] {{$a\, \vecr_{\lambda_1}$}};
\draw (arrow3endless) node[anchor=north] {{$z\, \vecr_{m_1}$}};
\end{scope}
\end{tikzpicture}
\caption{Vectors from \eqref{eqnA: pguess_vec_unbiased app}.}
\label{subfig: triangle pic}
\end{subfigure}

\caption{Vector visualizations of the SDP constraints for a qubit two-outcome measurement.}
\label{fig: SDP triangle}
\end{figure*}
Heron's formula for the area $A$ of a triangle with side lengths $a$, $b$ and $z$, and semiperimeter $s=\frac{1}{2}\left(a + b + z \right)$, is 
\begin{equation}
    A = \sqrt{s(s-a)(s-b)(s-c)}\,, 
\end{equation}
so we can express $h$ as 
\begin{align}
    A &=  \frac{1}{4}\sqrt{(a+b+z)(a+b-z)(a-b+z)(-a+b+z)} 
    \\
    &=  \frac{1}{4}\sqrt{ \Big( (a+b)^2-z^2 \Big) \Big(z^2-(a-b)^2}\Big) = \frac{hz}{2}\,.
\end{align}
From the cosine rule for triangles, we have 
\begin{equation}
    - 2ab \leq a^2 + b^2 -z^2 \leq 2ab\,,
\end{equation}
from which we can extract the conditions
\begin{equation}
(a-b)^2 \leq z^2 \leq (a+b)^2\,.
\end{equation}
We then find the bound
\begin{equation}
    h = \frac{1}{2z}\sqrt{ \big( (a+b)^2-z^2 \big) \big(z^2-(a-b)^2}\big) \leq \frac{1}{2} \sqrt{(a+b)^2 - z^2}\,,
\end{equation}
with equality achieved only when $a=b$. The guessing probability can then be bounded from above by  
\begin{equation}\label{eqnA: pguess_vec_unbiased2 app}
\begin{aligned}
\pgs \big(  \m  \big) = \; & \min_{\vecr_{\phi}} \; \max_{a, \, b, \, \vecr_{\lambda_1}, \, \vecr_{\mu_1}  } 
& & 1 - \frac{a+b}{2} + \frac{1}{2} \sqrt{\left(a+b \right)^2-z^2}
\\
& \quad \text{subject to}
& & a \geq 0\,,  \quad b \geq 0\,, \quad \tr M_1  \geq a+b  
\\
& 
& & a\vecr_{\lambda_1} - b\vecr_{\mu_1} = z\vecr_{1}\,. 
\end{aligned}
\end{equation}
The objective function $f(a, b)$ of \eqref{eqnA: pguess_vec_unbiased2 app},
\begin{equation}
    f(a, b)= 1 - \frac{a+b}{2} + \frac{1}{2} \sqrt{\left(a+b \right)^2-z^2}\,,
\end{equation}
increases monotonically with $\left( a+b \right)$, as
\begin{equation}
    \diff{}{(a+b)} f (a, b) =  \frac{1}{2 \sqrt{ \left( a+b\right)^2 -z^2 } } \bigg( a+b - \sqrt{ \left( a+b\right)^2 -z^2 } \bigg) \geq 0\,,
\end{equation}
so we obtain the following bound by setting $a+b = \tr M_1$ in \eqref{eqnA: pguess_vec_unbiased2 app},
\begin{equation}\label{eqn: pguess_vec_unbiased3 app}
 \pgs \big( \ket{\psi}, \,  \m  \big) \leq 1 - \frac{\tr M_1}{2} + \frac{1}{2} \sqrt{\left(\tr M_1 \right)^2-z^2} = 1 - \tr M_1 + \frac{1}{2} \Big( \tr \sqrt{M_1} \Big)^2 \,,
\end{equation}
which proves \eqref{eqn: lemma2 app}.
\end{proof}

\subsection{Proof of Corollary \ref{corr: qudit 2-outcome app}}\label{app: corr1 proof}
\begin{corollary}\label{corr: qudit 2-outcome app}
Let $\m=\{M_x\}_x$ be any POVM with two outcomes, with $\lmin \left( M_1\right) + \lmax \left( M_1 \right) \leq \lmin \left( M_2\right) + \lmax \left( M_2 \right)$, where $\lmax \left( . \right)$ and $\lmin \left( . \right)$ denote the largest and smallest eigenvalues of the argument. Then 
\begin{equation}\label{eqnA: pguess_bound_2outcome}
 \pgs \big(  \m  \big) \leq 1 - \frac{1}{2} \bigg( \sqrt{\lmax \left( M_1\right)} - \sqrt{\lmin \left( M_1\right)} \bigg)^2\,.
\end{equation}
\end{corollary}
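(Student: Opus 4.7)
The plan is to reduce this to the qubit statement of Lemma 2 by projecting onto a well-chosen qubit subspace, as the paper itself hints. Since $M_1+M_2=\id$, the two elements commute and share an eigenbasis $\{\ket{e_k}\}$. I would select eigenvectors $\ket{e_{\max}},\ket{e_{\min}}$ of $M_1$ with eigenvalues $\lmax(M_1)$ and $\lmin(M_1)$ respectively, set $V=\mathrm{span}\{\ket{e_{\max}},\ket{e_{\min}}\}$ with projector $P$, and let Alice choose the specific state $\ket{\psi}=\frac{1}{\sqrt 2}(\ket{e_{\max}}+\ket{e_{\min}})\in V$. Since $\pgs(\m)\leq \pg(\ket{\psi},\m)$, it suffices to bound $\pg(\ket{\psi},\m)$.

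The central step is to show that any feasible decomposition $\{\mathcal{K}_j\}_{j=1,2}$ of $\m$ in the SDP \eqref{eqn: pguess_main} gives, via $\tilde K_{x,j}:=PK_{x,j}P$, a feasible decomposition of the qubit POVM $\tilde{\m}=\{PM_1P,PM_2P\}$ on $V$ with the same objective value on $\ket{\psi}$. Positivity of $\tilde K_{x,j}$ is preserved by $P$; the marginal constraint $\sum_j \tilde K_{x,j}=PM_xP$ is immediate; and the proportionality-to-identity constraint transfers because $\sum_x K_{x,j}\propto \id$ implies $\sum_x \tilde K_{x,j}\propto P$, with the SDP's dimension factors reconciling precisely since $\tr P=2$. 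As $P\ket{\psi}=\ket{\psi}$, the SDP objective $\sum_j \bra{\psi}K_{j,j}\ket{\psi}$ is unchanged, so $\pg(\ket{\psi},\m)\leq \pg(\ket{\psi},\tilde{\m})$.

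To close the argument I would verify the hypothesis of Lemma 2 applied to $\tilde{\m}$: the eigenvalues of $\tilde M_1$ are $\lmax(M_1)$ and $\lmin(M_1)$, while those of $\tilde M_2$ are $1-\lmin(M_1)=\lmax(M_2)$ and $1-\lmax(M_1)=\lmin(M_2)$, so $\tr \tilde M_1\leq \tr \tilde M_2$ is exactly the corollary's hypothesis. Moreover, $\ket{\psi}$ is unbiased to the diagonal basis of $\tilde{\m}$ in $V$. Lemma 2 then yields $\pg(\ket{\psi},\tilde{\m})\leq 1-\tr \tilde M_1+\tfrac12(\tr\sqrt{\tilde M_1})^2$, which simplifies by elementary expansion of $(\sqrt{\lmax(M_1)}+\sqrt{\lmin(M_1)})^2$ to $1-\tfrac12\bigl(\sqrt{\lmax(M_1)}-\sqrt{\lmin(M_1)}\bigr)^2$, as claimed.

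Nothing here is particularly delicate; the main point to handle carefully is the proportionality-to-identity constraint under the projection, where the dimension factors in the SDP must reconcile via $\tr P=2$. Everything else is routine algebraic verification combined with a direct invocation of Lemma 2.
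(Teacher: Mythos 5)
Your proposal is correct and follows essentially the same route as the paper: project any feasible SDP decomposition onto the two-dimensional subspace spanned by the extreme eigenvectors of $M_1$, check that feasibility (including the proportionality-to-identity constraint, with the dimension factor handled by $\tr P = 2$) is preserved, and then invoke Lemma \ref{lemma: qubit 2-outcome upper app} for the projected qubit POVM. The only cosmetic difference is that you fix the unbiased state $\ket{\psi}$ in the subspace from the start rather than passing through $\pgs\big(\Pi\,\m\,\Pi\big)$, and you verify the trace-ordering hypothesis of the lemma explicitly, which the paper leaves implicit.
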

\begin{proof}
Consider a projector $\Pi$ and some state $\ket{\phi}$ that it stabilizes, i.e. $\Pi \ket{\phi}= \ket{\phi}$. The optimal guessing probability is bounded by 
\begin{equation}
 \pgs \big(  \m  \big) \leq \pg \big( \ket{\phi}, \, \m    \big) =  \pg \big( \Pi\ket{\phi},\,  \m   \big)
 \,.
\end{equation}
Note that any variables $\mathcal{K}_j=\{K_{x, j}\}_x$ that satisfy the constraints of the usual SDP \eqref{eqnA: pguess_main} for $\m$ will also satisfy 
\begin{equation}
    \begin{aligned}
     & \Pi K_{x, j} \Pi \, \geq \, 0 \;  \text{for all } \; x, j 
     \\
&\sum_{x}  d\, \Pi K_{x, j} \Pi =  \Pi \sum_{x}  \tr K_{x, j} \; \text{for all} \;j 
\\
&\sum_{j} \Pi K_{x,j} \Pi = \Pi M_{x} \Pi \; \text{ for all} \;  x\,,   
    \end{aligned}
\end{equation}
so we find 
\begin{equation}
   \pg \big( \Pi \ket{\phi},\,  \m   \big) \leq \pg \big( \Pi \ket{\phi},\,  \Pi\, \m \,\Pi   \big)
 \,,
\end{equation}
where 
\begin{equation}
    P \m P = \{\Pi\, M_x \,\Pi\}_x\,.
\end{equation}
For a fixed projector $\Pi$, we can choose $\ket{\phi}$ such that 
\begin{equation}
     \pg \big( \Pi \ket{\phi},\,  \Pi\, \m \,\Pi   \big) =  \pgs \big( \Pi\, \m \,\Pi   \big)
 \,,
\end{equation}
so for any $\Pi$, we find 
\begin{equation}
 \pgs \big(  \m  \big) \leq \pgs \big(  \Pi\,  \m \,\Pi  \big)
 \,,
\end{equation}
Setting
\begin{equation}
    P = \ketbra{\lmax}{\lmax} + \ketbra{\lmin}{\lmin}\,,
\end{equation}
where $\ket{\lmax}$ and $\ket{\lmin}$ are the eigenvectors of $M_1$ corresponding to the eigenvalues $\lmax \left( M_1 \right)$ and $\lmin \left( M_1 \right)$, and using Lemma \ref{lemma: qubit 2-outcome upper app}, we recover \eqref{eqnA: pguess_bound_2outcome}.
\end{proof}
Numerical evidence suggests that the upper bound in Corollary \ref{corr: qudit 2-outcome app} is saturated in general, but we have not proven this analytically. Instead, here we derive a \emph{lower} bound, which is not tight in general, on the guessing probability of a two-outcome POVM $\m$ using any state $\ket{\phi}$. We assume the eigenvalues $\{\lambda_i\}$ of $M_1$ are arranged in non-decreasing order, such that $\lmax \left( M_1\right) = \lambda_1$ and $\lmin \left( M_1 \right)= \lambda_d$. Defining $p= \frac{\lambda_1 + \lambda_d}{2}$, we consider the decomposition $\{\mathcal{K}_j\}$ with $\mathcal{K}_j= \{K_{1, j}, \, K_{2, j}\}$ such that 
\begin{equation}
    \begin{aligned}
        K_{1, 1} &= \diag\left( \min\{\lambda_i, \, p   \}_i \right)\,,
        \\
        K_{2, 1} &= \diag\left( \max\{p - \lambda_i, \, 0 \}_i \right)\,,
        \\
        K_{1, 2} &= \diag\left( \max\{ 0, \, \lambda_i -p   \}_i \right)\,,
        \\
        K_{2, 2} &= \diag\left( \min\{1-p, \,  1- \lambda_i   \}_i \right)\,,
    \end{aligned}
\end{equation}
where $\diag\left( . \right)$ is a diagonal matrix in the computational basis with the arguments as eigenvalues. This decomposition is well-defined, as all of its elements are positive semidefinite by construction, and
\begin{equation}
    \begin{aligned}
        K_{1, 1} + K_{2, 1} &= p \id\,, \qquad && K_{1, 2} + K_{2, 2} = \left( 1-p \right)\id\,, 
        \\
        K_{1, 1} + K_{1, 2} &= M_1\,, \qquad && K_{2, 1} + K_{2, 2} = \id - M_1 = M_2\,.
    \end{aligned}
\end{equation}
Note that
\begin{equation}
    K_{1, 1} + K_{2, 2} = \diag \left( \{1 - \abs{p-\lambda_i}\}_i \right)\,.
\end{equation}
For a given state $\ket{\phi}= \sum_{x} \phi_x\ket{x}$, where $\sum_x \phi_x^2=1$, we have the guessing probability 
\begin{align}
    \pg \left( \ket{\phi}, \, \m, \, \{\mathcal{K}_j\}\right) &= \sum_{x} \langle \phi | K_{x, x} | \phi\rangle = \sum_x \phi_x^2 \left( 1 - \abs{p-\lambda_x} \right) = 1 - \frac{1}{2} \sum_x \phi_x^2 \,\abs{\lambda_1 + \lambda_d -2 \lambda_x }
    \\
    &= 1 - \frac{1}{2} \sum_x \phi_x^2 \,\abs{ \,   \abs{\lambda_d - \lambda_x} - \abs{\lambda_1 - \lambda_x} \, }  \geq 1 - \frac{1}{2} \sum_x \phi_x^2 \,\abs{ \lambda_d - \lambda_1 } = 1 - \frac{\lmax \left( M_1\right) - \lmin \left( M_1\right)}{2}\,. 
\end{align}
This inequality is saturated only in the extreme cases where $\lmin \left( M_1\right)=0$ or $\lmax \left( M_1\right)=1$. In both of these cases, Eve can achieve the upper bound of Corollary \ref{corr: qudit 2-outcome app} by using the decomposition $\{\mathcal{K}_j\}$ for any state $\ket{\phi}$.

\section{Noisy projective measurements}\label{app: qudit noisy}

\subsection{Full proof of Lemma \ref{lemma: qudit lower app}}\label{app: lower qudit pguess}
\begin{lemma}\label{lemma: qudit lower app}
Let $\m_d$ be the noisy projective measurement in dimension $d$. The following lower bound holds,
 \begin{equation}\label{eqnA: lemma3}
  \pgs \big(  \m_d  \big) \geq  \frac{1}{d} \Big( \tr \sqrt{M_1} \Big)^2 \,.
\end{equation}   
\end{lemma}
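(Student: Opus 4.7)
The plan is to adapt the ``square-root decomposition'' used in Lemma \ref{lemma: qubit 2-outcome lower app} to dimension $d$, and then to bound the resulting expression using the very special structure of $\m_d$. The argument splits naturally into two parts: constructing a feasible decomposition $\{\mathcal{K}_j\}$ whose value in the primal SDP \eqref{eqnA: pguess_main} dominates $\sum_x \langle\phi|\sqrt{M_x}|\phi\rangle^2$, and then bounding this sum from below by $\tfrac{1}{d}(\tr\sqrt{M_1})^2$.

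For the first part, I would seek sub-POVMs $\mathcal{K}_j=\{K_{x,j}\}_{x=1}^d$ whose diagonal entries are the rank-one operators $K_{j,j}=\sqrt{M_j}\ketbra{\phi}{\phi}\sqrt{M_j}$, which by construction give
\begin{equation*}
\sum_j \bra{\phi}K_{j,j}\ket{\phi} \;=\; \sum_x \bra{\phi}\sqrt{M_x}\ket{\phi}^2 .
\end{equation*}
The off-diagonal operators $K_{x,j}$ with $x\neq j$ must then be chosen to enforce simultaneously the row constraint $\sum_j K_{x,j}=M_x$, the column constraint $\sum_x K_{x,j}\propto\id$, and positivity of every entry. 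The natural ansatz is to distribute the remainder $M_x - \sqrt{M_x}\ketbra{\phi}{\phi}\sqrt{M_x}$ symmetrically among the $d-1$ off-diagonal slots of each row and add identity-proportional pieces to correct the column normalisation, in the same spirit as \eqref{eqn: qubit_decomp_gen}. Once feasibility is verified, plugging into \eqref{eqnA: pguess_main} immediately yields $\pg(\ket{\phi},\m_d)\geq \sum_x \bra{\phi}\sqrt{M_x}\ket{\phi}^2$.

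For the second part, I would exploit the two-eigenvalue structure of $\sqrt{M_x}$ for the noisy projective measurement. Writing $\sqrt{M_x}=b\,\id + (a-b)\ketbra{x}{x}$ with $a=\sqrt{1-(d-1)\ve/d}$ and $b=\sqrt{\ve/d}$, one has $\sum_x \sqrt{M_x}=(\tr\sqrt{M_1})\,\id$, so that $\sum_x \bra{\phi}\sqrt{M_x}\ket{\phi}=\tr\sqrt{M_1}$ for every normalised $\ket{\phi}$. The Cauchy--Schwarz inequality (equivalently the quadratic mean--arithmetic mean inequality) then gives
\begin{equation*}
\sum_x \bra{\phi}\sqrt{M_x}\ket{\phi}^2 \;\geq\; \frac{1}{d}\Big(\sum_x \bra{\phi}\sqrt{M_x}\ket{\phi}\Big)^2 \;=\; \frac{1}{d}\big(\tr\sqrt{M_1}\big)^2,
\end{equation*}
with equality iff the inner products $\bra{\phi}\sqrt{M_x}\ket{\phi}$ are all equal in $x$, i.e.\ iff $\abs{\langle\phi|x\rangle}^2=1/d$ for every $x$. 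Since the lower bound holds for every pure state, it also bounds $\pgs(\m_d)=\min_{\ket{\phi}}\pg(\ket{\phi},\m_d)$.

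The principal obstacle is the construction and, especially, the positivity verification of the off-diagonal entries $K_{x,j}$. In the qubit case there is only one off-diagonal operator per column and positivity follows from a direct rank-one computation; in dimension $d$ one must split the remainder among $d-1$ entries while preserving positivity and column-normalisation simultaneously, and the construction must remain valid for \emph{arbitrary} $\ket{\phi}$, not only for the unbiased state that saturates the bound. A promising route is to use the permutation covariance of $\m_d$ together with the two-eigenvalue spectrum of $\sqrt{M_x}$ to reduce positivity of each $K_{x,j}$ to a low-dimensional spectral condition that can be checked explicitly.
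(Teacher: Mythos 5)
Your overall strategy is the paper's: a $d$-dimensional ``square-root decomposition'' giving $\pg(\ket\phi,\m_d)\geq\sum_x\bra\phi\sqrt{M_x}\ket\phi^2$, followed by the observation $\sum_x\sqrt{M_x}=(\tr\sqrt{M_1})\,\id$ and Cauchy--Schwarz. The second half of your argument is complete and correct, and coincides in substance with the paper's $\delta_x$-variance computation. The gap is in the first half, and it is the heart of the lemma: you never exhibit a feasible decomposition, and the ansatz you sketch would not work as described. If you set $K_{x,x}=\sqrt{M_x}\ketbra\phi\phi\sqrt{M_x}$ and split the remainder $M_x-\sqrt{M_x}\ketbra\phi\phi\sqrt{M_x}$ equally over the $d-1$ off-diagonal slots of the row, the row sums and positivity are fine but the column sums $\sum_x K_{x,j}$ are generically \emph{not} proportional to $\id$; and ``adding identity-proportional pieces to correct the column normalisation'' necessarily disturbs the row constraint $\sum_j K_{x,j}=M_x$ (or forces subtractions that threaten positivity), so the three constraints cannot be patched independently. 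The paper's construction is genuinely different in its details: the diagonal element is \emph{not} just $\sqrt{M_x}\ketbra\phi\phi\sqrt{M_x}$ but carries an extra positive-semidefinite correction $\tfrac{c_{\neq x}\ve}{d}\bigl(\id_{\neq x}-\ketbra{\phi_{\neq x}}{\phi_{\neq x}}\bigr)$ (harmless for the objective because $\ket\phi$ lies outside its support), and the off-diagonal elements are the specific rank-one operators $K_{x,j}=\ketbra{\phi_{xj}}{\phi_{xj}}$ with $\ket{\phi_{xj}}=\sqrt{M_x}\bigl(\langle\phi|j\rangle\ket{x}-\langle\phi|x\rangle\ket{j}\bigr)$, for which positivity is automatic and the row/column identities are verified by direct computation using the two-eigenvalue form of $\sqrt{M_x}$.

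Your proposed fallback, using permutation covariance to reduce positivity to a low-dimensional check, also does not close the gap here: the decomposition must exist for \emph{every} state $\ket\phi$ (the lemma lower-bounds $\pg(\ket\phi,\m_d)$ before minimizing over $\ket\phi$), and a generic $\ket\phi$ breaks the permutation symmetry, so symmetrization arguments of that kind are only available for the unbiased state (where the paper indeed uses them, but for the matching \emph{upper} bound, not for this lemma). As it stands, the feasibility of a decomposition achieving the value $\sum_x\bra\phi\sqrt{M_x}\ket\phi^2$ is asserted rather than proved, so the proof is incomplete until an explicit family such as the paper's $\{K_{x,j}\}$ is written down and its constraints checked.
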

\begin{proof}
We start by expressing Alice's chosen state $\ket{\phi}$ in the eigenbasis $\{\ket{x}\}$ of the POVM as 
\begin{equation}
    \ket{\phi}= \sum_{x} \langle{x | \phi} \rangle \ket{x}\,,
\end{equation}
where we assume for convenience that each term $\langle x | \phi \rangle$ is real (if they were complex, a unitary of the form $U = \sum_{k} e^{i \theta_{k}} \ketbra{k}{k}$ could be applied to make them real, leaving the POVM $\m_d$ unchanged). We also define the terms 
\begin{equation}
\id_{\neq x} = \sum_{k \neq x} \ketbra{k}{k}\,, \qquad c_{\neq x} =\langle{\phi | \id_{\neq x} | \phi} \rangle\,, \qquad \ket{\phi_{\neq x}} = \frac{\id_{\neq x} \ket{\phi}}{\sqrt{c_{\neq x}}}\,.
\end{equation}
The POVM $\m_d$ can be realized using the decomposition $\{\mathcal{K}_j\}$, $\mathcal{K}_j=\{K_{x, j}\}_x$, with
\begin{equation}\label{eqnA: qudit decomp}
    K_{x, j} = \begin{cases}
     \ketbra{\phi_x}{\phi_x}  + \frac{c_{\neq x} \, \ve}{d} \Big( \id_{\neq x} - \ketbra{\phi_{\neq x}}{\phi_{\neq x}} \Big)\,, & x = j\,,
     \\
     \ketbra{\phi_{xj}}{\phi_{xj}} \,, & x \neq j\,,
    \\
    \end{cases}
\end{equation}
where $\ket{\phi_x}$ and $\ket{\phi_{xj}}$ are unnormalized pure states given by 
\begin{equation}
\ket{\phi_x}= \sqrt{M_x}\ket{\phi} \,, \qquad\ket{\phi_{xj}}=  \sqrt{M_x} \Big( \langle{ \phi| j} \rangle\ket{x} - \langle{\phi| x} \rangle \ket{j} \Big)\,.   
\end{equation}
Each element $K_{x, j}$ is positive-semidefinite, so to prove that this decomposition is valid, we need only show that the final two conditions of \eqref{eqnA: pguess_main} hold, 
\begin{equation}
\sum_{x}  d K_{x, j} =  \id \sum_{x}  \tr K_{x, j}\text{ for all } j\,, \qquad \sum_{j} K_{x,j} = M_{x} \text{ for all } x\,. 
\end{equation}
Defining 
\begin{equation}
    A = d - \ve \left( d-1 \right)\,,
\end{equation}
we can expand $K_{x, x}$ in the basis $\{\ket{x}\}$ as 
\begin{equation}
    K_{x,x} = \,\frac{1}{d} \bigg( A  \langle{\phi | x} \rangle^2  \ketbra{x}{x}  + c_{\neq  x}\, \ve \, \id_{\neq  x} + \langle{\phi | x} \rangle \sqrt{A\ve \, c_{\neq x}}\,  \Big( \ketbra{x}{\phi_{\neq x}} + \ketbra{\phi_{\neq  x}}{x}\Big)     \bigg)\,,
\end{equation}
while for $x \neq j$, we can expand $K_{x, j}$ as 
\begin{equation}\label{eqn: Nxj}
    K_{x,j} = \, \frac{1}{d} \bigg( A  \langle{ \phi | j} \rangle^2     \ketbra{x}{x} + \langle{x | \phi} \rangle^2\, \ve \,\ketbra{j}{j} - \langle{ x| \phi} \rangle \langle{j | \phi} \rangle \sqrt{A \ve \, } \, \Big( \ketbra{j}{x} + \ketbra{x}{j} \Big)  \bigg)\,.
\end{equation}
The summation of \eqref{eqn: Nxj} over $j \neq x$ gives us
\begin{equation}
    \sum_{j \neq x} K_{x, j} = \,  \frac{1}{d} \bigg(       A\,  c_{\neq x} \ketbra{x}{x} + \langle{x | \phi} \rangle^2 \, \ve \, \id_{\neq x} - \langle{x| \phi} \rangle  \sqrt{A \ve \, c_{\neq x}}  \Big( \ketbra{\phi_{\neq x}}{x} + \ketbra{x}{\phi_{\neq x}} \Big)  \bigg)\,,
\end{equation}
while the summation over $x \neq j$ gives us
\begin{equation}
 \sum_{x \neq j}   K_{x,j} = \, \frac{1}{d} \bigg(   \langle{\phi | j} \rangle^2    A \id_{\neq j} + c_{\neq j}\, \ve \, \ketbra{j}{j} - \langle{j| \phi} \rangle \sqrt{A \ve \, c_{\neq j}} \, \Big( \ketbra{\phi_{\neq j}}{j} + \ketbra{j}{\phi_{\neq j}} \Big)  \bigg)\,.
\end{equation}
Noting that $\langle{x|\phi} \rangle^2 + c_{\neq x} =1$, the total sum of the elements over $j$ is
\begin{equation}
    \sum_{j} K_{x, j} = \frac{1}{d} \bigg( A \ketbra{x}{x} + \ve \id_{\neq x} \bigg) = M_x\,,
\end{equation}
while the total sum of the elements over $x$ is 
\begin{equation}
    \sum_{x} K_{x, j} = \frac{1}{d} \Big( A \langle{\phi| j} \rangle^2 + c_{\neq j} \, \ve \Big) \Big( \ketbra{j}{j} + \id_{\neq j} \Big) \propto \id\,.
\end{equation}
The conditions \eqref{eqn: Nxj} are satisfied, so the decomposition $\{\mathcal{K}_j\}$ from \eqref{eqnA: qudit decomp} is valid for any state $\ket{\phi}$.

We can bound the guessing probability for any state $\ket{\phi}$ using this decomposition,
\begin{equation}
    \pg \left(\ket{\phi}, \, \m \right) \geq  \sum_{x} \langle{\phi | K_{x, x} | \phi  } \rangle= \sum_{x} \langle{\phi | \sqrt{M_{x}} | \phi } \rangle^2\,.
\end{equation}
Note that 
\begin{equation}
    \langle{\phi | \sqrt{M_x} | \phi } \rangle = \frac{1}{\sqrt{d}} \bigg( \sqrt{A} \langle{\phi | x} \rangle^2 + \sqrt{\ve}\, c_{\neq x} \bigg)\,,
\end{equation}
such that 
\begin{equation}
    \sum_{x} \langle{\phi | \sqrt{M_x} | \phi } \rangle = \frac{1}{\sqrt{d}} \bigg( \sqrt{A} + (d-1)\sqrt{\ve} \bigg) = \tr \sqrt{M_x}\,.
\end{equation}
To quantify the deviation of the state $\ket{\phi}$ from the state $\ket{\psi}= \frac{1}{\sqrt{d}} \left(1, ..., 1 \right)^{T}$ which is unbiased to the basis $\{\ket{x}\}$, we introduce the real parameters
\begin{equation}
  \delta_{x} =  \langle{\phi | \sqrt{M_{x}} | \phi } \rangle - \frac{\tr \sqrt{M_x}}{d}\,, \end{equation}
where we must have 
\begin{equation}
    \sum_{x}{\delta_x} = 0\,.
\end{equation}
We then find
\begin{equation}
 \sum_{x}   \langle{ \phi | \sqrt{M_x} | \phi } \rangle^2 = \sum_{x} \left( \frac{\tr \sqrt{M_x}}{d} \right)^2 + \sum_{x} \delta_x^2 = \frac{1}{d} \left( \tr \sqrt{M_1} \right)^2 + \sum_{x} \delta_{x}^{2}\,, 
\end{equation}
where in the second equality, we use the fact that the $\tr \sqrt{M_x}$ terms are equal. Since $\sum_{x} \delta_{x}^{2}$ is non-negative, Eve will always achieve at least 
\begin{equation}
    \pgs \left( \m \right) \geq \frac{1}{d} \left( \tr \sqrt{M_1} \right)^2\,,
\end{equation}
and she can achieve strictly more when there are some nonzero parameters $\delta_x$. Since 
\begin{equation}
    \delta_x = \frac{1}{\sqrt{d}} \Big(\langle{\phi | x} \rangle^2 - \frac{1}{d} \Big)\, \bigg( \sqrt{A} - \sqrt{\ve} \bigg)\,, 
\end{equation}
when $ \langle{\phi | x} \rangle^2 \neq \frac{1}{d}$ for some $x$, i.e. when $\ket{\phi}$ is not unbiased to the measurement basis $\{\ket{x}\}$, Eve's guessing probability is bounded strictly from below by
\begin{equation}
    \pg \left( \ket{\phi}, \m \right) > \frac{1}{d} \left( \tr \sqrt{M_1} \right)^2\,.
\end{equation}

\end{proof}

\subsection{Proof of Lemma \ref{lemma: qudit upper app} based on dual variables}\label{app: upper qudit pguess}
\begin{lemma}\label{lemma: qudit upper app}
Let $\m_d$ be the noisy projective measurement in dimension $d$, and let $\ket{\psi}= \frac{1}{\sqrt{d}}\left(1, ..., 1 \right)^{T} $. Then
\begin{equation}\label{eqn: lemma4 app}
  \pg \big( \ket{\psi}, \, \m  \big) \leq  \frac{1}{d} \Big( \tr \sqrt{M_1} \Big)^2 \,.
\end{equation}  
\end{lemma}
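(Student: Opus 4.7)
My plan is to invoke strong SDP duality: since the primal (\ref{eqnA: pguess_main}) and the dual (\ref{eqnA: dualG}) are both strictly feasible whenever the elements of $\m_d$ are full-rank (i.e.\ for $\ve>0$), any dual-feasible point furnishes an upper bound on $\pg(\ket{\psi}, \m_d)$. The task therefore reduces to exhibiting explicit dual variables $\{Y_x\}, \{G_j\}$ with $\tr G_j = 0$, $Y_x \geq \delta_{xj}\ketbra{\psi}{\psi} + G_j$ for all $x, j$, and objective $\sum_x \tr(Y_x M_x) = (\tr\sqrt{M_1})^2/d$; the boundary case $\ve=0$ can then be recovered by continuity.

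To guess the ansatz, I would use complementary slackness (\ref{eqnA: slackness}) applied to the square-root primal decomposition from Lemma \ref{lemma: qudit lower app}. At $\ket{\phi}=\ket{\psi}$ that decomposition saturates the chain (\ref{eqn: qudit decomp bound}), so it is primal-optimal; slackness then forces $Y_x - \delta_{xj}\ketbra{\psi}{\psi} - G_j$ to annihilate the (rank-one) support of each $K^{*}_{x,j}$, namely the line spanned by $\sqrt{M_x}\ket{\psi}$ when $x=j$, and by $\sqrt{M_x}(\ket{x}-\ket{j})$ when $x\neq j$. Because the pair $(\m_d, \ket{\psi})$ is invariant under the symmetric group $S_d$ acting by basis permutations, I can additionally impose $Y_{\pi(x)} = U_\pi Y_x U_\pi^\dagger$ and $G_{\pi(j)} = U_\pi G_j U_\pi^\dagger$. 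This leaves a low-dimensional ansatz built from the $S_d$-covariant operators $\sqrt{M_x}$, $M_x$, $\id$, $\ketbra{\psi}{\psi}$ (for $Y_x$) and their analogues indexed by $j$ (for $G_j$), with a handful of real coefficients to be fixed.

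The verification then splits into three steps: (i) determine the coefficients by imposing the slackness equalities and the tracelessness of $G_j$; (ii) check the PSD inequality $Y_x - \delta_{xj}\ketbra{\psi}{\psi} - G_j \geq 0$ for every pair $(x, j)$; and (iii) evaluate $\sum_x \tr(Y_x M_x)$ using $a = \sqrt{1-(d-1)\ve/d}$, $b = \sqrt{\ve/d}$ and $\tr\sqrt{M_1} = a+(d-1)b$, and confirm it collapses to $(\tr\sqrt{M_1})^2/d$. The main obstacle is step (ii). In a basis adapted to $\{\ket{x}, \ket{j}, \ket{\psi}\}$ the operator $Y_x - \delta_{xj}\ketbra{\psi}{\psi} - G_j$ is block-diagonal with the nontrivial block at most three-dimensional (the remaining $d{-}3$ eigenvalues are constants forced to be non-negative by the ansatz). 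Positivity of the block reduces to a $2\times 2$ or $3\times 3$ determinantal identity in $a$, $b$, and $d$ that must vanish exactly at the values dictated by (i); reconciling the $j = x$ and $j \neq x$ cases against this single algebraic identity, and checking that the determinant is not merely zero but non-negative throughout the parameter range, is where the technical work concentrates.
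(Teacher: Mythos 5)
Your plan is essentially the paper's own proof in Appendix \ref{app: upper qudit pguess}: produce dual-feasible $\{Y_x\},\{G_j\}$ guessed by imposing complementary slackness against the square-root decomposition at $\ket{\psi}$, then verify $Y_x-\delta_{xj}\ketbra{\psi}{\psi}-G_j\geq 0$ by showing each slack operator has small rank together with non-negative trace/determinant, and evaluate $\sum_x\tr\big(Y_xM_x\big)=\frac{1}{d}\big(\tr\sqrt{M_1}\big)^2$. The only correction worth noting is that for $x=j$ the element $K_{x,x}$ is rank $d-1$ rather than rank one, and it is precisely this larger annihilated subspace that forces $Y_x-\ketbra{\psi}{\psi}-G_x$ down to rank at most one; apart from that your outline matches the paper, which also records an alternative, purely primal proof by permutation-averaging in Appendix \ref{app: upper qudit pguess perm}.
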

\begin{proof}
We prove Lemma \ref{lemma: qudit upper app} by finding feasible dual variables $\{Y_x\}$ and $\{G_j\}$ that satisfy the constraints of the dual problem \eqref{eqnA: dualG} and achieve the bound
\begin{equation}\label{eqnA: dual var bound}
  \pg \big( \ket{\psi},\, \m  \big) \leq  \sum_{x}  \tr \big( Y_x \, M_x \big) = \frac{1}{d} \Big( \tr \sqrt{M_1} \Big)^2\,,
\end{equation}
where $\ket{\psi}= \frac{1}{\sqrt{d}} \left(1, ..., 1 \right)^{T}$.
First, it is helpful to set notation for Eve's POVM decomposition \eqref{eqnA: qudit decomp} for the state $\ket{\psi}$. We define
\begin{equation}\label{eqnA: psi decomp}
    K_{x, j} = \begin{cases}
     \ketbra{\psi_x}{\psi_x}  + \frac{d-1 }{d^2} \ve \,\Big( \id_{\neq x} - \ketbra{\psi_{\neq x}}{\psi_{\neq x}} \Big)\,,  & x = j\,,
     \\
     \frac{A+\ve}{d^2}\ketbra{\psi_{xj}}{\psi_{xj}} \,,   & x \neq j\,,
    \\
    \end{cases}
\end{equation}
where
\begin{equation}\label{eqn: A_meaning}
    A = d - \ve \left( d-1 \right) \,,  \qquad \id_{\neq x} := \sum_{k \neq x} \ketbra{k}{k}\,
\end{equation}
as before, and $\ket{\psi_{x}}$, $\ket{\psi_{xj}}$ and $\ket{\psi_{\neq x}}$ are (not necessarily normalized) vectors given by
\begin{equation}
\ket{\psi_x}= \sqrt{M_x}\ket{\psi}\,, \qquad \ket{\psi_{xj}} = \sqrt{\frac{d}{A+\ve}} \sqrt{M_x} \big( \ket{x} - \ket{j} \big)\,,   \qquad \ket{\psi_{\neq x}} =  \sqrt{\frac{d}{d-1}} \id_{\neq x} \ket{\psi}\,.
\end{equation}
Introducing the constants 
\begin{equation}
    \begin{aligned}
     \alpha &= \frac{d-1}{d} - \frac{d-1}{d^2} \tr \sqrt{M_1} \sqrt{\frac{d}{\ve}}\,, 
\\
\beta &= \frac{d-1}{d} - \frac{d-2}{d^2} \tr \sqrt{M_1} \sqrt{\frac{d}{\ve}}\,, 
\\
   \gamma &= \frac{\tr \sqrt{M_1}}{d \sqrt{d} } \sqrt{\frac{d-1}{\ve}} \frac{\sqrt{A }- \sqrt{\ve} }{\sqrt{A } + \sqrt{\ve} }\,,      
    \end{aligned}
\end{equation}
we can finally define our dual variables, 
\begin{equation}\label{eqnA: dual vars}
    \begin{aligned}
     Y_x &= \frac{1}{d^2} \tr \sqrt{M_1}\, {M_x}^{-\frac{1}{2}} + T_x\,, 
    \\
   T_x &= - \gamma \big( \ketbra{x}{\psi_{\neq x}} + \ketbra{\psi_{\neq x}}{x}  \big)\,, 
   \\ 
G_j &= T_j - \frac{d-1}{d} \ketbra{\psi}{\psi} - \alpha \big( \id_{\neq j} - \ketbra{\psi_{\neq j}}{\psi_{\neq j}} \big)   + \beta \Big( \id - d \sqrt{M_x} \ketbra{\psi}{\psi} \sqrt{M_x} \Big)\,.   
    \end{aligned}
\end{equation}
The variables $Y_x$, $T_x$ and $G_j$ are all Hermitian, and it is straightforward to see that $T_x$ is traceless, so it follows that $G_j$ must also be traceless. We also have
\begin{equation}
    \tr \left( M_{x} T_x \right) = \frac{1}{d} \Big( A \langle{x | T_x | x} \rangle + \ve \sum_{k \neq x} \langle{k | T_x | k} \rangle  \Big) = 0 \quad \textnormal{for all} \; \,x\,,
\end{equation}
such that 
\begin{equation}\label{eqnA: Yx correct}
    \sum_{x} \tr \left( Y_x M_x \right)  = \frac{1}{d} \left( \tr \sqrt{M_1} \right)^2\,.
\end{equation}
There just remains the more formidable task of showing that
\begin{equation}\label{eqnA: dual cond}
    Y_x - \delta_{xj}\ketbra{\psi}{\psi} - G_j \geq 0 
\end{equation}
for all $x$ and $j$. To do this, we first bound the rank of the left-hand side of \eqref{eqnA: dual cond} by proving that the complementary slackness condition
\begin{equation}
    K_{x, j} \big( Y_x - \delta_{xj}\ketbra{\psi}{\psi} - G_j \big)=0 \quad \textnormal{for all} \; \; x,\, j\,
\end{equation}
holds for the decomposition $\{\mathcal{K}_j\}$ from \eqref{eqnA: psi decomp} and the dual variables from \eqref{eqnA: dual vars}. When $x=j$, we find
\begin{align}
  K_{x, x} \big(Y_x - \ketbra{\psi}{\psi} - G_x \big) \nonumber
   &= \sqrt{M_x} \ket{\psi} \bigg( \left( \frac{1}{d^2} \tr \sqrt{M_1} - \frac{1}{d} \langle{\psi | \sqrt{M_x} | \psi } \rangle \right) \bra{\psi} 
   + \alpha \Big( \bra{\psi} \sqrt{M_x}\, \id_{\neq x} - \bra{\psi} \sqrt{M_x} \ketbra{\psi_{\neq x}}{\psi_{\neq x}} \Big) \nonumber
  \\
  & \qquad \qquad \qquad \; \; - \beta \Big( 1 - d \langle{\psi | M_x | \psi} \rangle \Big) \bra{\psi} \sqrt{M_x} 
  \bigg) \nonumber
  \\
 &  \; \; \; \; + \ve \frac{d-1}{d} \bigg( \frac{1}{d^2} \tr \sqrt{M_1}\, \id_{\neq x} \,M_{x}^{-1/2}
  - \frac{1}{d^2} \tr \sqrt{M_1} \ketbra{\psi_{\neq x}}{\psi_{\neq x}} M_x^{-1/2}
  + \alpha \Big( \id_{\neq x} - \ketbra{\psi_{\neq x}}{\psi_{\neq x}} \Big) \nonumber
\\
& \qquad \qquad \qquad \; \;- \beta \Big( \id_{\neq x}
 - \ketbra{\psi_{\neq x}}{\psi_{\neq x}} - d \id_{\neq x} \sqrt{M_x} \ketbra{\psi}{\psi} \sqrt{M_x} \nonumber
 \\
& \qquad \qquad \qquad \qquad \qquad \qquad \qquad \qquad \;\; \,+ d \langle{\psi_{\neq x} | \sqrt{M_x} | \psi } \rangle \ketbra{\psi_{\neq x}}{\psi} \sqrt{M_x}\, \Big)
  \bigg) 
  \\
  &= \ve \, \frac{d-1}{d} \bigg(  \frac{1}{d^2} \tr \sqrt{M_1} \sqrt{\frac{d}{\ve}} + \alpha - \beta  \bigg) \Big(\id_{\neq x} - \ketbra{\psi_{\neq x}}{\psi_{\neq x}} \Big)  =0\,,
\end{align}
where we use the properties 
\begin{equation}
    \langle{\psi | \sqrt{M_x} | \psi} \rangle = \frac{1}{d} \tr \sqrt{M_1}\,, \qquad \id_{\neq x} \sqrt{M_1} = \sqrt{\frac{\ve}{d}} \id_{\neq x}\,, \qquad \langle{\psi | M_x | \psi} \rangle = \frac{1}{d}\,, \qquad \alpha - \beta = - \frac{1}{d^2} \tr \sqrt{M_1} \sqrt{\frac{d}{\ve}}\,.
    \end{equation}
Note that 
\begin{equation}
    \bra{\psi} \sqrt{M_x}\, \Big( \id_{\neq x} - \ketbra{\psi_{\neq x}}{\psi_{\neq x}} \Big) = \frac{\sqrt{\ve \left( d-1 \right) }}{d} \bra{\psi_{\neq x}} \Big( \id_{\neq x} - \ketbra{\psi_{\neq x}}{\psi_{\neq x}} \Big)=0\,,
\end{equation}
so the term $K_{x,x}$ has rank $d-1$. Since 
\begin{equation}
    K_{x, x} \big(Y_x - \ketbra{\psi}{\psi} - G_x \big) = \big(Y_x - \ketbra{\psi}{\psi} - G_x \big)K_{x, x} =0\,,
\end{equation}
we learn that $Y_x - \ketbra{\psi}{\psi} - G_x $ has rank one at most. To prove that it's positive semidefinite, we need only show that its trace is non-negative,
\begin{equation}
    \tr \big(Y_x - \ketbra{\psi}{\psi} - G_x \big) =  \frac{1}{d^2} \tr \sqrt{M_1}\, \tr\left({M_x}^{-1/2}\right) - 1= \frac{d-1}{d^2} \frac{\left(\sqrt{A}- \sqrt{\ve}\right)^2}{\sqrt{A \ve}} \geq 0\,. 
\end{equation}
We proceed similarly to show that $Y_x- G_j$ is positive semidefinite for $x \neq j$. We can expand it as 
\begin{equation}
Y_x- G_j=  \frac{1}{d^2} \tr \sqrt{M_1}\, {M_x}^{-\frac{1}{2}} + T_x  - T_j + \frac{d-1}{d} \ketbra{\psi}{\psi} + \alpha \big( \id_{\neq j} - \ketbra{\psi_{\neq j}}{\psi_{\neq j}} \big) - \beta \big( \id - d \sqrt{M_j}\ketbra{\psi}{\psi} \sqrt{M_j} \big)\,, 
\end{equation}
where
\begin{equation}
    T_x -T_j = - \gamma \sqrt{\frac{d-2}{d-1}} \Big( \big( \ket{x} - \ket{j}  \big) \bra{\psi_{\neq x, j}} + \ket{\psi_{\neq x, j}}\big( \bra{x} -\bra{j}  \big)  \Big)\,. 
\end{equation}
Defining the unnormalized state
\begin{equation}
    \ket{\tilde{\psi}_{xj}}:= \frac{d}{\sqrt{A+\ve}} \ket{\psi_{xj}}\,,
\end{equation}
we again prove the complementary slackness condition \eqref{eqnA: slackness}, 
\begin{equation}
    K_{x, j}  \big( Y_x- G_j \big) =  \ketbra{\tilde{\psi}_{xj}}{\tilde{\psi}_{xj}} \big( Y_x- G_j \big)=0\,.
\end{equation}
For convenience, however, we do this component by component, proving separately that 
\begin{equation}\label{eqn: vec cond dual}
\bra{\tilde{\psi}_{xj}} Y_x - G_j  \ket{x} =  \bra{\tilde{\psi}_{xj}} Y_x - G_j  \ket{j} = \bra{\tilde{\psi}_{xj}} Y_x - G_j \ket{k} =0\,,
\end{equation}
where $k \neq x, j$. First we have
\begin{align}\label{eqnA: x zero}
\langle{\tilde{\psi}_{xj}| Y_x - G_j | x } \rangle
 &= \bra{\tilde{\psi}_{xj}} \bigg( \frac{1}{d^2}\tr \sqrt{M_1} \sqrt{\frac{d}{A}} \ket{x} - \gamma \sqrt{\frac{d-2}{d-1}} \ket{\psi_{\neq x, j}} + \frac{d-1}{d \sqrt{d}} \ket{\psi} \nonumber
 \\
& \qquad \qquad \; \; \; + \alpha \Big( \ket{x} - \frac{1}{\sqrt{d-1}} \ket{\psi_{\neq j}}\Big) - \beta  \big( \ket{x} - \sqrt{\ve} \sqrt{M_j}\ket{\psi}  \big)   \bigg)   
  \\
  &=  \frac{1}{d^2}\tr \sqrt{M_1} + \frac{d-1}{d^2 \sqrt{d}} \left( \sqrt{A} - \sqrt{\ve} \right) + \alpha \sqrt{\frac{A}{d}} \frac{d-2}{d-1} - \beta \sqrt{\frac{A}{d}}       
  \\
  &=   \frac{1}{d^2}\tr \sqrt{M_1} - \frac{1}{d^2 \sqrt{d}} \Big( \sqrt{A} + \left( d-1 \right) \sqrt{\ve} \Big) =0\,.
  \end{align}
Similarly, we have 
\begin{align}\label{eqnA: j zero}
 \,\langle{\tilde{\psi}_{xj}| Y_x - G_j | j } \rangle 
 &= \bra{\tilde{\psi}_{xj}} \bigg( \frac{1}{d^2}\tr \sqrt{M_1} \sqrt{\frac{d}{\ve}} \ket{j} + \gamma \sqrt{\frac{d-2}{d-1}} \ket{\psi_{\neq x, j}} \nonumber
 \\
 & \qquad \qquad \;\;\; + \frac{d-1}{d \sqrt{d}} \ket{\psi} - \beta  \big( \ket{j} - \sqrt{A} \sqrt{M_j}\ket{\psi}  \big)   \bigg)   
  \\
  &=  -\frac{1}{d^2}\tr \sqrt{M_1} + \frac{d-1}{d^2 \sqrt{d}} \left( \sqrt{A} - \sqrt{\ve} \right)  + \beta \sqrt{\frac{\ve}{d}}       
  \\
  &=   -\frac{d-1}{d^2}\tr \sqrt{M_1} + \frac{d-1}{d^2 \sqrt{d}} \Big( \sqrt{A} + \left( d-1 \right) \sqrt{\ve}  \Big) =0\,,
  \end{align}
and finally,
\begin{align}\label{eqnA: k zero}
\langle{\tilde{\psi}_{xj}| Y_x - G_j | k } \rangle &= \bra{\tilde{\psi}_{xj}} \bigg( \frac{1}{d^2}\tr \sqrt{M_1} \sqrt{\frac{d}{\ve}} \ket{k} -  \frac{\gamma}{\sqrt{d-1}} \big( \ket{x} - \ket{j} \big) \nonumber
\\
& \qquad \qquad \;\;\; + \frac{d-1}{d \sqrt{d}} \ket{\psi} + \alpha \Big( \ket{k} - \frac{1}{\sqrt{d-1}} \ket{\psi_{\neq j}}\Big) - \beta  \big( \ket{k} - \sqrt{\ve} \sqrt{M_j}\ket{\psi}  \big)   \bigg)  
  \\
  &=  - \frac{\gamma}{\sqrt{d(d-1)}}\big( \sqrt{A} + \sqrt{\ve} \big) + \frac{d-1}{d^2 \sqrt{d}} \left( \sqrt{A} - \sqrt{\ve} \right) - \frac{\alpha}{d-1}  \sqrt{\frac{A}{d}}         
  \\
  &=   \frac{1}{d^2}\tr \sqrt{M_1} - \frac{1}{d^2 \sqrt{d}} \Big( \sqrt{A} + \left( d-1 \right) \sqrt{\ve} \Big) =0\,
  \end{align}
for any $k \neq x, j$. Since \eqref{eqn: vec cond dual} holds, we find that $\ket{\psi_{xj}}$ is a zero eigenstate of $Y_x-G_j$. Now consider the projector $\Pi_{\neq x, j}$ defined by
\begin{equation}
  \Pi_{\neq x, j}=\id_{\neq x, j} - \ketbra{\psi_{\neq x, j}}{\psi_{\neq x, j}}\,, \qquad  \ket{\psi_{\neq x, j}} = \frac{1}{\sqrt{d-2}} \sum_{k \neq x, j} \ket{k}\,,
\end{equation}
which has rank $d-3$ and is orthogonal to the state $\ket{\psi_{xj}}$. We have 
\begin{equation}
    \Pi_{\neq x, j} \big( Y_x - G_j \big) =
     \Pi_{\neq x, j} \Big( \frac{1}{d^2} \tr \sqrt{M_1} \sqrt{\frac{d}{\ve}} + \alpha - \beta \Big) =0\,,
\end{equation}
where we used
\begin{equation}
    \id_{\neq x, j} \ket{\psi} = \sqrt{\frac{d-2}{d}} \ket{\psi_{\neq x, j}}\,, \qquad \id_{\neq x, j} \ket{\psi_{\neq j}}= \sqrt{\frac{d-2}{d-1}} \ket{\psi_{\neq x, j}}\,.
\end{equation}
Since it is orthogonal to both the projectors $\ketbra{\psi_{xj}}{\psi_{xj}}$ and $\Pi_{\neq x, j}$, we find that $Y_x - G_j$ has rank two at most. As such, since it is Hermitian, 
$Y_x - G_j$ is positive-semidefinite if its determinant is non-negative. To find this determinant, we can express $Y_x - G_j$ in any two-vector orthonormal basis that lives in its support. Such a basis is given by the pair of vectors $\ket{\psi_{\neq x, j}}$ and $\ket{\psi_{xj}^{\perp}}$, where 
\begin{equation}
    \ket{\psi_{xj}^{\perp}}= 
     \frac{1}{\sqrt{A+ \ve}} \left( {\sqrt{\ve}} \ket{x} + \sqrt{A} \ket{j}  \right)\,.
\end{equation}
In this basis, then we can write 
\begin{equation}
    Y_x - G_j = 
\begin{pmatrix}
 \langle \psi_{\neq x, j} | Y_x - G_j | \psi_{\neq x, j} \rangle  &    \langle{\psi_{\neq x, j} | Y_x - G_j | \psi_{x, j}^\perp} \rangle
 \vspace{0.3 cm}
 \\
 \,\, \, \,\langle{\psi_{x, j} ^\perp | Y_x - G_j | \psi_{\neq x, j}} \rangle  & \; \; \,  \langle{ \psi_{x, j}^\perp | Y_x - G_j | \psi_{x, j}^\perp} \rangle
    \end{pmatrix}_{\ket{\psi_{\neq x, j}}, \, \ket{\psi^{\perp}_{x, j}} }\,.
\end{equation}
The matrix coefficients are
\begin{equation}
    \begin{aligned}
     \langle{\psi_{\neq x, j} | Y_x - G_j | \psi_{\neq x, j}} \rangle &= \frac{d-2}{d^3} \left( d(d-1+\ve) + \frac{\left( A + \ve \right) \left( \sqrt{A} - \sqrt{\ve} \right)  
 }{\sqrt{\ve}}   \right)   \,  
 \\
\langle{\psi_{\neq x, j} | Y_x - G_j | \psi_{x, j}^\perp} \rangle &=   \frac{ \sqrt{d}\,\sqrt{d-2}\, \sqrt{A+\ve}\, \left(1 + \ve \right)  }{d^2 \left( \sqrt{A} + \sqrt{\ve}  \right) \sqrt{\ve}}  \tr \sqrt{M_1}   \,
\\
\langle{\psi_{x, j}^\perp | Y_x - G_j | \psi_{ x, j}^\perp} \rangle &= \frac{A+ \ve}{d^2 \sqrt{A \ve} }  \left( \tr \sqrt{M_1} \right)^2     \,,   
    \end{aligned}
\end{equation}
so the determinant is then
\begin{align}
\textnormal{det} \left( Y_x -G_j \right) &= \langle{\psi_{\neq x, j} | Y_x - G_j | \psi_{\neq x, j}} \rangle \langle{\psi_{x, j}^\perp | Y_x - G_j | \psi_{ x, j}^\perp} \rangle - \langle{\psi_{\neq x, j} | Y_x - G_j | \psi_{x, j}^\perp} \rangle^2 
\\
&= \frac{ \left(d-2 \right) \left(A+\ve \right)\, \left( \sqrt{A} - \sqrt{\ve} \right)^2 \, \left( 1 + \sqrt{A\ve} \right)^2    }{d^2  \sqrt{A \ve}\, \left(\sqrt{A} + \sqrt{\ve }\right)^4}\,,
\end{align}
which is always non-negative. We have proven then that the variables $\{Y_x\}$ and $\{G_j\}$ given in \eqref{eqnA: dual vars} are valid, so from \eqref{eqnA: Yx correct}, the inequality \eqref{eqnA: dual var bound} holds, proving Lemma \ref{lemma: qudit upper app}.
\end{proof}

\subsection{Proof of Lemma \ref{lemma: qudit upper app} based on permutations}\label{app: upper qudit pguess perm}
\begin{proof}
 We prove that the decomposition $\{\mathcal{K}_j\}$ from \eqref{eqnA: qudit decomp} in Appendix \ref{app: upper qudit pguess} is optimal for the unbiased state $\ket{\psi}$. First, note that because all of the elements of $\ket{\psi}$ are real when expressed in the computational basis, we are free to assume that all the coefficients of $K_{x, x}$ are real in this basis too, as the imaginary parts do not contribute to the guessing probability.
From now on, we assume that $\{\mathcal{K}_j\}$ has only real elements. We denote by $\sigma \left( . \right)$ a function that permutes elements in $x \in 1, ..., d$, and we denote the corresponding permutation matrix in the basis $\{\ket{x}\}$ by $\Pi_{\sigma}$. Notice that the composition of permutations $\sigma_a \circ \sigma_b \left( .\right) $ corresponds to the matrix product $\Pi_{\sigma_a \circ \sigma_b}=\Pi_{\sigma_a}\Pi_{\sigma_b}$. Starting with $\{\mathcal{K}_j\}$, we can define a new decomposition into POVMs $\mathcal{K}^{\sigma}_{j}=\{K^{\sigma}_{x, j}\}_x$  by averaging over all $d!$ possible permutations of the elements $K_{x, j}$, 
\begin{equation}\label{eqnA: Kperm}
    K^{\sigma}_{x, j} = \frac{1}{d!} \sum_{\sigma} \Pi_{\sigma}^{T} \, K_{\sigma \left(x \right), \, \sigma \left(j \right)} \,\Pi_\sigma \,.
\end{equation}
This decomposition is valid, as it satisfies 
\begin{align}
    \sum_x K^{\sigma}_{x, j} &= \frac{1}{d!} \sum_{\sigma} \Pi^{T}_{\sigma} \left( \sum_x K_{\sigma \left( x \right), \, \sigma \left( j \right) } \right) \Pi_{\sigma} = \frac{1}{d!} \sum_{\sigma} \Pi^{T}_{\sigma} \left( \frac{1}{d} \sum_x \tr K_{\sigma \left(x \right), \, \sigma \left( j \right)} \id \right) \Pi_{\sigma} = \frac{1}{d} \tr K^{\sigma}_{\sigma x, \, j} \id\,,
    \\
   \sum_j K^{\sigma}_{x, j} &=  \frac{1}{d!} \sum_{\sigma} \Pi^{T}_{\sigma} \left( \sum_j K_{\sigma \left( x \right), \, \sigma \left( j \right) } \right) \Pi_{\sigma} = \frac{1}{d!} \sum_{\sigma} \Pi^{T}_{\sigma}  M_{\sigma \left( x \right)}  \Pi_{\sigma} = M_x\,,
\end{align}
where the final equality uses the fact that, since $\m_d$ is diagonal in the basis $\{\ket{x}\}$, we have $\Pi^{T}_{\sigma} M_{\sigma \left( x\right)} \Pi_{\sigma} = M_x$ for any $x$ and any permutation $\sigma$. Note that the trace of each $K_{x, j}$ is symmetric under the interchange of $x$ and $j$, so we have
\begin{equation}
    \sum_x \tr K^{\sigma}_{x, j} = \sum_j \tr K^{\sigma}_{x, j} = \tr M_x =1\,.
\end{equation}
The new decomposition $\{\mathcal{K}^{\sigma}_{j}\}$ gives the same guessing probability as $\{\mathcal{K}_j\}$, because the unbiased state $\ket{\psi}$ is invariant under all permutations $\sigma$, i.e.
\begin{equation}
    \pg \left( \ket{\psi}, \, \m_d\right) = \sum_x \langle \psi | K^{\sigma}_{x, x} | \psi \rangle = \frac{1}{d!}  \sum_{x, \sigma} \langle \psi | \Pi_{\sigma}^{T} \, K_{\sigma \left(x \right), \, \sigma \left(x \right)} \,\Pi_\sigma | \psi \rangle = \frac{1}{d!}  \sum_{x, \sigma} \langle \psi | \, K_{\sigma \left(x \right), \, \sigma \left(x \right)} \, | \psi \rangle = \sum_{x} \langle \psi | \, K_{x, \, x} \, | \psi \rangle\,,
\end{equation}
so $\{\mathcal{K}^{\sigma}_j\}$ is an optimal decomposition. From this point on, we drop the superscript $\sigma$ and assume that our decomposition $\{\mathcal{K}_j\}$ has the form \eqref{eqnA: Kperm}. We have the property that, for any permutation $\sigma_a$,
\begin{equation}
    K_{\sigma_a \left( x\right), \, \sigma_a \left( j \right)} = \frac{1}{d!} \sum_{\sigma} \Pi_{\sigma}^{T} \, K_{\sigma \circ \sigma_a\left(x \right), \, \sigma \circ \sigma_a \left(j \right)} \,\Pi_\sigma = \Pi_{\sigma_a} \left( \frac{1}{d!} \sum_{\sigma}  \Pi_{\sigma_a}^{T}\Pi_{\sigma}^{T} \, K_{\sigma \circ \sigma_a\left(x \right), \, \sigma \circ \sigma_a \left(j \right)} \,\Pi_\sigma \Pi_{\sigma_a} \right) \Pi_{\sigma_a}^{T}= \Pi_{\sigma_a} K_{x, j  } \Pi_{\sigma_a}^{T}\,,
\end{equation}
where in the second equality we use the decomposition of the identity $\id=\Pi_{\sigma_a} \Pi^{T}_{\sigma_a}$ and in the third the fact that $\Pi_{\sigma} \Pi_{\sigma_a}= \Pi_{\sigma \circ \sigma_a}$. In particular, if we denote by $\sigma_x$ any permutation that leaves the element $x$ unchanged, we find
\begin{equation}
  K_{\sigma_x \left( x\right), \, \sigma_x \left( x \right)}  =  K_{x, x} =  \Pi_{\sigma_x} K_{x, x} \Pi_{\sigma_x}^{T}\,.
\end{equation}
Fix $\sigma_x \left( k \right) =m$ and $\sigma_x \left( l \right)=n$, where $k \neq l$, $m \neq n$ and $k, l, m, n \neq x$. Then 
\begin{align}
    \left[ K_{x, x} \right]_{x, k} &= \left[ K_{x, x} \right]_{x, m} && \textnormal{for all} \;\; k, m \neq x\,, 
    \\
    \left[ K_{x, x} \right]_{k, k} &= \left[ K_{x, x} \right]_{m, m} && \textnormal{for all} \;\; k, m \neq x\,, 
    \\
    \left[ K_{x, x} \right]_{k, l} &= \left[ K_{x, x} \right]_{m, n} && \textnormal{for all} \;\; k \neq l, \;\;  m \neq n, \;\; k, l, m, n \neq x\,, 
\end{align}
where $\left[ K_{x, x} \right]_{k, l}$ are the elements of the matrix $K_{x, x}$ in the basis $\{\ket{x}\}$. Similarly, denoting by $\sigma_{xj}$ any permutation of elements that leave both $x$ and $j$ unchanged, when $x \neq j$ we have 
\begin{align}
    \left[ K_{x, j} \right]_{x, k} &= \left[ K_{x, j} \right]_{x, m} && \textnormal{for all} \;\; k, m \neq x, j\,, 
    \\
     \left[ K_{x, j} \right]_{l, j} &= \left[ K_{x, j} \right]_{n, j} && \textnormal{for all} \;\; k, m \neq x, j\,,
     \\
    \left[ K_{x, j} \right]_{k, k} &= \left[ K_{x, j} \right]_{m, m} && \textnormal{for all} \;\; k, m \neq x, j\,, 
    \\
    \left[ K_{x, j} \right]_{k, l} &= \left[ K_{x, j} \right]_{m, n} && \textnormal{for all} \;\; k \neq l, \;\;  m \neq n, \;\; k, l, m, n \neq x, j\,. 
\end{align}
Using the fact that all the decomposition elements are positive semidefinite, for $x \neq j$, we can write $K_{x, x}$ and $K_{x, j}$ in the basis $\{\ket{x}\}$ as
\begin{equation}\label{eqn: Kmats}
 K_{x, x} = \hspace{-2mm}
   \begin{array}{c}
    \scriptstyle |x \rangle \;\;
    \\[13ex]
    \end{array}
    \hspace{-3mm}
    \begin{pmatrix}
        \begin{array}{c|cccc}
            \tau_x    & \gamma_x & \gamma_x & \dots  & \gamma_x \\
            \hline
            \gamma_x  & \alpha_x & \beta_x  & \dots  & \beta_x  \\
            \gamma_x  & \beta_x  & \alpha_x & \dots  & \beta_x  \\
            \vdots    & \vdots   & \vdots   & \ddots & \vdots   \\
            \gamma_x  & \beta_x  & \beta_x  & \dots  & \alpha_x
        \end{array}
    \end{pmatrix}   \,, 
    \qquad 
    K_{x, j} = 
    \hspace{-2mm}
    \begin{array}{c}
      \scriptstyle |x\rangle \\
      \scriptstyle |j\rangle \\[13ex]
    \end{array}
    \hspace{-2mm}
    \begin{pmatrix}
        \begin{array}{cc|cccc}
            f_{xj} & g_{xj} & s_{xj} & s_{xj} & \dots & s_{xj} \\
            g_{xj} & h_{xj} & t_{xj} & t_{xj} & \dots & t_{xj} \\
            \hline
            s_{xj} & t_{xj} & a_{xj} & b_{xj} & \dots & b_{xj} \\
            s_{xj} & t_{xj} & b_{xj} & a_{xj} & \dots & b_{xj} \\
            \vdots & \vdots & \vdots & \vdots & \ddots & \vdots \\
            s_{xj} & t_{xj} & b_{xj} & b_{xj} & \dots & a_{xj}
        \end{array}
    \end{pmatrix}\,,
\end{equation}
where the entries $\tau_x, \, \gamma_x, \, \alpha_x, \, \beta_x, \, f_{xj}, \, g_{xj}, \, h_{xj}, \, s_{xj}, \, t_{xj}, \, a_{xj}$ and $b_{xj}$ are all real. 
Denote by $\sigma =l \leftrightarrow k$ the permutation whose only action is to swap elements $l$ and $k$. 
Applying the property $K_{\sigma \left( x\right), \, \sigma \left( j \right)}= \Pi_{\sigma} K_{x, j  } \Pi_{\sigma}^{T}$ again and choosing $\sigma=x \leftrightarrow k$ for $k \neq x$, we find
\begin{equation}
\begin{aligned}
\tau_k &= \left[ K_{kk}\right]_{kk} = \left[ K_{xx}\right]_{xx} = \tau_x\,,
\\
\gamma_k &= \left[ K_{kk}\right]_{xk} = \left[ K_{xx}\right]_{kx} = \gamma_x\,,
\\
\alpha_k &= \left[ K_{kk}\right]_{xx} = \left[ K_{xx}\right]_{kk} = \alpha_x\,,
\\
\beta_k &= \left[ K_{kk}\right]_{xl} = \left[ K_{xx}\right]_{kl} = \beta_x\,, \qquad \qquad l \neq x, k\,.
\end{aligned}
\end{equation}
Similarly, for $x \neq j$, choosing the permutations $x \leftrightarrow k$ and $j \leftrightarrow l$, where $k \neq l$ and $k, l \neq x, j$, we find
\begin{equation}
\begin{aligned}
    f_{kl} &= \left[ K_{kl} \right]_{kk} = \left[ K_{xj} \right]_{xx} = f_{xj}\,,
    \\
    h_{kl} &= \left[ K_{kl} \right]_{ll} = \left[ K_{xj} \right]_{jj} = h_{xj}\,,
    \\
    g_{kl} &= \left[ K_{kl} \right]_{kl} = \left[ K_{xj} \right]_{xj} = g_{xj}\,,
    \\
    s_{kl} &= \left[ K_{kl} \right]_{km} = \left[ K_{xj} \right]_{xm} = s_{xj}\,, \qquad && m \neq k, l, x, j\,,
    \\
    t_{kl} &= \left[ K_{kl} \right]_{lm} = \left[ K_{xj} \right]_{jx} = t_{xj}\,, \qquad && m \neq k, l, x, j\,,
    \\
    a_{kl} &= \left[ K_{kl} \right]_{mm} = \left[ K_{xj} \right]_{mm} = a_{xj}\,, \qquad && m \neq k, l, x, j\,,
    \\
    b_{kl} &= \left[ K_{kl} \right]_{mn} = \left[ K_{xj} \right]_{mn} = b_{xj}\,, \qquad && m \neq n, \, \;\; m, n \neq k, l, x, j\,.
\end{aligned}
\end{equation}
Since all of the entries in \eqref{eqn: Kmats} are independent of the subscripts $x$ and $j$, we can drop them, obtaining a much simpler decomposition
\begin{equation}\label{eqnA: Kmats2}
 K_{x, x} = \hspace{-2mm}
   \begin{array}{c}
    \scriptstyle |x \rangle \;\;
    \\[13ex]
    \end{array}
    \hspace{-3mm}
    \begin{pmatrix}
        \begin{array}{c|cccc}
            \tau    & \gamma & \gamma & \dots  & \gamma \\
            \hline
            \gamma  & \alpha & \beta  & \dots  & \beta  \\
            \gamma  & \beta  & \alpha & \dots  & \beta  \\
            \vdots    & \vdots   & \vdots   & \ddots & \vdots   \\
            \gamma  & \beta  & \beta  & \dots  & \alpha
        \end{array}
    \end{pmatrix}   \,, 
    \qquad 
    K_{x, j} = 
    \hspace{-2mm}
    \begin{array}{c}
      \scriptstyle |x\rangle \\
      \scriptstyle |j\rangle \\[13ex]
    \end{array}
    \hspace{-2mm}
    \begin{pmatrix}
        \begin{array}{cc|cccc}
            f & g & s & s & \dots & s \\
            g & h & t & t & \dots & t \\
            \hline
            s & t & a & b & \dots & b \\
            s & t & b & a & \dots & b \\
            \vdots & \vdots & \vdots & \vdots & \ddots & \vdots \\
            s & t & b & b & \dots & a
        \end{array}
    \end{pmatrix}\,.
\end{equation}
The coefficients in \eqref{eqnA: Kmats2} bear no relation to constants used in other sections of this work. Using the notation
\begin{equation}
    \ket{\psi_{\neq x}} = \frac{1}{\sqrt{d-1}} \sum_{k \neq x} \ket{k}\,, \quad \ket{\psi_{\neq x, j}} = \frac{1}{\sqrt{d-2}} \sum_{k \neq x, j} \ket{k}\,, 
\end{equation}
and 
\begin{equation}
\id_{\neq x} = \id - \ketbra{x}{x}\,, \quad \; \id_{\neq x, j} = \id - \ketbra{x}{x} - \ketbra{j}{j}\,, \quad \; \Pi_{\neq x} = \id_{\neq x} - \ketbra{\psi_{\neq x}}{\psi_{\neq x}}\,, \quad \;\Pi_{\neq x, j} = \id_{\neq x, j} - \ketbra{\psi_{\neq x, j}}{\psi_{\neq x, j}}\,,   
\end{equation}
we can write, for $j \neq x$,
\begin{equation}
\begin{aligned}
K_{x, x} &= \tau \ketbra{x}{x} + \gamma \sqrt{d-1} \big( \ketbra{\psi_{\neq x}}{x} + \ketbra{x}{\psi_{\neq x}} \big) + \big( \alpha + \left( d-1 \right) \beta \big) \ketbra{\psi_{\neq x}}{\psi_{\neq x} }+ \left( \alpha - \beta \right) \,\Pi_{\neq x}\,, \vspace{1.5 cm}
\\
K_{x, j} &= f \ketbra{x}{x} + h \ketbra{j}{j} + g \big( \ketbra{x}{j} + \ketbra{j}{x} \big) + s \sqrt{d-1} \big( \ketbra{\psi_{\neq x, j}}{x} + \ketbra{x}{\psi_{\neq x, j}} \big) + t \sqrt{d-2} \big( \ketbra{\psi_{\neq x, j}}{j} + \ketbra{j}{\psi_{\neq x, j}}\big) 
\\
& \quad + \big( a + \left(d-3 \right)b \big) \ketbra{\psi_{\neq x, j}}{\psi_{\neq x, j}} + \left(a-b \right)  \,\Pi_{\neq x, j}\,.
\end{aligned}
\end{equation}
To solve for the constants, we return to the SDP constraints
\begin{equation}
\begin{aligned}
 \sum_{j} K_{x, j} &= M_x \qquad && \textnormal{for all} \;\; x\,,
 \\
 \sum_x K_{x, j}&= \frac{1}{d} \tr \sum_x K_{x, j} \id = \frac{1}{d} \id \qquad && \textnormal{for all} \;\; j\,,
\end{aligned}
\end{equation}
where
\begin{equation}
    M_x = \frac{1}{d} \big(  A \ketbra{x}{x} + \ve \id_{\neq x}\big)\,, \qquad \quad  A = d - \ve \left( d-1\right)\,.
\end{equation}
Component by component, choosing $k \neq l, \, k, l \neq x$, from the first condition we have 
\begin{align}
  \frac{A}{d} &= \left[ M_x \right]_{x, x} = \left[ K_{x, x} \right]_{x, x} + \sum_{j \neq x} \left[ K_{x, j} \right]_{x, x} = \tau + \left(d-1 \right) f\,, 
  \\
  0 &= \left[ M_x \right]_{x, k} = \left[ K_{x, x} \right]_{x, k} + \left[ K_{x, k} \right]_{x, k} + \sum_{j \neq x, k} \left[ K_{x, j} \right]_{x, k} = \gamma + g + \left( d-2 \right) s\,, 
  \\
  \frac{\ve}{d} &= \left[ M_x \right]_{k, k} = \left[ K_{x, x} \right]_{k, k} + \left[ K_{x, k} \right]_{k, k} + \sum_{j \neq x, k} \left[ K_{x, j} \right]_{k, k} = \alpha + h + \left( d-2\right) a \,,
  \\
 0 &= \left[ M_x \right]_{k, l} = \left[ K_{x, x} \right]_{k, l} + 2\left[ K_{x, k} \right]_{k, l} + \sum_{j \neq x, k, l} \left[ K_{x, j} \right]_{k, k} = \beta + 2t + \left( d-3\right) b \,. 
\end{align}
Similarly, again choosing $k \neq l, \, k, l \neq x$, from the second condition we find
\begin{align}
    \frac{1}{d} &= \left[ \frac{\id}{d}\right]_{jj} = \left[ K_{jj} \right]_{jj} + \sum_{x \neq j} \left[ K_{xj} \right]_{jj} = \tau + (d-1)h\,,
    \\
    0 &= \left[ \frac{\id}{d}\right]_{jk} = \left[ K_{jj} \right]_{jk} + \left[ K_{kj}\right]_{jk} + \sum_{x \neq j, k} \left[ K_{xj} \right]_{jk} = \gamma + g + \left( d-2 \right) t\,,
    \\
    \frac{1}{d} &= \left[ \frac{\id}{d}\right]_{kk} = \left[ K_{jj} \right]_{kk} + \left[ K_{kj}\right]_{kk} + \sum_{x \neq j, k} \left[ K_{xj} \right]_{kk} = \alpha + f + \left( d-2 \right) a\,,
    \\
    0 &= \left[ \frac{\id}{d}\right]_{kl} = \left[ K_{jj} \right]_{kl} + 2 \left[ K_{kj}\right]_{kl} + \sum_{x \neq j, k, l} \left[ K_{xj} \right]_{kl} = \beta + 2 s + \left( d-3 \right) b\,.
\end{align}
We can then solve for
\begin{equation}\label{eqnA: constants}
\begin{aligned}
    t &= s\,,
    \\
    \alpha &= -\left( d-2\right) a -h + \frac{\ve}{d}\,,
    \\
    \beta &= - \left( d-3 \right) b -2s\,,
    \\
    \gamma &= -g -\left( d-2\right) s\,, 
    \\
    \tau &= \frac{1}{d} - \left(d-1 \right)h\,,
    \\
    f &= h + \frac{1 - \ve}{d}\,.
\end{aligned}
\end{equation}
We can rewrite $K_{x, x}$ as 
\begin{equation}
    K_{x, x} = \begin{pmatrix}
        \tau & \gamma \sqrt{d-1} \vspace{0.15 cm}
        \\
        \gamma \sqrt{d-1} & \alpha + \left( d-2 \right) \beta 
    \end{pmatrix}_{\ket{x}, \, \ket{\psi_{\neq x}}} \; + \; \left( \alpha - \beta \right) \, \Pi_{\neq x}\,,
\end{equation}
where the first matrix is expressed in the basis $\{\ket{x}, \, \ket{\psi_{\neq x}}\}$. Similarly, using $s=t$ from \eqref{eqnA: constants}, for $x \neq j$, we can write $K_{x, j}$ as
\begin{equation}
    K_{x, j} = \begin{pmatrix}
        f & g & s \sqrt{d-2}
        \\
        g & h & s \sqrt{d-2}
        \\
        s \sqrt{d-2} & s \sqrt{d-2} & a+\left( d-3\right)b
    \end{pmatrix}_{\ket{x}, \, \ket{j}, \, \ket{\psi_{\neq x, j}}} + \; \left( a - b \right) \, \Pi_{\neq x, j}\,,
\end{equation}
where the constants $\alpha, \, \beta, \, \gamma, \, \tau$ and $f$ are as given in \eqref{eqnA: constants}.
Any $2 \times 2$ matrix is positive semidefinite if and only its trace and its determinant are non-negative, so the following constraints arise from enforcing $K_{x, x} \geq 0$,
\begin{align}\label{eqnA: Kxx pos}
    \tau \geq 0\,, \qquad \alpha \geq \beta \geq - \frac{\alpha}{d-2}\,, \qquad \tau \left( \alpha + \left( d-2\right) \beta \right) \geq \gamma^2 \left(d- 1\right)\,.
\end{align} 
Furthermore, by Sylvester’s criterion, the matrix $K_{x, j}$ for $j \neq x$ is positive semidefinite if and only if all of its principal minors are non-negative. This gives us the following conditions,
\begin{equation}\label{eqnA: Kxj pos}
    \begin{aligned}
        a \geq b &\geq -\frac{a}{d-3}\,,
        \\
      f \geq 0\,, \quad h &\geq 0\,, \quad   fh - g^2 \geq 0\,, 
        \\
        \big(a+b\left(d-3\right) \big) \left(fh-g^2 \right) &\geq s^2 \left(d-2 \right) \left(f+h-2g \right)\,.
    \end{aligned}
\end{equation}
Note that the second inequality implies $f+h \geq 2\sqrt{fh}\geq 2g$, with equality only when $f=g=h$. However, from \eqref{eqnA: constants} we know that $f - h = \frac{1- \ve}{d} \neq 0$, so we have $f+h-2g>0$. Moreover, the non-negativity of the principal minors $(1,3)$ and $(2,3)$ can be deduced from the last inequality combined with the fact that $f \left(f+h-2g \right)\geq fh-g^2$ and $h \left(f+h-2g \right)\geq fh-g^2$.

\begin{sublemma}\label{sublem: newK X}
    Let $\{\mathcal{K}_j\}$ be a decomposition for the POVM $\m$, with $\mathcal{K}_j=\{K_{x, j}\}_x$. If there exists a matrix $X \neq 0$ such that $K_{a, b} \geq X$ and $K_{b, a} \geq X$ for some pair $a \neq b$ then $\pg \left( \ket{\phi}, \, \m, \, \{ \mathcal{\tilde{K}}_{j}\}\right) \geq \pg \left( \ket{\phi}, \, \m, \, \{\mathcal{{K}}_{j}\}\right)$ for any state $\ket{\phi}$, where $\mathcal{\tilde{K}}_j= \{\tilde{K}_{x, j}\}_x$ has elements 
    \begin{equation}
        \begin{aligned}
        \tilde{K}_{x, x} &= K_{x, x} + X \left( \delta_{x, a} + \delta_{x,b} \right)\,,
        \\
            \tilde{K}_{x, j} &= K_{x, j} - X \left( \delta_{x, a} \delta_{j, b} + \delta_{x, b} \delta_{j, a}\right)\,, \quad \qquad j \neq x\,.    \end{aligned}
    \end{equation}
\end{sublemma}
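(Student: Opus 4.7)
The goal is twofold: (i) check that $\{\tilde{\mathcal{K}}_j\}$ is a feasible decomposition for the SDP \eqref{eqnA: pguess_main}, and (ii) show that its objective value is at least that of $\{\mathcal{K}_j\}$ for every state $\ket{\phi}$. The intuition is a simple re-allocation: we pull the common ``floor'' $X$ out of the off-diagonal pair $(K_{a,b},K_{b,a})$ and place it on the diagonal pair $(K_{a,a},K_{b,b})$, where it directly increases the objective. To make sense of all the positivity conditions, I would adopt the convention that $X$ is positive semidefinite (consistent with the usage of this type of argument in Appendix \ref{app: qubit 2-outcome lower}, where $X=\varepsilon\,\id$); with that convention the bulk of the work is just bookkeeping.

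\textbf{Step 1: feasibility.} Positive semidefiniteness of $\tilde K_{a,b}=K_{a,b}-X$ and $\tilde K_{b,a}=K_{b,a}-X$ is immediate from the hypothesis, and $\tilde K_{a,a}=K_{a,a}+X\geq K_{a,a}\geq 0$, $\tilde K_{b,b}=K_{b,b}+X\geq K_{b,b}\geq 0$ from $X\geq 0$. For the third constraint of \eqref{eqnA: pguess_main}, only the rows $x=a,b$ of $\sum_j\tilde K_{x,j}$ are touched, and in each such row the added $+X$ on the diagonal cancels the subtracted $-X$ on the $a\leftrightarrow b$ off-diagonal, giving
\begin{equation*}
\sum_{j}\tilde K_{a,j}=\sum_{j}K_{a,j}=M_a,\qquad \sum_{j}\tilde K_{b,j}=\sum_{j}K_{b,j}=M_b,
\end{equation*}
while the rows $x\neq a,b$ are unchanged. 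The same cancellation shows that $\sum_x \tilde K_{x,a}=\sum_x K_{x,a}$ and $\sum_x \tilde K_{x,b}=\sum_x K_{x,b}$, so the identity-proportionality in columns is preserved. Hence $\{\tilde{\mathcal{K}}_j\}$ is feasible.

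\textbf{Step 2: the objective does not decrease.} Only the $j=a$ and $j=b$ diagonal terms are altered, and each is incremented by $X$, so
\begin{equation*}
\pg\!\left(\ket{\phi},\m,\{\tilde{\mathcal{K}}_j\}\right)-\pg\!\left(\ket{\phi},\m,\{\mathcal{K}_j\}\right)=2\,\langle\phi|X|\phi\rangle\geq 0,
\end{equation*}
which gives the claim.

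\textbf{Expected obstacle.} The only genuinely nontrivial point is the positive-semidefiniteness of $\tilde K_{a,a}$ and $\tilde K_{b,b}$. The stated hypothesis $K_{a,b}\geq X$ does not by itself force $X\geq 0$, so one either interprets ``$X\neq 0$'' as implicitly including $X\geq 0$ (which is how the analogous reallocation trick is used earlier in the paper to rule out full-rank off-diagonal blocks $K_{a,b}$, by taking $X=\varepsilon\,\id$ with $\varepsilon=\lmin(K_{a,b})>0$), or one keeps the hypothesis general and imposes the extra conditions $K_{a,a}+X\geq 0$, $K_{b,b}+X\geq 0$, and $\langle\phi|X|\phi\rangle\geq 0$ explicitly. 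Under either reading the verification above goes through unchanged.
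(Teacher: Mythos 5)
Your proof is correct and follows essentially the same route as the paper's: verify that the reallocated decomposition $\{\tilde{\mathcal{K}}_j\}$ remains feasible (the $\pm X$ terms cancel in both the row sums $\sum_j \tilde K_{x,j}$ and the column sums $\sum_x \tilde K_{x,j}$) and note that the objective changes by exactly $2\langle\phi|X|\phi\rangle\geq 0$. Your caveat about needing $X\geq 0$ (or at least $K_{a,a}+X\geq 0$, $K_{b,b}+X\geq 0$ and $\langle\phi|X|\phi\rangle\geq 0$) is well taken, since the paper's own proof uses these facts implicitly and its subsequent application supplies a positive-semidefinite $X$.
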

\begin{proof}
    Since $K_{a, b} - X\geq 0$ and $K_{b, a} - X \geq 0$, with $a \neq b$, the new decomposition $\{\mathcal{\tilde{K}}_j\}$ is well-defined. The new guessing probability for any state $\ket{\phi}$ would then satisfy 
\begin{equation}
    \pg \left( \ket{\phi}, \, \m, \, \{\mathcal{\tilde{K}}_j \}\right)
    = \sum_{x} \langle \phi | \tilde{K}_{x, x} | \phi \rangle 
    = \sum_{x}  \langle \phi | K_{x, x} | \phi \rangle + 2\langle \phi | X | \phi \rangle 
    \geq \sum_{x}  \langle \phi | K_{x, x} | \phi \rangle
    = \pg \left( \ket{\phi}, \, \m, \, \{\mathcal{K}_j \}\right)\,.
\end{equation}
\end{proof}
We will apply Sublemma \ref{sublem: newK X} to $\{\mathcal{K}_j\}$ to define a new decomposition $\{\tilde{\mathcal{K}}_j\}$ which is also optimal. Since $f+h > 2g$ and $fh-g^2 \geq 0$, we are free to define a new variable $r := \frac{fh-g^2}{f+h - 2g} \geq 0$ and the matrix  
\begin{equation}
    X_{x, j} := 
    \begin{pmatrix}
      r & r & s \sqrt{\left(d-2 \right)}
      \\
     r & r & s \sqrt{\left(d-2 \right)}
     \\
     s \sqrt{\left(d-2 \right)} & s \sqrt{\left(d-2 \right)} & a + \left(d-3 \right)b
\end{pmatrix}_{\ket{x}, \, \ket{j}, \, \ket{\psi_{\neq x, j}}} 
+ \; \left( a - b \right) \Pi_{\neq x, j}\,.
\end{equation}
Using Sylvester's criterion again,
$X_{x, j}$ is positive semi-definite if and only if 
\begin{equation}
    r \geq 0\,, \qquad a \geq b \geq -\frac{a}{d-3}\,, \qquad r \big( a + \left(d-3 \right)b\big) \geq s^2 (d-2)\,.
\end{equation}
These conditions are  identical to those required for $K_{x, j}$ to be positive semidefinite, as stated in \eqref{eqnA: Kxj pos}. Then we have
\begin{equation}\label{eqnA: K - X}
\begin{aligned}
    K_{x, j} - X_{x, j} 
    &= \begin{pmatrix}
        f-r & g-r \vspace{0.15 cm}
        \\
        g-r & h-r \vspace{0.15 cm}
    \end{pmatrix}_{\ket{x}, \, \ket{j}}\,,
\hspace{5mm} \hspace{5mm}
   K_{j, x} - X_{x, j} 
   &= \begin{pmatrix}
        h-r & g-r \vspace{0.15 cm}
        \\
        g-r &f-r \vspace{0.15 cm}
    \end{pmatrix}_{\ket{x}, \, \ket{j}} \,. 
    \end{aligned}
\end{equation}
Both $K_{x, j} - X_{x, j}$ and $K_{j, x}- X_{x, j}$ are positive semidefinite if and only if 
\begin{equation}
    f+ h - 2r \geq 0\,, \qquad fh - g^2 \geq r\left( f+h -2g\right)\,.
\end{equation}
The second inequality is saturated by the definition of $r$, so both $K_{x, j} - X_{x, j}$ and $K_{j, x} - X_{x, j}$ have rank one, and the trace is 
\begin{equation}
    f + h -2r = \frac{\left(f-g\right)^2 + \left(h-g\right)^2}{f+h-2g} \geq 0\,.
\end{equation}
We are now ready to define the new POVMs $\tilde{\mathcal{K}}_j = \{\tilde{K}_{x, j}\}_x$, with 
\begin{equation}\label{eqnA: KTilde}
\begin{aligned}
 \tilde{K}_{x, x} & := K_{x, x} +\sum_{j \neq x} X_{x, j}\,,
 \\
 \tilde{K}_{x, j} & := K_{x, j} - X_{x, j}\,, \qquad \quad  x \neq j\,.
 \end{aligned}
\end{equation}
Using \eqref{eqnA: K - X} and \eqref{eqnA: KTilde}, we obtain
\begin{equation}
\begin{aligned}
\sum_{j \neq x}  \tilde{K}_{x, j} &= \left(d-1\right) \left(f-r \right)\ketbra{x}{x} + \left(g-r\right) \sum_{j\neq x} \big(\ketbra{x}{j}+\ketbra{j}{x} \big) + \left(h-r \right) \sum_{j\neq x} \ketbra{j}{j} 
\\
&= \begin{pmatrix}
 \left( d-1\right) \left( f-r\right) & \left( g-r\right) \sqrt{d-1} \vspace{0.1 cm}
 \\
 \left( g-r\right) \sqrt{d-1} & h-r
\end{pmatrix}_{ \ket{x}, \, \ket{\psi_{\neq x}}} + \; \left( h - r \right) \Pi_{\neq x}\,.
\end{aligned}
\end{equation}
Since both $M_{x}$ and $\sum_{j \neq x}  \tilde{K}_{x, j}$ exhibit  the same matrix structure as $K_{x,x}$, and given that $\tilde{K}_{x, x} = M_{x}-\sum_{j \neq x} \tilde{K}_{x, j}$,  it follows that $\tilde{K}_{x, x}$ enjoys the same symmetries as $K_{x,x}$. Thus, when $j \neq x$, we can write our new POVM elements in the computational basis in terms of a set of new coefficients $\{\tilde{\tau}, \, \tilde{\gamma}, \, \tilde{\alpha}, \, \tilde{\beta}, \, \tilde{f}, \, \tilde{g}, \, \tilde{h}, \, \tilde{s}, \, \tilde{t}, \, \tilde{a}, \, \tilde{b}\}$ as  
\begin{equation}\label{eqnA: Kmats3}
 \tilde{K}_{x, x} = \hspace{-2mm}
   \begin{array}{c}
    \scriptstyle |x \rangle \;\;
    \\[13ex]
    \end{array}
    \hspace{-3mm}
    \begin{pmatrix}
        \begin{array}{c|cccc}
            \tilde{\tau}    & \tilde{\gamma} & \tilde{\gamma} & \dots  & \tilde{\gamma} \\
            \hline 
          \vspace{-0.35 cm}  \\
             \tilde{\gamma}  & \tilde{\alpha} & \tilde{\beta}  & \dots  & \tilde{\beta}  \\
            \tilde{\gamma}  & \tilde{\beta}  & \tilde{\alpha} & \dots  & \tilde{\beta}  \\
            \vdots    & \vdots   & \vdots   & \ddots & \vdots   \\
            \tilde{\gamma}  & \tilde{\beta}  & \tilde{\beta}  & \dots  & \tilde{\alpha}
        \end{array}
    \end{pmatrix}   \,, 
    \qquad 
    \tilde{K}_{x, j} = 
    \hspace{-2mm}
    \begin{array}{c}
      \scriptstyle |x\rangle \\
      \scriptstyle |j\rangle \\[13ex]
    \end{array}
    \hspace{-2mm}
    \begin{pmatrix}
        \begin{array}{cc|cccc}
            \tilde{f} & \tilde{g} & \tilde{s} & \tilde{s} & \dots & \tilde{s} \\
            \tilde{g} & \tilde{h} & \tilde{t} & \tilde{t} & \dots & \tilde{t} \\
            \hline
            \vspace{-0.35 cm}  \\
            \tilde{s} & \tilde{t} & \tilde{a} & \tilde{b} & \dots & \tilde{b} \\
            \tilde{s} & \tilde{t} & \tilde{b} & \tilde{a} & \dots & \tilde{b} \\
            \vdots & \vdots & \vdots & \vdots & \ddots & \vdots \\
            \tilde{s} & t & \tilde{b} & \tilde{b} & \dots & \tilde{a}
        \end{array}
    \end{pmatrix}\,.
\end{equation}
Comparing $\tilde{K}_{x, j}$ in \eqref{eqnA: Kmats3} to $K_{x, j} - X_{x, j}$ in \eqref{eqnA: K - X}, we see that $\tilde{s}= \tilde{t}= \tilde{a}=\tilde{b} =0$, and
 since $\tilde{K}_{x, j}$ is rank-one, we know that $\tilde{f}\tilde{h}= \tilde{g}^2$.
Replacing them in \eqref{eqnA: constants}, we find 
\begin{equation}\label{eqnA: constants2}
\begin{aligned}
    \tilde{\alpha} &=  - \tilde{h} + \frac{\ve}{d}\,,
    \\
    \tilde{\beta} &= 0\,,
    \\
    \tilde{\gamma} &= -\tilde{g}\,, 
    \\
    \tilde{\tau} &= \frac{1}{d} - \left(d-1 \right)\tilde{h}\,,
    \\
    \tilde{f} &= \tilde{h} + \frac{1 - \ve}{d}\,,
\end{aligned}
\end{equation}
with $\tilde{g}^2 = \tilde{f} \tilde{h}$. We will now drop the tilde for ease of notation and assume that our constants satisfy \eqref{eqnA: constants2}. The positivity conditions \eqref{eqnA: Kxx pos} for $K_{x, x}$ give us 
\begin{equation}
    h \leq \frac{\ve}{d}, \, \qquad h \leq \frac{1}{(d-1)d}\,, \qquad h \leq \frac{\ve}{d^2}\,.
\end{equation}
Since the third condition is strictly stronger than the other two, we need only consider $h$ in the range $0 \leq h \leq \frac{\ve}{d^2}$.
Filling in the guessing probability in terms of $h$, we have 
\begin{align}
    \pg \left( \ket{\psi}, \, \m_d\right) &= \sum_x \langle  \psi | K_{x, x} | \psi\rangle = \frac{1}{d} \sum_{x} \begin{pmatrix}
        1 & 1 & 1 & \hdots & 1
    \end{pmatrix} \begin{pmatrix}
            \tau    & \gamma & \gamma & \dots  & \gamma \\
             \gamma  & \alpha & 0  & \dots  & 0  \\
            \gamma  & 0  & \alpha & \dots  & 0  \\
            \vdots    & \vdots   & \vdots   & \ddots & \vdots   \\
            \gamma  & 0  & 0  & \dots  & \alpha
    \end{pmatrix} \begin{pmatrix}
        1 \\
        1\\
        1\\
        \vdots \\
        1
    \end{pmatrix}
    \\
    &= \tau + \left(d-1\right) \left(2 \gamma + \alpha \right)  =  1 - \left(d-1\right) \left( \sqrt{h + \frac{1 - \ve}{d}} - \sqrt{h}\right)^2 \,,
    \end{align}
where we have chosen the negative root $\gamma = - \sqrt{fh}$, since it gives the higher guessing probability, and used $f= h + \frac{1- \ve}{d}$. We can now explicitly solve the guessing probability by maximising over the variable $h$,
\begin{equation}
    \pg \left( \ket{\psi}, \, \m_d\right) = \max_{h}  \; 1 - (d-1) \left( \sqrt{h + \frac{1 - \ve}{d}} - \sqrt{h}\right)^2 \qquad \text{subject to} \;\; 0 \leq h \leq \frac{\ve}{d^2} \,.
\end{equation}
Since the function $\left( \sqrt{h + \frac{1 - \ve}{d}} - \sqrt{h}\right)^2$ decreases continuously for $h \geq 0$, the optimal value is $h= \frac{\ve}{d^2}$, so we find
\begin{equation}
\pg \left( \ket{\psi}, \, \m_d\right) =  1 - \frac{d-1}{d^2} \left( \sqrt{A} - \sqrt{\ve} \right)^2  = \frac{1}{d^2} \left( \sqrt{A} + (d-1) \sqrt{\ve} \right)^2 = \frac{1}{d} \left( \tr \sqrt{M_1} \right)^2\,,
\end{equation}
proving Lemma \ref{lemma: qudit upper app}. The optimal POVMs $\mathcal{K}_j$ then look like
\begin{equation}
    K_{x, x} = \frac{1}{d^2}\begin{pmatrix}
       A & \sqrt{\left(d-1\right)A \ve} \vspace{0.1 cm}
        \\
       \sqrt{\left(d-1\right)A \ve} & \ve (d-1) 
    \end{pmatrix}_{\ket{x}, \, \ket{\psi_{\neq x}}} + \; \frac{\ve \left(d-1\right)}{d^2 } \, \Pi_{\neq x}\,, \qquad  K_{x, j} = \frac{1}{d^2} \begin{pmatrix}
        A & - \sqrt{A \ve} 
        \\
        -\sqrt{A \ve} & \ve 
    \end{pmatrix}_{\ket{x}, \, \ket{j}}\,,
\end{equation}
which is exactly the form of the decomposition \eqref{eqnA: psi decomp} used in Appendix \ref{app: upper qudit pguess}.
\end{proof}

\section{Other entropies }\label{app: dilation}
We prove in Theorem \ref{thm: qudit noisy} that 
\begin{equation}
    \pgs \left( \m_d \right) = \frac{1}{d} \left( \tr \sqrt{M_1} \right)^2\,.
\end{equation}
From \cite{Meng_2024}, $\pgs \left( \m_d \right)$ is equal to the guessing probability of the state $\rho_\psi$ minimized over all projective measurements, i.e.
\begin{equation}
    \pgs \left( \m_d \right) = \pgs \left( \rho_{\psi} \right)\,,
\end{equation}
where
\begin{equation}
    \rho_\psi= \Delta_{\ve} \left( \ketbra{\psi}{\psi}\right)= \left( 1 - \ve \right) \ketbra{\psi}{\psi}+ \frac{\id}{d} \ve\,.
\end{equation}
This leads us to wonder whether the maximal intrinsic randomness of $\m_d$ and $\rho_\psi$ are equivalent when quantified by a conditional entropy \emph{other} than the min-entropy. Following \cite{Meng_2024}, in this section, we consider the maximal conditional von Neumann and max-entropies of $\m_d$. When Eve receives an ensemble $\{ p(x)\, \rho_{x, E}\}$ of subnormalized states labelled by  
$x \in X$, the von Neumann entropy of the register $X$ conditioned on Eve's information is given by
\begin{equation}
    H \left( X | E, \, \{ p(x) \, \rho_{x, E}   \}  \right)  = H \left( \{ p(x) \} \right) + \sum_{x} p(x) S \left( \rho_{x, E} \right) - S \left( \sum_{x} p(x) \rho_{x, E} \right)\,,
\end{equation}
where $S (\sigma)= -\tr \left( \sigma \log \sigma \right)$ is the von Neumann entropy of the state $\sigma$ and $H \left( \{p(x)\}\right)= - \sum_{x} p(x) \log p(x)$ is the Shannon entropy of the probability distribution $\{p(x)\}$. From \cite{Konig_2009}, the max-entropy of $X$ conditioned on Eve's information can be written as 
\begin{equation}\label{eqnA: hmax def}
    \hmax \left( X | E, \{p(x) \, \rho_{x, E} \}\right) = \log \psecr \left( \{p(x) \, \rho_{x, E} \}
 \right) \,, \qquad \psecr \left( \{p(x) \, \rho_{x, E} \}
 \right) = \max_{\sigma} \left( \sum_{x} \sqrt{p(x)} F \left( \rho_{x, E}, \, \sigma\right) \right)^2\,,
\end{equation}
where $F(\rho, \sigma)$ is the quantum fidelity 
\begin{equation}
    F(\rho, \sigma)= \tr \sqrt{\sqrt{\rho}\, \sigma \sqrt{\rho}}\,.
\end{equation}
From \cite{Meng_2024}, the maximal conditional von Neumann and max-entropies of the state $\rho_\psi$ are 
\begin{equation}
\begin{aligned}
H^{*} \left( X | E, \, \rho_{\psi} \right) &= \log d - S \left( \rho_{\psi}\right)\,,
\\
  \hmaxs \left( X | E, \, \rho_{\psi} \right) &=  \log d + \log \lmax \left( \rho_{\psi} \right)\,.    \end{aligned}
\end{equation}
The maximal randomness of the noisy measurement $\m_d$ would involve a minimization by Eve over all valid Naimark dilations $\Pi_{SA}, \, \ket{\varphi}_{AE}$ and a minimization by Alice over all states $\ket{\phi}_S$ (we temporarily reinstate the subscript $S$ to denote Alice's system). In the case of the conditional von Neumann and max-entropies, then, our figures of merit would be 
\begin{equation}
    H^{*} \left( X | E, \, \m_d \right) = \max_{\ket{\phi}_{S}} \, H \left( X | E, \, \ket{\phi}_{S}, \, \m_{d, S} \right)\,, \qquad \hmaxs \left( X | E , \, \m_d \right) = \max_{\ket{\phi}_{S}} \, \hmax \left( X | E , \,\ket{\phi}_{S}, \, \m_{d, S} \right)\,,
\end{equation}
where $H \left( X | E, \, \ket{\phi}_{S}, \, \m_{d, S} \right)$ and $\hmax \left( X | E, \,\ket{\phi}_{S}, \, \m_{d, S} \right)$ are given by
\begin{equation}\label{eqnA: h opt}
\begin{aligned}
H^{*} \big( X | E, \, \ket{\phi}_S, \, \m_{d, S} \big)= \; &  \;\;\;  \min_{ \Pi_{SA}, \, \ket{\varphi}_{AE} }
& & H \left( X | E, \, \{ p(x) \, \rho_{x, E}  \}\right) \\
& \quad  \,  \text{subject to}
& &  p(x) \rho_{x, E } = \tr_{SA} \Big( \Pi_{x, SA} \otimes \id_{E} \, \, \ketbra{\phi}{\phi}_{S} \otimes \ketbra{\varphi}{\varphi}_{AE} \Big) \\
& 
& & \tr_A \Big( \Pi_{x, SA}\big(\id_S\otimes \tr_{SE} \left( \ketbra{\phi}{\phi}_{SAE} \right) \big) \Big)=M_{x, S} \; \text{ for all } \; x\,
\end{aligned}
\end{equation}
and  
\begin{equation}\label{eqnA: hmax opt}
\begin{aligned}
\hmaxs \big( X | E, \, \ket{\phi}_{S}, \, \m_{d, S} \big)= \; &  \;\;\;\min_{ \Pi_{SA}, \, \ket{\varphi}_{AE} }
& & \hmax \left( X | E, \, \{ p(x) \, \rho_{x, E}  \}\right) \\
& \quad  \,  \text{subject to}
& &  p(x) \rho_{x, E } = \tr_{SA} \Big( \Pi_{x, SA} \otimes \id_{E} \, \, \ketbra{\phi}{\phi}_{S} \otimes \ketbra{\varphi}{\varphi}_{AE} \Big) \\
& 
& & \tr_A \Big( \Pi_{x, SA}\big(\id_S\otimes \tr_{SE} \left( \ketbra{\phi}{\phi}_{SAE} \right) \big) \Big)=M_{x, S} \; \text{ for all } \; x\,
\end{aligned}
\end{equation}
respectively. We do not solve for \eqref{eqnA: h opt} or \eqref{eqnA: hmax opt} in this work, but we take a step in this direction by upper bounding $H \left( X | E, \, \ket{\psi}, \, \m_d \right)$ and $\hmax \left( X | E , \, \ket{\psi}, \, \m_d \right)$, where $\ket{\psi}$ is the unbiased state (we drop the subscript $S$ again when the system is clear from the context).

When considering the intrinsic randomness of the state $\rho_{\psi}$, we can freely fix Eve's purification of the state, as all purifications are equivalent up to local unitaries. To quantify the intrinsic randomness of a non-extremal measurement like $\m_d$, however, we cannot fix Eve's Naimark dilation without loss of generality (see \cite[Appendix B]{frauchiger_2013}). To emphasize this point, we present a dilation for $\m_2$ that gives Eve no side-information at all when Alice uses $\ket{\psi}$, even though the auxiliary state $\sigma_{A}
$ is mixed. Consider the dilation 
\begin{equation}
    \Pi_{x, SA} = \sum_{k} \ketbra{k}{k} \otimes \ketbra{k-x+1}{k-x+1}\,, \qquad \ket{\varphi}_{AE}= \sum_{j} \left( \sqrt{M_1} \otimes \id \right) \ket{j, j}\,. 
\end{equation}
This dilation is valid because
\begin{equation}
    \tr_{A} \Big( \Pi_{x, SA} \, \id_{S} \otimes \sigma_{A} \Big) = \frac{1}{d} \big( \left( A - \ve\right) \ketbra{x}{x} + \ve \id \big) = M_x\,.
\end{equation}
Eve receives the conditional states 
\begin{equation}
    p(x) \rho_{x, E} = \tr_{SA} \bigg( \Pi_{x, SA} \otimes \id_{E} \, \ketbra{\phi}{\phi}_{S} \otimes \ketbra{\varphi}{\varphi}_{AE} \bigg) = \sum_{j} \langle \phi | j+x-1 \rangle^2 \sqrt{M_1} \ketbra{j}{j} \sqrt{M_1}\,, 
\end{equation}
so when $\ket{\phi}= \ket{\psi}$, we see that Eve's conditional states are completely indistinguishable.

We now consider a different dilation of $\m_d$ that minimizes the conditional min-entropy when Alice chooses $\ket{\psi}$. From \cite[Appendix A]{Senno_2023}, we can derive a canonical Naimark dilation corresponding to Eve's chosen POVM decomposition $\{p(j)\,, \,   \mathcal{N}_j\}$. Consider a bipartite auxiliary system $A_1 A_2$ and define the purification $\ket{\psi}_{A_1 A_2 E}$ and the action of a unitary $U$ as 
\begin{equation}
\ket{\psi}_{A_1 A_2 E} = \ket{0}_{A_1} \otimes  \sum_{j} \sqrt{p(j)} \ket{j j}_{A_2 E}\,, \quad  U \ket{\varphi, 0, j}_{S A_1 A_2} = \sum_{x} \sqrt{N_{x, j, S}} \ket{\varphi, x, j } \;\; \text{for all} \;\; \ket{\varphi}\,.   
\end{equation}
The projective measurement on the joint state and auxiliary system is then given by 
\begin{equation}
    \Pi_{x, SA} = U^{\dagger} \left( \id_{S} \otimes \ketbra{x}{x}_{A_1} \otimes \id_{A_2} \right) U\,.
\end{equation}
Eve's conditional post-measurement states, when Alice chooses the state $\ket{\phi}$, are given by 
\begin{equation}
    p(x) \rho_{x, E} =  \tr_{S A_1 A_2} \Big( \Pi_{x, S A_1 A_2} \otimes \id_{E} \ketbra{\phi}{\phi}_{S } \otimes \ketbra{\Psi}{\Psi}_{A_1 A_2 E}\Big) = \sum_{j} p(j) \langle{\phi | N_{x, j, S} | \phi } \rangle \ketbra{j}{j}\,.
\end{equation}
\begin{figure}[h]
    \centering
\begin{tikzpicture}
\begin{scope}[scale=0.8]
    \begin{axis}[
        axis lines=middle,
        xlabel={$\ve$},
        xlabel style={at={(ticklabel* cs:1.02)}, anchor=west},
        ymin=0, ymax=1,
        xmin=0, xmax=1,
        domain=0:1,
        samples=200,
        legend pos=north east,
        legend style={font=\small}
    ]
    \addplot[
    dotted,
    thick,  
    black,
    domain=0:1,
    samples=50   
] {ln(2-x)/ln(2)};
        \addlegendentry{$\hmax \left( \ket{\psi}, \, \m_2 \right)$ upper bound};
        \addplot[
            dashed,
            dash pattern=on 5pt off 3pt,
            black
        ] {- ((1+sqrt(x*(2-x)))/2)*ln((1+sqrt(x*(2-x)))/2)/ln(2) - ((1-sqrt(x*(2-x)))/2)*ln((1-sqrt(x*(2-x)))/2)/ln(2)};
        \addlegendentry{$H\left( \ket{\psi}, \, \m_2 \right)$ upper bound};
        \addplot[
            black
        ] {1 + ((2-x)/2)*ln((2-x)/2)/ln(2) + (x/2)*ln(x/2)/ln(2)};
        \addlegendentry{$H^{*} \left( \rho_{\psi} \right)$};
        
        \addplot[
    dashed,
    mark=*,
    mark size=1.25pt,
    mark repeat=10,
    black
] {-ln(1+ sqrt(x*(2-x)) )/ln(2)+1};
        \addlegendentry{$\hmins \left( \m_2 \right)$=$\hmins \left( \rho_{\psi} \right)$};
        
    \end{axis}
\end{scope}
\end{tikzpicture}
\caption{Conditional entropies arising from the noisy qubit measurement $\m_d$ or the noisy qubit state $\rho_{\psi}$. Note that the upper bound on $\hmax \left( \ket{\psi}, \, \m_2\right)$ is equal to $\hmaxs \left( \rho_\psi\right)$.}
\label{fig: entropies}
\end{figure}
In the case of the noisy projective measurement $\m_d$ and the unbiased $\ket{\psi}$ state, using the `square root' decomposition with $\{\mathcal{K}_j\}$ from \eqref{eqnA: qudit decomp}, we find $p(x)=1/d$ and
\begin{equation}\label{eqnA: cond states}
\begin{aligned}
\rho_{x, E} &= d \sum_{j} \langle \phi | K_{x, j} | \phi \rangle \ketbra{j}{j} = \frac{1}{d^2} \Big( \left( \sqrt{A} + \Big( d-1\right) \sqrt{\ve}  \Big)^2 \ketbra{x}{x} + \frac{1}{d^2} \left( \sqrt{A} - \sqrt{\ve} \right)^2    \id_{\neq x}   
\\
&= d \sum_{j} \langle j | \sqrt{\rho_{\psi}} | x \rangle^2 \ketbra{j}{j} = \mathcal{C} \left( \frac{1}{d} \sqrt{\rho_{\psi}} \ketbra{x}{x} \sqrt{\rho_{\psi}} \right)\,,
\end{aligned}
\end{equation}
where $\mathcal{C} \left( . \right)$ is the completely positive trace-preserving map
\begin{align}
    \mathcal{C} \left( \sigma \right) = \sum_{j} \langle j | \sigma | j \rangle \ketbra{j}{j}\,.
\end{align}
From the data-processing inequality for the von Neumann entropy (see e.g. \cite[Theorem 11.15 (3)]{nielsen00}), such a classical post-processing cannot decrease the conditional von Neumann entropy, so we have 
\begin{equation}
\begin{aligned}
     H \left(X | E, \, \{ p(x), \, \rho_{x, E}   \}  \right) &=  H \left(X | E, \, \{ p(x), \, \mathcal{C} \left( d \sqrt{\rho_{\psi}} \ketbra{x}{x} \sqrt{\rho_{\psi}} \right)  \} \right) 
     \\
     &\geq H \left(X | E, \, \{ p(x), \, d \sqrt{\rho_{\psi}} \ketbra{x}{x} \sqrt{\rho_{\psi}}  \} \right) = \log d - S \left( \rho_{\psi} \right) = H^{*} \left( \rho_{\psi} \right)\,.
     \end{aligned}
\end{equation}
On the other hand, since Eve can minimize over all dilations, we can upper bound $H \left( \ket{\psi},\,  \m_d \right)$ by 
\begin{equation}
\begin{aligned}
H(\ket{\psi}, \, \m_d) \leq  H \left(X | E, \, \{ p(x), \, \rho_{x, E}   \}  \right) =   H_2 \left( \frac{1}{d} \left( \tr \sqrt{M_1} \right)^2 \right) + \left( 1 - \frac{1}{d} \left( \tr \sqrt{M_1} \right)^2\right) \log \left( d-1\right)\,,    
\end{aligned}
\end{equation}    
where $H_2 \left(. \right)$ is the binary Shannon entropy. In the qubit case, the right-hand side is known as the coherence of formation of $\rho_{\psi}$ with respect to the unbiased basis $\{\ket{x}\}$ \cite{Yuan_2015}.

Similarly, we upper bound $\hmax \left( \ket{\psi}, \, \m_d \right)$ by solving the optimization problem \eqref{eqnA: hmax def} for $\hmax \left( X | E, \, \{ p(x) \rho_{x, E} \}\right)$. We find 
\begin{equation}
\begin{aligned}
    \psecr \left( \{ p(x) \, \rho_{x, E}\} \right) &= \max_{\sigma} \left( \sum_{x} \sqrt{p(x)} \tr \sqrt{ \sqrt{ \rho_{x, E} } \, \sigma \,  \sqrt{ \rho_{x, E} }   } \right)^2 \geq   \left( \frac{1}{d} \sum_{x} \tr  \sqrt{ \rho_{x, E}  } \right)^{2}
    \\
    &= \frac{1}{d^2} \left(   \sqrt{A} + (d-1)\sqrt{\ve} + (d-1) \left( \sqrt{A} - \sqrt{\ve} \right)  \right)^2 = A\,,
    \end{aligned}
\end{equation}
where in the first line we use $\sigma = \id/d$. From the other direction, $\psecr \left( \{ p(x) \, \rho_{x, E} \} \right)$ can be expressed \cite[Lemma 6.6]{Tomamichel_2016} as the following minimization problem,
\begin{equation}
    \psecr \left( \{ p(x) \, \rho_{x, E} \} \right) =  \min_{Y_{XE} >0} \sum_{x} p(x) \tr \Big( \ketbra{x}{x} \otimes \rho_{x, E} \,\, Y_{XE}^{-1} \Big) \lmax \left( Y_{E} \right)\,, \qquad Y_{E} = \tr_{X} Y_{XE}\,. 
\end{equation}
We can upper bound $\psecr\left( \{ p(x) \, \rho_{x, E} \} \right)$ by taking
\begin{equation}
    Y_{XE} = \sum_{x} \ketbra{x}{x} \otimes \sqrt{p(x) \rho_{x, E}}\,,
\end{equation}
such that 
\begin{equation}
    \psecr \left( \{ p(x) \, \rho_{x, E} \} \right) \leq  \bigg( \sum_{x} \tr \sqrt{p(x) \rho_{x, E}} \bigg) \lmax \left( \sum_{x} \sqrt{p(x) \rho_{x, E}}   \right) = \tr \left( \sqrt{\frac{A}{d} \id}\right) \lmax \left( \sqrt{\frac{A}{d} \id}\right) = A\,.
\end{equation}
Since the upper and lower bounds on $\psecr \left( \{ p(x) \, \rho_{x, E} \} \right)$ meet, we have
\begin{equation}
  \hmax \left( \ket{\psi}, \, \m_d \right)\leq \hmax \left( X | E, \, \{ p(x) \rho_{x, E}\} \right) = \log \psecr \left( \{p(x) \, \rho_{x, E} \}\right) = \log A  = \hmaxs \left( \rho_{\psi}\right)\,. 
\end{equation}
In Figure \ref{fig: entropies}, we compare our upper bounds on the randomness of $\m_2$ with the state $\ket{\psi}$ with the maximal conditional entropies for the state $\rho_\psi$.

\section{Coarse-grained noisy measurements}\label{app: coarse grain}
Imagine, in the case where $d$ is even and greater than 2, that rather than carry out the full measurement $\m_d$, Alice performs a coarse-grained version of it with only two outcomes. We label this coarse-grained POVM $\hat{\m}$, such that
\begin{equation}
    \hat{\m} = \{\hat{M}_a\}_{a}\,, \quad \hat{M}_{a} = \sum_{x \in S_{a}} M_x\,, \quad M_{x}= \left(1 - \ve \right)\ketbra{x}{x} + \frac{\ve}{d}\id\,, \quad a \in \{1, 2\}\,,
\end{equation}
where
\begin{equation}
   S_1= \left\{\, 1, \, ..., \, \frac{d}{2} \, \right\}\,, \quad S_{2} = \left\{\,  \frac{d+2}{2}, \,  ..., \, d \,\right\} 
\end{equation}
are subsets from the set $\{1, ..., d\}$ with $d/2$ elements each. We denote by $\mathcal{H}_1$ and $\mathcal{H}_2$ the Hilbert spaces spanned by the vectors in $S_1$ and $S_2$, respectively, and, for clarity, here we label the elements of the noisy qubit projective measurement $\m_2$ by $M_{x, 2}$. Using square brackets to represent a matrix operating on the direct sum space $\mathcal{H}_1 \oplus \mathcal{H}_2$, let $\mathcal{C}\left( .\right)$ be a map that `inflates' any $2 \times 2$ matrix $F$ by
\begin{equation}
    \mathcal{C} \left( F \right) = \begin{bmatrix}
        F_{11} \id & F_{12} \id\\
        F_{21} \id & F_{22} \id\\
    \end{bmatrix}\,, \quad \; \textnormal{where} \; F = \begin{pmatrix}
        F_{11} & F_{12}\\
        F_{21} & F_{22}\\
    \end{pmatrix}\,.
\end{equation}
Because of the symmetries inherent in $\m_d$ and in Alice's coarse-graining method, we can write the elements of $\hat{\m}$ simply as
\begin{equation}
    \hat{M}_{a} = 
    \mathcal{C} \left( M_{a, 2}\right)\,.
\end{equation}
From Corollary \ref{corr: qudit 2-outcome app}, we know that the guessing probability of $\hat{\m}$ is bounded by 
\begin{equation}\label{eqnA: cgqudit bound}
    \pgs \left( \hat{\m} \right) \leq 1- \frac{1}{2} \left( \sqrt{\frac{2-\ve}{2}} - \sqrt{\frac{\ve}{2}} \right)^2 = \frac{1}{2} \Big(1 + \sqrt{\ve \left( 2 - \ve\right)} \Big)= \pgs \left( \m_2 \right)\,.
\end{equation}
In the following, we prove by finding a suitable decomposition for Eve that 
\begin{equation}\label{eqnA: cg equal}
    \pgs \left( \hat{\m} \right) = \pgs \left( \m_2 \right)\,.
\end{equation}
We can represent Alice's chosen state $\ket{\phi}$ as the direct sum 
\begin{equation}\label{eqnA: cg phi}
 \ket{\phi} =  \begin{bmatrix}
 \alpha_{1} \ket{\phi_1}\\
  \alpha_2 \ket{\phi_2}   
 \end{bmatrix}\,, \qquad  \sum_{i} \alpha_{i}^2 =1\,, \qquad |\langle \phi_1 | \phi_1 \rangle|^2 = |\langle \phi_2 | \phi_2 \rangle|^2 =1\,.
\end{equation}
To simplify matters, we are free to assume that $\ket{\phi_1}=\ket{\phi_2}$, because if they were not equal, we could simply apply a unitary of the form $\id \oplus U$ to the state $\ket{\phi}$ such that 
\begin{equation}
    U \ket{\phi_2} = \ket{\phi_1}
\end{equation}
without changing the POVM $\hat{\m}$. By a similar argument, we can also assume that $\alpha_2$ and $\alpha_2$ are real. Eve can then use the `inflated' decomposition $\{\mathcal{K}_j\}$, with $\hat{\mathcal{K}}_j=\{ \hat{K}_{x, j} \}_{x}$ such that
\begin{equation}
     \hat{K}_{x, j}  = \mathcal{C} \left( K_{x, j, 2} \right)\,,
\end{equation}
where $\mathcal{K}_{j, 2}=\{K_{x, j, 2}\}_{x}$ are Eve's qubit POVMs \eqref{eqnA: qudit decomp} for the state $\ket{\alpha}= \left(\alpha_1, \, \alpha_2 \right)^{T}$, taking the components $\{\alpha_i\}$ from \eqref{eqnA: cg phi}. The elements $K_{j, j, 2}$ which are relevant for the guessing probability are 
\begin{equation}
    K_{1, 1, 2} = 
    \frac{1}{2}\begin{pmatrix}
     \alpha_1^2 \left( 2- \ve\right) & \alpha_1 \alpha_2 \sqrt{\ve \left(2 - \ve \right)} \vspace{0.1 cm}\\
     \alpha_1 \alpha_2 \sqrt{\ve \left( 2 - \ve \right)} & \alpha_2^2 \ve
    \end{pmatrix}\,, 
    \qquad 
    K_{2, 2, 2} = 
    \frac{1}{2}\begin{pmatrix}
     \alpha_1^2 \ve & \alpha_1 \alpha_2 \sqrt{\ve \left(2 - \ve \right)} \vspace{0.1 cm}\\
     \alpha_1 \alpha_2 \sqrt{\ve \left( 2 - \ve \right)} & \alpha_2^2 \left( 2 - \ve\right)
    \end{pmatrix}\,. 
\end{equation}
Eve's guessing probability is then bounded by 
\begin{equation}
\begin{aligned}
\pg \left( \ket{\phi}, \, \hat{\m} \right) &\geq \sum_{j} \langle \phi | \hat{K}_{j, j} | \phi \rangle = \sum_{j} 
\begin{bmatrix}
 \alpha_1 \bra{\phi_1}, & \alpha_2 \bra{\phi_1}   \end{bmatrix}
 \;
\mathcal{C}  \left( K_{j, j, 2} \right) 
\; \begin{bmatrix}
    \alpha_1 \ket{\phi_1}\\
    \alpha_2 \ket{\phi_1}
\end{bmatrix}
\\
&= \alpha_1^4 + \alpha_2^{4} + 2 \alpha_1^2 \alpha_2^2 \, \sqrt{\ve \left( 2- \ve\right)}= 1 - \alpha_1^2 \left(1 - \alpha_1^2 \right) \left( \sqrt{2- \ve} - \sqrt{\ve} \right)^2\,. \label{eqnA: alphas}
\end{aligned}
\end{equation}
The right-hand side of \eqref{eqnA: alphas} is minimized if and only if $\alpha_1^2= \alpha_2^2= \frac{1}{2}$, so we find  
\begin{equation}\label{eqnA: cg lower bound}
 \pg \left( \ket{\phi}, \, \hat{\m} \right)  \geq \frac{1}{2} \Big( 1 + \sqrt{\ve \left( 2- \ve\right)} \Big) = \pgs \left(\m_2 \right)\,,
\end{equation}
with strict inequality when $\alpha_1^2, \alpha_2^2 \neq 1/2$. Combining \eqref{eqnA: cg lower bound} with Corollary \ref{corr: qudit 2-outcome app}, we prove \eqref{eqnA: cg equal}.

Eve could, however, have taken the decomposition $\{\mathcal{K}_j\}$ for the original POVM $\m_d$, from \eqref{eqnA: qudit decomp}, and coarse-grained it just like Alice does to obtain $\hat{\m}$. Let's define Eve's coarse-grained decomposition as 
\begin{equation}
    \hat{K}_{a, b} := \sum_{x \in \mathcal{S}_a,\, j \in \mathcal{S}_b} K_{x, j}\,.
\end{equation}
Fixing the unbiased state $\ket{\psi}$, Eve's guessing probability with the POVMs $\hat{\mathcal{K}}_b=\{\hat{K}_{a, b}\}_a$ is then 
\begin{align}
    \pg \left(\ket{\psi}, \, \hat{\m}, \, \{\hat{\mathcal{K}}_{b} \}\right) &= \sum_{a} \langle{\psi | \hat{K}_{a, a} | \psi } \rangle = \sum_{a} \sum_{x \in S_{a},\, j \in S_a} \langle{\psi | K_{x, j} | \psi } \rangle 
    \\
    &= \sum_{a} \bigg( \frac{1}{d^3} \sum_{x \in S_a} \Big( \sqrt{A} + \left(d-1 \right) \sqrt{\ve} \Big)^2 + \frac{1}{d^3} \sum_{x \in S_a} \sum_{j \in S_a, \, j \neq x} \big( \sqrt{A} - \sqrt{\ve} \big)^2 \bigg) 
    \\
    &= \frac{1}{2d^2} \bigg( Ad + 2d \sqrt{A \ve} + \ve \Big( 2 \left( d-1 \right)^2 + d-2 \Big) \bigg) 
    \\
    &= \frac{1}{2} + \frac{\sqrt{\ve}}{2d} \Big( 2 \sqrt{A} + \sqrt{\ve} \left( d-2 \right) \Big)\,, 
\end{align} 
where in the last line we use \eqref{eqn: A_meaning}\,.
We aim to show that this guessing probability is strictly suboptimal, i.e.
\begin{equation}
  \pg \left(\ket{\psi}, \, \hat{\m}, \, \{\hat{\mathcal{K}}_b \} \right) =  \frac{1}{2} + \frac{\sqrt{\ve}}{2d} \Big( 2 \sqrt{A} + \sqrt{\ve} \left( d-2 \right) \Big) < \frac{1}{2} \Big(1 + \sqrt{\ve \left( 2 - \ve\right)} \Big) = \pgs \big( \hat{\m} \big)\,,
\end{equation}
or 
\begin{equation}\label{eqn: ineq_cg}
  \frac{1}{d} \Big( 2 \sqrt{A} + \sqrt{\ve} \left( d-2 \right) \Big)  < \sqrt{2-\ve}\,.
\end{equation}
Introducing the parameter
\begin{equation}
    t = \frac{2}{d}\,, \qquad 0 < t <1\,, 
\end{equation}
we see that the left-hand side of \eqref{eqn: ineq_cg} can be rewritten as a convex mixture
\begin{equation}
 \frac{1}{d} \Big( 2 \sqrt{A} + \sqrt{\ve} \left( d-2 \right) \Big)=    t \sqrt{A} + \left( 1 - t\right) \sqrt{\ve}  = t f \left( A \right) + \left( 1 - t\right) f \left( \ve \right)\,,
\end{equation}
where we define the square root function $f \left(x \right) = \sqrt{x}$. Note that $f \left( x \right)$ is strictly concave, so, when $0 < \ve < 1$, we have 
\begin{equation}
   t f \left( A \right) + \left( 1 - t\right) f \left( \ve \right) < f \big(tA + \left( 1-t\right) \ve \big) =  f \left( 2 - \ve \right) = \sqrt{2 - \ve}\,,
\end{equation}
with equality when $\ve=0$ or $1$. We conclude that, for the unbiased state $\ket{\psi}$, the coarse-graining $\{\hat{\mathcal{K}}_b\}$ of the decomposition $\{\mathcal{K}_j\}$, which was optimal for the original POVM $\m_d$, is strictly sub-optimal for the coarse-grained POVM $\hat{\m}$.

\section{Noisy state and measurement}\label{app: mixed noise}
\begin{figure}[h]
    \centering
\begin{tikzpicture}[line cap=round, line join=round, >=Triangle]
     \centering
    \begin{scope}[scale=1]
  \clip(-2.9,-2.6) rectangle (2.8,2.8);
  \draw(0,0) circle (2cm);
  \draw (-2,0) node[anchor=east] {\small{$\ket{{-}}$}};
  \draw (2,0) node[anchor=west] {\small{$\ket{{+}}$}};
  \draw (0,2) node[anchor=south] {\small{$ {|0\rangle}$}};
  \draw (0,-2) node[anchor=north] {\small{$|{1}\rangle$}};

  \draw[dashed] (0,0)-- (0, 1.7);
\draw [->] (0,0)-- (1.05357, 1.7);
  \draw [->] (0,0)-- (-1.05357, 1.7);
  \draw [dotted] (-1.05357, 1.7)-- (1.05357, 1.7);

  \draw [fill] (0, 1.7) circle (1.5pt);

\draw[dashed] (0,0)-- (1.7, 0);
   \draw[dashed] (0,0)-- (-1.7, 0);
 
\draw [->] (0,0)-- (1.7, 1.05357);
  \draw [->] (0,0)-- (1.7, -1.05357);
  \draw [->] (0,0)-- (-1.7, 1.05357);
  \draw [->] (0,0)-- (-1.7, -1.05357);
  \draw [dotted] (1.7, -1.05357)-- (1.7, 1.05357);
  \draw [dotted] (-1.7, -1.05357)-- (-1.7, 1.05357);

  \draw [fill] (1.7,0) circle (1.5pt);
  \draw [fill] (-1.7,0) circle (1.5pt);
\end{scope}
\end{tikzpicture}
\caption{A joint decomposition of a noisy state and measurement pair with noise parameter $\ve=0.15$.}
\label{fig: joint noise app}
\end{figure}
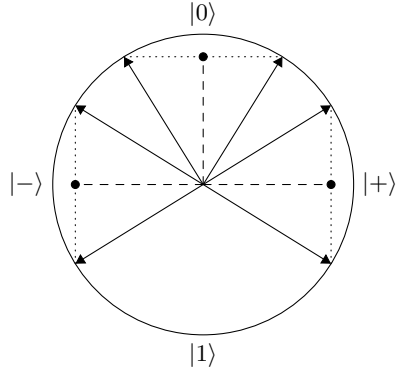
Here, we consider a simple classical model where Eve can decompose both Alice's state and her measurement at the same time. Take the noise parameter $\ve$, with $0 < \ve < 1$, and $\ket{\psi}= \frac{1}{\sqrt{2}}\left(1, 1 \right)^{T}$, and let Alice use the qubit POVM $\m_2$, which is diagonal in the basis $\{\ket{x}\}$, 
and the noisy qubit state $\rho_\psi$, where 
\begin{equation}
    \rho_{\phi} = \left(1 - \ve \right) \ketbra{\phi}{\phi} + \frac{\ve}{2} \id\,. 
\end{equation}
We allow Eve to choose a state decomposition $\{\ket{\varphi_{i,\, \lambda}}\}$ and a decomposition into POVMs $\mathcal{N}_{j, \lambda}=\{{N}_{x, j, \lambda} \}_x$ (note that both are normalized) according to some joint probability distribution $\{p(i, j, \lambda)\}$, where we have explicitly added an extra hidden variable $\lambda$ to the model considered in \cite{Senno_2023}. Eve's guessing probability is then 
\begin{equation}\label{eqnA: pguess mixed}
\begin{aligned}
\pg \big( \rho,\, \m \big)= \; & \;\;  \max_{ \{p(i,j, \lambda), \, \ket{\varphi_{i, \lambda}}, \, \mathcal{N}_j \} }
& & \sum_{i,j, \lambda}p(i,j, \lambda) \max_x \bra{\varphi_{i,\, \lambda}}N_{x,j, \lambda}\ket{\varphi_{i, \lambda}} 
\\
& \qquad \;\; \text{subject to}
& &  N_{x, j, \lambda}  \geq 0
\\
& 
& &  \sum_{x} N_{x, j, \lambda} = \id
\\
& 
& & \sum_{i,j, \lambda} p(i,j, \lambda)\ketbra{\varphi_{i, \lambda}}{\varphi_{i, \lambda}} = \rho \\
& 
& & \sum_{i,j} p(i,j) \, N_{x,j, S} = M_{x, S} \;  \text{ for all } \; x
\\
& 
& & \sum_{i,j}p(i,j, \lambda)\langle{\varphi_{i, \lambda}|N_{x,j, \lambda}|\varphi_{i, \lambda}} \rangle=\tr \big( M_{x} \, \rho \big)\,.
\end{aligned}
\end{equation}
Define 
\begin{equation}
    \ve^* := 1 - 1/\sqrt{2}\,.
\end{equation}
For $\ve \leq \ve^{*}$, let's say Eve uses the following decomposition, which is independent of $\lambda$, 
\begin{equation}\label{eqnA: decomp mixed noise}
     \begin{aligned}
      p(i, j) &=           \frac{1}{2} \delta_{i, j}\,,
    \\
    \ket{\varphi_i} &= \sqrt{2} \sqrt{\rho_\psi} \ket{i}\,,
    \\
    N_{x, j} &= 
    \begin{cases}
    2 \sqrt{M_j} \ketbra{\psi}{\psi} \sqrt{M_j}\,, & x =j\,,\\
    \id - 2 \sqrt{M_j} \ketbra{\psi}{\psi} \sqrt{M_j}\,, & x  \neq j\,.\\
    \end{cases}    
     \end{aligned}
\end{equation}
A schematic of such a joint decomposition is shown in Figure \ref{fig: joint noise app}.
We can then bound Eve's guessing probability by 
\begin{equation}
 \pg \big(\rho_\psi, \, \m_2 \big) \geq 2 \sum_{j} \bra{j} \,\sqrt{\rho_\psi} \,\sqrt{M_j} \,\ket{\psi}^2 = \frac{1}{2} \bigg( 1 - \ve + \sqrt{\ve (2- \ve)} \bigg)^2 = \frac{1}{2} \bigg( 1 + 2(1- \ve) \sqrt{\ve (2 - \ve)} \bigg)\,.
 \end{equation}
The right-hand side increases continuously in the range $\ve \leq \ve^{*}$, as we have
\begin{equation}
  \frac{1}{2}  \diff{}{\varepsilon}  \bigg( 1 + 2(1- \ve) \sqrt{\ve (2 - \ve)} \bigg) = \frac{1}{\sqrt{ \varepsilon (2 - \varepsilon ) } } \Big( 2 \ve^2 - 4\ve +1 \Big) \geq 0\,, 
\end{equation}
and we find that for $\ve=\ve^{*}$, Eve can achieve the perfect guessing probability of 1 with this attack, as her decomposition reduces to  
\begin{equation}\label{eqnA: decomp mixed ep}
     \begin{aligned}
      p(i, j) &=           \frac{1}{2} \delta_{i, j}\,,
    \\
    \ket{\varphi_i} &= \frac{1}{\sqrt{2}} \Big( \left( \sqrt{2-\ve^*}- \sqrt{\ve^*}  \right) \ket{\psi} + \sqrt{2 \ve} \ket{i} \Big)\,,
    \\
    N_{x, j} &= 
    \begin{cases}
    \ketbra{\varphi_j}{\varphi_j}\,, & x =j\,,\\
    \id - \ketbra{\varphi_j}{\varphi_j}\,, & x  \neq j\,.\\
    \end{cases}    
     \end{aligned}
\end{equation}
When $\ve \geq \ve^{*}$, Eve can continue to use the decomposition \eqref{eqnA: decomp mixed ep} for $\ve^{*}$, but to comply with the constraints in \eqref{eqnA: pguess mixed}, she should perform a convex combination of the decomposition \eqref{eqnA: decomp mixed ep} and the decomposition \eqref{eqnA: decomp mixed noise} for $\ve= \ve^{*}$, where $\ket{\psi}$ is replaced by the orthogonal state $\ket{\psi^{\perp}}= \frac{1}{\sqrt{2}}\left(1, -1 \right)^{T}$. Concretely, her decomposition is
\begin{equation}\label{eqnA: lambda decomp}
     \begin{aligned}
      p(i, j, \lambda) &=           \frac{1}{2} \delta_{i, j} \, p(\lambda)\,,
    \\
    \ket{\varphi_{i, \lambda}} &= 
    \begin{cases}
        \frac{1}{\sqrt{2}} \Big( \left( \sqrt{2-\ve^*}- \sqrt{\ve^*}  \right) \ket{\psi} + \sqrt{2 \ve} \ket{i} \Big)\,, & \lambda=1\,,
        \\
        \frac{1}{\sqrt{2}} \Big( \left( -1 \right)^{i+1}\left( \sqrt{2-\ve^*}- \sqrt{\ve^*}  \right) \ket{\psi^{\perp}} + \sqrt{2 \ve} \ket{i} \Big)\,, & \lambda=2\\
    \end{cases}
\\
    N_{x, j, \lambda} &= 
    \begin{cases}
    \ketbra{\varphi_{j, \lambda}}{\varphi_{j, \lambda}}\,, & x =j\,,\\
    \id - \ketbra{\varphi_{j, \lambda}}{\varphi_{j, \lambda}}\,, & x  \neq j\,.\\
    \end{cases}    
     \end{aligned}
\end{equation}
This `randomized' decomposition is shown in Figure \ref{fig: noisy_squares}.
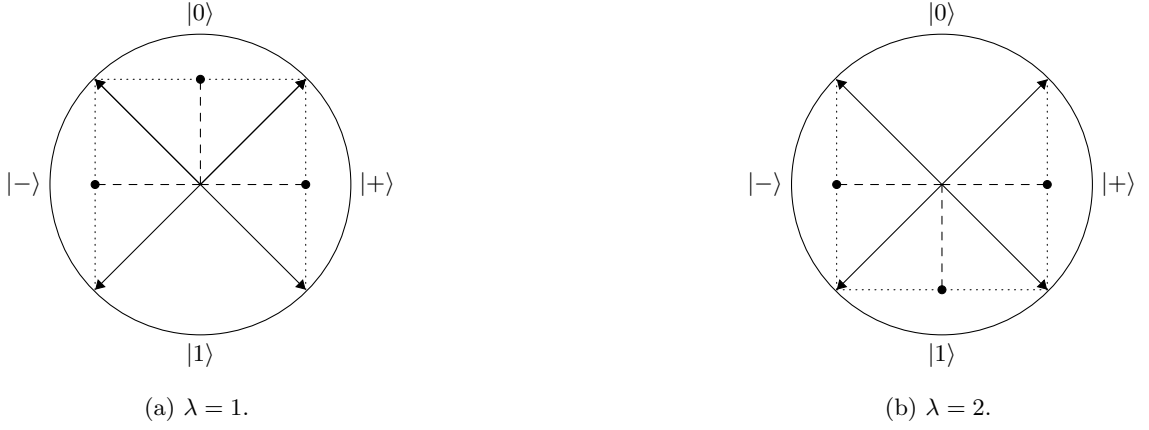
\begin{figure}
     \centering
     \begin{subfigure}[b]{0.45\textwidth}
         \centering
        \begin{tikzpicture}[line cap=round, line join=round, >=Triangle]
     \centering
    \begin{scope}[scale=1]
  \clip(-2.9,-2.6) rectangle (2.8,2.8);
  \draw(0,0) circle (2cm);
  \draw[dashed] (0,0)-- (1.4, 0);
   \draw[dashed] (0,0)-- (-1.4, 0);
\draw [->] (0,0)-- (1.4,1.4);
  \draw [->] (0,0)-- (1.4,-1.4);
  \draw [->] (0,0)-- (-1.4,1.4);
  \draw [->] (0,0)-- (-1.4,-1.4);
  \draw [dotted] (1.4,-1.4)-- (1.4,1.4);
  \draw [dotted] (-1.4,-1.4)-- (-1.4,1.4);
  
  \draw (-2,0) node[anchor=east] {\small{$\ket{{-}}$}};
  \draw (2,0) node[anchor=west] {\small{$\ket{{+}}$}};
  \draw (0,2) node[anchor=south] {\small{$ {|0\rangle}$}};
  \draw (0,-2) node[anchor=north] {\small{$|{1}\rangle$}};
  \draw [fill] (1.4,0) circle (1.5pt);
  \draw [fill] (-1.4,0) circle (1.5pt);

  \draw [fill] (0, 1.4) circle (1.5pt);
  \draw[dashed] (0,0)-- (0, 1.4);
\draw [->] (0,0)-- (1.4, 1.4);
  \draw [->] (0,0)-- (-1.4, 1.4);
  \draw [dotted] (-1.4, 1.4)-- (1.4, 1.4);

  \end{scope}
\end{tikzpicture}
        \caption{$\lambda=1$.}
        \label{fig: noisy_square_up}
        \end{subfigure}
        \hfill
        \begin{subfigure}[b]{0.45\textwidth}
            \centering
            \begin{tikzpicture}[line cap=round, line join=round, >=Triangle]
     \centering
    \begin{scope}[scale=1]
  \clip(-2.9,-2.6) rectangle (2.8,2.8);
  \draw(0,0) circle (2cm);
  \draw[dashed] (0,0)-- (1.4, 0);
   \draw[dashed] (0,0)-- (-1.4, 0);
\draw [->] (0,0)-- (1.4,1.4);
  \draw [->] (0,0)-- (1.4,-1.4);
  \draw [->] (0,0)-- (-1.4,1.4);
  \draw [->] (0,0)-- (-1.4,-1.4);
  \draw [dotted] (1.4,-1.4)-- (1.4,1.4);
  \draw [dotted] (-1.4,-1.4)-- (-1.4,1.4);
  
  \draw (-2,0) node[anchor=east] {\small{$\ket{{-}}$}};
  \draw (2,0) node[anchor=west] {\small{$\ket{{+}}$}};
  \draw (0,2) node[anchor=south] {\small{$ {|0\rangle}$}};
  \draw (0,-2) node[anchor=north] {\small{$|{1}\rangle$}};
  \draw [fill] (1.4,0) circle (1.5pt);
  \draw [fill] (-1.4,0) circle (1.5pt);

  \draw [fill] (0, -1.4) circle (1.5pt);
  \draw[dashed] (0,0)-- (0, -1.4);
  
  \draw [dotted] (1.4, -1.4)-- (-1.4, -1.4);
\end{scope}
\end{tikzpicture}
            \caption{$ \lambda=2$.}
            \label{fig: noisy_square_down}
        \end{subfigure}
    \caption{Eve's decompositions conditioned on the random variable $\lambda$ in \eqref{eqnA: lambda decomp}.}
    \label{fig: noisy_squares}
\end{figure}
If Eve chooses the distribution 
\begin{equation}
    p\left(\lambda_1 \right)= \frac{1}{2} \Big( \sqrt{2} \left(1 - \ve \right) +1 \Big)\,, \qquad p\left(\lambda_2\right)= 1 - p\left(\lambda_1 \right)\,,
\end{equation}
this decomposition complies with all the constraints in \eqref{eqnA: pguess mixed}, so we find 
\begin{equation}
    \pg \big(\rho_{\psi}, \, \m_2 \big) = 1
\end{equation}
for  all $\ve^{*} \leq \ve$. This contrasts with the `single noise' case of a noisy projective measurement with the optimal pure state, or a noisy pure state with the optimal projective measurement,
as in both of these cases, Eve can only achieve perfect guessing probability when $\ve=1$. 

With this classical model for Eve in mind, we consider how we might divide an amount of noise $\delta$ evenly between a state and a measurement device. Using the depolarizing channel for qubits,
\begin{equation}
    \Delta_{\ve} \left( \sigma \right) = \left( 1 - \ve \right) \sigma + \frac{\ve}{2} \id\,,
\end{equation}
for fixed $\delta$, we want to solve for $\ve$ such that the probabilities are preserved, i.e. 
\begin{equation}
 \tr \big( \Delta_{\ve} \left( \rho \right) \, \Delta_{\ve} \left( M_x \right)\big) =   \tr \big( \Delta_{\delta} \left( \rho \right) \, M_x \big) = \tr \big( \rho \,\Delta_{\delta} \left( M_x \right) \big) \,.
\end{equation}
The solution is 
\begin{equation}
    \ve = 1 - \sqrt{1- \delta}\,.
\end{equation}
We then compare the `single noise' case, where $\m_2$, with noise $\delta$, is applied to the perfect state $\ket{\psi}$, to the `joint noise' case, where $\m_2$, with noise $\ve$, is applied to the equally noisy state $\rho_\ve$. The noise parameter $\ve^{*}$ for which Eve reaches perfect guessing probability corresponds to $\delta=1/2$. Considering the joint decomposition described above, we can lower bound Eve's guessing probability as a function of $\delta$ by
\begin{equation}
\pg \big(\rho_\psi, \, \m_2 \big) \geq  \frac{1}{2} \bigg( 1 + 2 \sqrt{\delta \left( 1 - \delta\right)} \bigg)\,    
\end{equation}
for $0 \leq \delta \leq 1/2$. By contrast, the optimal guessing probability in the single noise case is given as a function of $\delta$ by
\begin{equation}
    \pgs \left( \rho_{\psi} \right) = \pgs \left( \m_2 \right) = \frac{1}{2} \bigg( 1 + \sqrt{\delta \left( 2- \delta \right)} \bigg)\,.
\end{equation}
Since 
\begin{equation}
\pg \big(\rho_\psi, \, \m_2 \big) -  \pgs \left( \m_2 \right) \geq \frac{1}{2} \sqrt{\delta} \, \Big( 2 \sqrt{1- \delta} - \sqrt{2 - \delta}  \Big)   \geq 0
\end{equation}
in the range $0 \leq \delta \leq 1/2$, we find in this qubit case study that Eve always achieves more when the noise is shared between the devices.

\end{widetext}

\end{document}